%%%%%%%%%%%%%%%%%%%%%%%%%%%%%%%%%%%%%%%%%%%%%%%%%%%%%%%%%%%%%%%
\documentclass[11pt]{amsart}
\usepackage{latexsym,amssymb,amsmath,amscd}
\textwidth=16cm
\topmargin=-5mm
\oddsidemargin=0mm
\evensidemargin=0mm
\textheight=22.5cm

\input epsf
\def\boxit#1{\vbox{\hrule height1pt\hbox{\vrule width1pt\kern3pt
  \vbox{\kern3pt#1\kern3pt}\kern3pt\vrule width1pt}\hrule height1pt}}

%warning - command \La comes with a superscript

\def\BC{\mathbb C}
\def\BQ{\mathbb Q}
\def\BP{\mathbb P}
\def\BN{\mathbb N}

\def\fgl{\mathfrak g\mathfrak l}

\def\la{\lambda}

\def\tdim{\rm dim}
\def\hd{,...,}

\def\be{\begin{equation}}
\def\ene{\end{equation}}
\def\aaa{{\bold a}}\def\bbb{{\bold b}}

\def\inv{{}^{-1}}

\def\cB{{\mathcal B}}

\def\cS{{\mathcal S}}

\def\cW{{\mathcal W}}
\def\cO{{\mathcal O}}
\def\CC{\mathbb C}

\def\11{\mathbf 1}
\def\PP{\mathbb P}
\def\BZ{\mathbb Z}

\def\FS{{\mathfrak S}}

\def\l{\lambda}
\def\a{\alpha}

\def\o{\omega}

\def\b{\beta}
\def\g{\gamma}
\def\s{\sigma}

\def\d{\delta}

\def\e{\varepsilon}
\def\ot{{\mathord{\,\otimes }\,}}
\def\op{{\mathord{\,\oplus }\,}}
\def\otc{{\mathord{\otimes\cdots\otimes}\;}}

\def\ra{{\mathord{\;\rightarrow\;}}}

\def\tdim{{\rm dim}\;}\def\tcodim{{\rm codim}\;}

\def\tdet{{\rm det}}
\def\tperm{{\rm perm}}
\def\La#1{\Lambda^{#1}}
\def\tmult{{\rm mult}}
\def\vp{\bold {VP}}\def\vnp{\bold {VNP}}

\def\bl{\hbox{\boldmath$\lambda$\unboldmath}}
 \def\bpi{\hbox{\boldmath$\pi$\unboldmath}}

\newtheorem{theo}{Theorem}[section]

\newtheorem{prop}[theo]{Proposition}
\newtheorem{conjecture}[theo]{Conjecture}

\newtheorem{theorem}{Theorem}[section]
\newtheorem{proposition}[theorem]{Proposition}
\newtheorem{lemma}[theorem]{Lemma}
\newtheorem{corollary}[theorem]{Corollary}

\theoremstyle{definition}
\newtheorem{definition}[theorem]{Definition}

\newtheorem{question}[theorem]{Question}
\newtheorem{example}[theorem]{Example}

\newtheorem{problem}[theorem]{Problem}

\theoremstyle{remark}
\newtheorem{remark}[theorem]{Remark}

\numberwithin{equation}{subsection}
 \numberwithin{theorem}{subsection}

\def\ol{\overline}

\def\wsL{L_{\mathrm{ws}}}
\def\Oh{\mathcal{O}}
\def\VPe{\vp_e}
\def\VPws{\vp_{\mathrm{ws}}}
\def\VP{\vp}
\def\VNP{\vnp}
%above and below changed to conform to rest of article
\def\N{\mathbb{N}}
\def\det{\mathrm{det}}
\def\per{\mathrm{per}}
\def\bL{\underline{L}}
\def\bwsL{\underline{L}_{\mathrm{\,ws}}}
\def\bVPws{\overline{\vp_{\,\mathrm{ws}}}}
\def\bVP{\overline{\vp}}

\def\GL{GL}
\def\Mat{\mathrm{Mat}}

\def\Hom{\mathrm{Hom}}
\def\peterchange{}
%--------------------------------------------------------------------
%--------------------------------------------------------------------

\begin{document}
\title[GCT approach to $\vp\neq\vnp$]{An overview of mathematical issues arising in  the Geometric complexity theory approach to $\vp\neq \vnp$}
\author{Peter B\"urgisser,  J.M.~Landsberg, Laurent Manivel and Jerzy Weyman}
%\date{\today}
\begin{abstract}We discuss the geometry of orbit closures and  the asymptotic behavior of Kronecker coefficients
in the context of the Geometric Complexity Theory program to prove
a variant of Valiant's  algebraic analog of  the $P\neq NP$ conjecture.
  We also describe  the precise separation of   complexity classes   that their   program proposes to demonstrate.
\end{abstract}
 \thanks{B\"urgisser supported by DFG-grants BU 1371/2-1  and BU 1371/3-1. Landsberg, Weyman respectively supported by NSF grants DMS-0805782 and DMS-0901185}
\email{pbuerg@upb.de,jml@math.tamu.edu,Laurent.Manivel@ujf-grenoble.fr,j.weyman@neu.edu}
\maketitle

\section{Introduction}
In a series of   papers \cite{MS1,MS2,MS3,MS4,MS5,MS6,MS7,MS8},
K.~Mulmuley and M.~Sohoni outline an approach to the $P$ v.s.\ $NP$ problem, that they call the
{\it Geometric Complexity Theory (GCT)} program. The starting point is
Valiant's conjecture %$\vp\neq\vnp$
\cite{vali:79-3} (see also~\cite{MR922386,buer:00-3})
that the permanent hypersurface in $m^2$ variables
(i.e., the set of $m\times m$ matrices $X$ with $\tperm_m(X)=0$) cannot be realized as
an affine linear section
of the determinant hypersurface in $n(m)^2$ variables with $n(m)$ a polynomial function of $m$.
Their program (at least up to \cite{MS2}) translates the problem
of proving Valiant's conjecture to proving a conjecture in
representation theory. In this paper we give an exposition of the program outlined in \cite{MS1,MS2},
present the representation-theoretic conjecture in
detail,  and present a framework for reducing their
representation theory  questions
to easier questions   by taking more geometric information into account. We also
precisely identify the  complexity problem the GCT approach proposes to solve and how it compares to Valiant's
original conjecture, and discuss related issues
in geometry that arise from their program.
The goal of this paper  is to clarify the state of the art, and
identify steps that would further advance the program using recent
advances in geometry and representation theory.

%\medskip
The GCT program translates the study of the hypersurfaces
$$\{\tperm_m=0\}\subset \BC^{m^2}\ \ {\rm and}\ \ \{\tdet_n=0\}\subset\BC^{n^2},
$$
to a study of the orbit closures
$$
\ol{GL_{n^2} \cdot [\ell^{n-m}\tperm_m]}\subset \BP (S^n\BC^{n^2}) \ \ {\rm  and}\ \
\ol{GL_{n^2} \cdot [\tdet_n}]\subset \BP (S^n\BC^{n^2}),
$$
where $S^n\BC^{n^2}$ denotes the space
of homogeneous polynomials of degree $n$ in $n^2$ variables. Here $\ell$ is a linear coordinate
on $\BC$, and one takes any linear inclusion $\BC\op \BC^{m^2}\subset \BC^{n^2}$ to have
$\ell^{n-m}\tperm_m$ be
a homogeneous degree $n$ polynomial  on  $\BC^{n^2}$. Mulmuley and Sohoni
observe that a variant of Valiant's hypothesis would be proved if one could show:

\begin{conjecture}\cite{MS1}\label{msmainconj}
There does not exist a constant $c\ge 1$
such that for sufficiently large $m$,
$$\ol{GL_{m^{2c}} \cdot  [\ell^{m^c-m}\tperm_m]}\subset \ol{GL_{m^{2c}} \cdot [\tdet_{m^c}]}.$$
\end{conjecture}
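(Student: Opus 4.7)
The plan is to disprove the inclusion by exhibiting, for each constant $c\geq 1$ and infinitely many $m$, an irreducible $GL_{m^{2c}}$-module appearing in the homogeneous coordinate ring $\CC[\ol{GL_{m^{2c}} \cdot [\ell^{m^c-m}\tperm_m]}]$ but failing to appear in $\CC[\ol{GL_{m^{2c}} \cdot [\tdet_{m^c}]}]$. This is the standard representation-theoretic reformulation of Mulmuley--Sohoni: any closed $GL$-equivariant inclusion $X\subset Y$ yields a $GL$-equivariant surjection $\CC[Y]\twoheadrightarrow\CC[X]$, so every irreducible in $\CC[X]$ must already occur in $\CC[Y]$. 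Producing even one such \emph{representation-theoretic obstruction} for every $c$ would rule out the conjectural containment.

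First I would decompose both coordinate rings into $GL_{n^2}$-isotypic components, writing $n=m^c$. For a nonzero $v\in S^n\CC^{n^2}$ with reductive stabilizer $H_v\subset GL_{n^2}$, a Peter--Weyl argument bounds the multiplicity of the irreducible $V_\lambda$ in $\CC[\ol{GL_{n^2}\cdot [v]}]$ by $\dim (V_\lambda^*)^{H_v}$. The stabilizer of $\tdet_n$ is very large: up to finite extension it is $\{(A,B)\in GL_n\times GL_n\,:\,\det A\det B=1\}$ modulo its scalar diagonal, enlarged by transposition. The stabilizer of $\ell^{n-m}\tperm_m$ is much smaller, consisting of $GL$ of the complementary coordinates together with the permanent symmetry group $(T_m\rtimes\FS_m)^2\rtimes\ZZ_2$ on the permanent block. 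One therefore seeks $\lambda$ for which the large determinant stabilizer kills every invariant vector while the padded-permanent stabilizer leaves a nonzero one, \emph{and} for which the candidate invariant actually lifts to a nonvanishing regular function on the orbit closure.

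Via Schur--Weyl duality, the invariant-dimensions on the determinant side translate into sums of Kronecker coefficients $g(\lambda,\mu,\mu)$ over partitions $\mu\vdash d$, where $|\lambda|=nd$. The program thus reduces to producing an explicit family of partitions $\lambda_m$ satisfying (i) all relevant $g(\lambda_m,\mu,\mu)$ vanish, forcing non-occurrence on the determinant side; (ii) the analogous invariant on the padded-permanent side is nonzero; and (iii) this nonzero invariant extends to the full orbit closure rather than vanishing on the boundary components.

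The main obstacle is step (i). Kronecker coefficients are notoriously intractable: no positive combinatorial rule is known, and deciding positivity is $NP$-hard in general. One must restrict attention to special partition families---hooks, two-row shapes, or near-rectangular shapes---where $g(\lambda,\mu,\mu)$ can be computed or bounded effectively, and hope that a vanishing pattern matching the determinant symmetry emerges. Compounding the difficulty, subsequent work of B\"urgisser--Ikenmeyer--Panova suggests that pure \emph{occurrence} obstructions may be insufficient, forcing one to compare multiplicities quantitatively on the two sides; this will likely require genuinely geometric input (singular loci, tangent cones, or normality of the two orbit closures) beyond the purely representation-theoretic setup sketched here.
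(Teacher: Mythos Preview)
This statement is Conjecture~1.1, an \emph{open conjecture} that the paper does not prove; there is no ``paper's own proof'' to compare against. What the paper does is precisely what you have sketched: it reformulates the conjecture as the existence of representation-theoretic obstructions (Conjecture~1.2), computes the stabilizers of $\tdet_n$ and $\ell^{n-m}\tperm_m$ (\S5.2, \S5.6), and reduces the problem to explicit conditions on Kronecker coefficients (Theorem~5.7.1). Your outline is a faithful summary of the GCT program as the paper presents it, including the acknowledged obstacles; it is a strategy, not a proof, and the paper offers no more.

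Two corrections to your description. First, on the determinant side the multiplicity of $S_\pi W^*$ in degree~$\delta$ is bounded by the \emph{single} Kronecker coefficient $k_{\delta^n,\delta^n,\pi}$ with rectangular arguments, not a sum of $g(\lambda,\mu,\mu)$ over all $\mu$ (Proposition~5.2.1); the $SL_n\times SL_n$-invariance forces $\mu=\nu=(\delta^n)$. Second, the paper is more pessimistic than you indicate: it states explicitly (\S5.7) that ``because the relevant Kronecker coefficients are rarely zero, we see no way to find such~$\lambda$ even when $n=m$''---so the approach as formulated cannot currently separate even the trivial case $c=1$ by occurrence obstructions alone. Your reference to B\"urgisser--Ikenmeyer--Panova is external to this paper (and postdates it), but confirms that assessment.

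In short: your proposal is not a proof, the paper contains no proof, and both describe the same unresolved program.
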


It is known that
$\ol{GL_{n^2} \cdot [\ell^{n-m}\tperm_m]} \subset \ol{GL_{n^2} \cdot [\tdet_n]}$
for $n =\cO(m^2 2^m)$,
see Remark~\ref{re:ryser}.
%Moreover, it is easy to see that such an inclusion for~$n,m$
%implies the corresponding inclusions for $n',m$, for all $n'\ge n$.

For a closed subvariety $X$ of $\BP  V$, let $\hat X\subset V$ denote the cone over $X$.
Let $I(\hat X)\subset Sym(V^*)$ be the ideal of polynomials vanishing on $\hat X$,
and let $\BC[X]=Sym(V^*)/I(\hat X)$ denote the homogeneous coordinate ring.
For two closed subvarieties $X,Y$ of $\BP V$, one has $X\subset Y$ iff $\BC[Y]$ surjects onto $\BC[X]$
by restriction of polynomial functions.

The GCT program sets out to prove:

\begin{conjecture}\cite{MS1}\label{msmainconjb}
For all $c \ge 1$ and for infinitely many $m$ there exists
an irreducible $GL_{m^{2c}}$-module appearing in
$\BC[\ol{GL_{m^{2c}} \cdot   [\ell^{m^c-m}\tperm_m}]]$,
but not appearing in $\BC[\ol{GL_{m^{2c}} \cdot  [\tdet_{m^c}}]]$.
\end{conjecture}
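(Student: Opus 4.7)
The plan is to reduce the non-containment of coordinate rings to an assertion about multiplicities of irreducible $GL_{m^{2c}}$-modules, and then to exploit the large stabilizers of $\tdet_n$ and $\ell^{n-m}\tperm_m$ (with $n=m^c$) to bound these multiplicities on both sides. By algebraic Peter--Weyl, if $G=GL_{n^2}$ and $v\in S^nV$ has stabilizer $H_v\subset G$, the multiplicity of the irreducible $G$-module $V_\lambda$ in $\BC[G\cdot [v]]$ equals $\dim (V_\lambda^*)^{H_v}$. Since $G\cdot[v]$ is open and dense in its closure, restriction of functions gives an injection $\BC[\ol{G\cdot[v]}]\hookrightarrow \BC[G\cdot[v]]$, so the orbit-closure multiplicity is bounded above by the orbit multiplicity. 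Applied to $v=\tdet_n$, whose stabilizer is (up to the involution $X\mapsto X^T$ and scalars) the subgroup of $GL_n\times GL_n$ cut out by $\det(A)\det(B)=1$, this yields a concrete upper bound on the multiplicities that can appear on the determinant side.

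For $v_{m,n}=\ell^{n-m}\tperm_m$ the stabilizer is larger still: alongside the diagonal torus and the $S_m\times S_m$ permutation symmetries of $\tperm_m$, it contains a $GL_{n^2-m^2-1}$ acting trivially on the complement of the $m^2+1$ coordinates actually used. In particular only $\lambda$ with at most $m^2+1$ parts can contribute to $\BC[G\cdot v_{m,n}]$, and for such $\lambda$ the invariants admit a combinatorial description. The plan is therefore to exhibit, for each $c\ge 1$ and infinitely many $m$, a partition $\lambda$ of some $dn$, with at most $m^2+1$ parts, such that $V_\lambda^*$ carries a nonzero $H_{v_{m,n}}$-invariant vector but no $H_{\tdet_n}$-invariant vector. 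Such a $\lambda$ gives an irreducible $G$-module occurring in $\BC[G\cdot v_{m,n}]$ but absent from $\BC[G\cdot \tdet_n]$, hence absent from $\BC[\ol{G\cdot\tdet_n}]$. To promote occurrence on the permanent side from the orbit to its closure, a further geometric step is required: for instance, exhibiting an explicit highest weight vector in $I(\ol{G\cdot\tdet_n})_\lambda$ that does not vanish at $v_{m,n}$, or else establishing normality of $\ol{G\cdot v_{m,n}}$, which is itself a nontrivial question.

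Both invariance conditions translate, through Schur--Weyl duality and the plethysm decomposition of $S^d(S^nV)$, into statements about Kronecker and plethysm coefficients. The main obstacle is therefore the construction of the partitions themselves, what might be called an occurrence obstruction: one must prove that a specific plethysm multiplicity vanishes on the determinant side while its permanent-side counterpart is positive, for each $c$ and infinitely often in $m$. The asymptotic behaviour of Kronecker coefficients in the regime forced by $n=m^c$ is poorly understood, the stabilizer-based upper bound on the determinant side may well be too weak by itself, and positivity on the permanent side is only necessary for occurrence in the closure. Substantive progress will likely require new combinatorial tools (stretched Kronecker and plethysm coefficients, semigroup properties of their supports) combined with geometric input on the boundary components of $\ol{G\cdot\tdet_n}$ capable of sharpening the determinant-side bound beyond what the stabilizer alone provides.
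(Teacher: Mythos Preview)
The statement is a \emph{conjecture}, not a theorem, and the paper does not prove it. Indeed the authors say explicitly that from the information they have they ``are currently unable to see how one could prove Conjecture~\ref{msmainconjb} in the case $c=1$ (which is straight-forward by other means), let alone for all~$c$.'' So there is no ``paper's own proof'' to compare against.

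Your proposal is likewise not a proof but a research plan, and you acknowledge as much in your final paragraph. As a plan it tracks the paper's own strategy rather closely: the use of the algebraic Peter--Weyl theorem to compute $\BC[G/H]$ in terms of $H$-invariants (the paper's equation for $\BC[G/H]$), the identification of the stabilizers of $\tdet_n$ and $\ell^{n-m}\tperm_m$ (the paper's Propositions on $GL(W)(\tdet_n)$ and $GL(W)(\ell^{n-m}\tperm_m)$), the passage from orbit to orbit closure via restriction, and the reduction of the invariance conditions to Kronecker and plethysm coefficients (the paper's Theorem~\ref{precisekroncond}) are all carried out in the paper in essentially the way you sketch. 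The obstacles you name---the extension problem for passing from the orbit to its closure on the permanent side, and the difficulty of certifying vanishing of the relevant Kronecker coefficients $k_{\delta^n,\delta^n,\pi}$ on the determinant side---are exactly the obstacles the paper identifies and leaves open.

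One small correction: you write that ``only $\lambda$ with at most $m^2+1$ parts can contribute to $\BC[G\cdot v_{m,n}]$''. This is the correct bound on $\ell(\pi)$ coming from the subspace variety $Sub_{m^2+1}(S^nW)$, and the paper derives it the same way; but note that your suggestion to ``establish normality of $\ol{G\cdot v_{m,n}}$'' to promote occurrence from the orbit to the closure is unlikely to succeed, since the paper observes (in its discussion of the variety $F_s(S^dW)$ of polynomials divisible by an $s$-th power of a linear form) that the relevant auxiliary varieties are non-normal, and there is no reason to expect $\ol{GL(W)\cdot\ell^{n-m}\tperm_m}$ to fare better. The paper instead handles the passage to the closure via partial stability (Theorem~\ref{speccasefirstth}), which guarantees that \emph{some} lift of each $SL(A)$-module in $\BC[SL(A)\cdot\tperm_m]$ occurs in $\BC[\ol{GL(W)\cdot\ell^{n-m}\tperm_m}]$, at the cost of losing control over the degree in which it occurs.
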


Both varieties occuring in Conjecture~\ref{msmainconjb}
are invariant under $GL_{m^{2c}}$, so their coordinate
rings are $GL_{m^{2c}}$-modules.
Conjecture~\ref{msmainconj} is a straightforward
consequence of Conjecture \ref{msmainconjb} by Schur's lemma.

A program to prove Conjecture \ref{msmainconjb} is outlined in~\cite{MS2},
which also contains a discussion why the desired
irreducible modules (called {\em representation theoretic obstructions})
should exist.
This is closely related to a separability question \cite[Conjecture 12.4]{MS2}
that we will not address in this paper.

There are several paths one could take to try to find such a sequence of modules. The path chosen
in \cite{MS2} is to consider $SL_{n^2}\cdot \tdet_n$ and $SL_{m^2}\cdot \tperm_m$ because on one hand,
their coordinate rings can be determined in principle using representation theory, and on the other hand,
they are closed affine varieties. Mulmuley and Sohoni
observe that any irreducible $SL_{n^2}$-module appearing in $\BC[SL_{n^2}\cdot \tdet_n]$ must
also appear in the degree~$\d$ part of the
graded $SL_{n^2}$-module  $\BC[\ol{GL_{n^2}\cdot [\tdet_n}]]_\delta$ for some~$\delta$.
Regarding the permanent, for $n>m$,  $SL_{n^2}\cdot\ell^{n-m}\tperm_m$ is not closed, so they develop
machinery to transport information about $\BC[SL_{m^2}\cdot \tperm_m]$ to $\BC[\ol{GL_{n^2}\cdot
[\ell^{n-m}\tperm_m}]]$,
in particular they introduce a notion of {\it partial stability}.

We make a close study of how one
might %jml uniformize [] projective vs affine this paragraph
exploit partial stability to determine
the $GL_{n^2}$-module decomposition of $\BC[\ol{GL_{n^2}\cdot [\ell^{n-m}\tperm_m}]]$ in \S\ref{examplesect}.
We also discuss a more elementary approach to studying which modules in
$\BC[ GL_{n^2}\cdot [\ell^{n-m}\tperm_m]]$ could appear in
the degree~$\d$ part of
$\BC[\ol{GL_{n^2}\cdot [\ell^{n-m}\tperm_m}]]$. %for each $\delta$.
One could get more information from the elementary approach if one
could solve the {\it extension problem} of determining which functions on the orbit
${GL_{n^2}\cdot [\ell^{n-m}\tperm_m]}$ extend to the orbit closure $\ol{GL_{n^2}\cdot [\ell^{n-m}\tperm_m]}$.
In general the extension problem is very difficult, we discuss it in \S\ref{orbitclosuresect}.

We express the restrictions on modules appearing in $\BC[\ol{GL_{n^2}\cdot [\ell^{n-m}\tperm_m}]]$
that we do have, as well as our information regarding $\BC[\ol{GL_{n^2}\cdot  [\tdet_n}]]$,
in terms of {\it Kronecker coefficients} \peterchange
and {\it symmetric Kronecker coefficients} that we introduce in \S\ref{se:1stmainex}.
Kronecker coefficients are defined as the multiplicities occurring 
in tensor products of representations of symmetric groups.
We review all relevant information regarding these
coefficients that we are aware of in \S\ref{kronsect}. Unfortunately, from this
information, we are currently unable to see how one could prove Conjecture \ref{msmainconjb} in
the case $c=1$ (which is straight-forward by other means), let alone for all~$c$.
%the conjectured $n(m)$ a polynomial function of $m$.
Nevertheless, we have found the GCT program
a beautiful source of inspiration for future work.

%\medskip

This program is beginning to gain the attention of the mathematical community, for
example the recent preprints   \cite{Popovpreprint}, where an algorithm
is given for determining if one orbit is in the closure of another, and
\cite{BORpreprint}, where a conjecture of Mulmuley regarding Kronecker
coefficients is disproven and, in an appendix by Mulmuley, a  modified conjecture is proposed.
Since the original submission of this paper in July 2009, there have
been several developments \cite{LMRdet, buci:09,buci:10,kumar:10,buic:10}
whose relevance we note where appropriate in the body of the paper.

\subsection*{Acknowledgments} It is a pleasure to thank Shrawan Kumar for
very useful discussions. This paper is an outgrowth of the AIM workshop
{\it Geometry and representation theory of tensors for computer science, statistics and other areas}
July 21-25, 2008, and authors gratefully thank AIM and the other participants of the workshop.
We also thank the anonymous referees for their useful suggestions.

\section{Overview}

We begin,  in \S\ref{npsect},  by establishing notation and reviewing basic facts from representation theory
that we use throughout.
In \S\ref{orbitcoordringsect} we discuss coordinate rings of orbits and orbit closures
and in  \S\ref{examplesect} we make a detailed study of the cases at hand.
In \S\ref{pstabvalueb} we state the theorems in \cite{MS2}
and also give an overview of their proofs.
The consequences of partial stability can be viewed
from the perspective of the {\it collapsing method} for computing coordinate rings (and syzygies),
which we discuss in~\S\ref{basicthmsubsect}.

While \cite{MS2} is  primarily concerned with  $SL_{n^2}\cdot\tdet_n$ and a corresponding
closed orbit related to the permanent, we also study the coordinate rings of the orbits of
the general linear group $GL_{n^2}$.
The $GL_{n^2}$-orbits have the disadvantage of not being closed in general,
so one must deal with the {\em extension problem},
which we discuss in \S\ref{orbitclosuresect},
but they have the advantage of having a graded coordinate ring.
%
%One goal of \cite{MS2} is to reduce
%the conjecture $\vp\neq\vnp$
%(more specifically, $\ol{\VPws}\neq \vnp$ -- see \S\ref{complexityclasssect})
%Valiant's conjecture \jmchange
%to questions in representation theory. \jmchange
%Recent work~\cite{kumar:10,buic:10} indicates that this may be impossible without
%properly addressing the extension problem, an aspect which is missing in~\cite{MS2}.

In the studies of   the coordinate rings of permanent and determinant
%$\BC[\ol{GL_{n^2} \cdot   [\ell^{n-m}\tperm_m]}]$ and $\BC[\ol{GL_{n^2} \cdot   [\tdet_n]}]$,
%and the resulting conjecture in representation theory mentioned above,
{\it Kronecker coefficients} play a central role.
We discuss what is known about the relevant Kronecker coefficients  in \S\ref{kronsect}.
In~\S\ref{complexityclasssect},
we give a brief outline of the relevant algebraic complexity theory involved here.
We explain Valiant's conjecture $\vp\neq \vnp$, how this precisely relates
to the conjecture regarding projecting the determinant to the permanent,
and we formulate Conjecture~\ref{msmainconj}
as the separation of complexity classes $\ol{\VPws}\neq \vnp$.

%The orbit $SL_{n^2}\cdot \ell^{n-m}\tperm_m$ is not closed, but   the  smaller group
%$SL_{m^2}\subset SL_{n^2}$ has the property that $SL_{m^2}\cdot \ell^{n-m}\tperm_m$ is closed,
%and the modules appearing in $\ol{GL_{n^2}\cdot   \ell^{n-m}\tperm_m}$ (ignoring multiplicities) are
%\lq\lq inherited\rq\rq\  from modules in the coordinate ring of $SL_{m^2}\cdot \ell^{n-m}\tperm_m $,
%in a manner described precisely in~\S\ref{MSgobs}, where we discuss the notion
%of {\it partial stability}.
%
%One goal of \cite{MS2} is to reduce the conjecture $\vp\neq\vnp$
%(more specifically, $\ol{\VPws}\neq \vnp$ -- see \S\ref{complexityclasssect})
%to a conjecture in representation theory.
%We explicitly state such conjecture in representation theory that follows
%from the work in \cite{MS2}  in Theorem \ref{precisekroncond}.
%We then, in \S\ref{pstabvalueb}, state the theorems in \cite{MS2} that,
%together with our computations in~\S\ref{examplesect}
%(which build on calculations in \cite{MS2}),
%imply Theorem \ref{precisekroncond}.
%We also give an overview of their proofs.

\section{Notation and Preliminaries}\label{npsect}

Throughout we work over the complex numbers $\BC$.
Let $V$ be a complex vector space, let $GL(V)$ denote the general linear group of $V$,
let $v \in V$ and let $G\subseteq GL(V)$ be a subgroup.
We let $G\cdot v\subset V$ denote the orbit of $v$, $\ol{G\cdot v}\subset V$ its
Zariski closure, and $G(v)\subset G$ the stabilizer of~$v$,
so $G\cdot v\simeq G/G(v)$.
Write  $\BC[G\cdot v]$ (respectively
$\BC[\ol{G\cdot v}]$) for the ring  of regular functions on $G\cdot v$ (resp. $\ol{G\cdot v}$).
By restriction, there is  a surjective map $Sym(V^*)\rightarrow \BC[\ol{G\cdot v}]$.

It will be convenient to switch  back and forth between vector spaces and projective spaces.
$\BP V$ denotes the space of lines through the origin in $V$. If $v\in V$ is nonzero,
let $[v]\in \BP V$ denote the corresponding point in projective space, and if
$x\in \BP V$,   let $\hat x\subset V$ denote the corresponding line.
A linear action of $G$ on $V$ induces an action of $G$ on $\BP V$,   let $G([v])$
denote the stabilizer of $[v]\in \BP V$. If $Z\subset \BP V$ is a subset,   let
$\hat Z\subset V$ denote the corresponding cone in $V$.

We will be concerned with the space of homogeneous polynomials
of degree $n$ in $n^2$ variables, $V=S^n(\Mat_{n\times n}^*)=S^nW$.
Here $\Mat_{n\times n}$ denotes the space of $n\times n$-matrices,
$S^nW$ the space of homogeneous polynomials of degree $n$ on $W^*$,
and $G=GL(W)$.
Our main points of interest will be $x=[\tdet_n]$ and $x=[\ell^{n-m}\tperm_m]$,
where $\tdet_n\in S^n(\Mat_{n\times n}^*)$
is the determinant of an $n\times n$ matrix,   $\tperm_m\in S^m(\Mat_{m\times m}^*)$ is
the permanent, we have made a linear inclusion $\Mat_{m\times m}\subset \Mat_{n\times n}$,
and $\ell$ is a linear form on $\Mat_{n\times n}$ annihilating the image of $\Mat_{m\times m}$.

%\smallskip

For a reductive group $G$, the set of dominant integral weights $\Lambda^+_G$ indexes the irreducible (finite dimensional) $G$-modules
(see, e.g., \cite{FH,MR1920389}), and for
$\l\in\Lambda_G^+$,    $V_{\l}(G)$ denotes the irreducible $G$-module with
highest weight $\l$, and if $G$ is understood, we just write $V_{\l}$. If $H\subset G$
is a subgroup, and $V$ a $G$-module,
let $V^H:=\{ v\in V\mid \forall h\in H\ h\cdot v=v\}$ denote
the space of $H$-invariant vectors.
For a $G$-module $V$,   let  $\tmult(V_\lambda (G), V)$
denote the multiplicity of the irreducible representation $V_\lambda (G)$ in $V$.

%\medskip
The weight lattice $\Lambda_{GL_M}$ of $GL_M$ is $\BZ^M$ and the dominant
integral weights $\Lambda_{GL_M}^+$  can be identified with
the $M$-tuples $(\pi_1\hd \pi_M)$ with
$\pi_1\geq \pi_2\geq\cdots \geq \pi_M$. For
future reference, we note
\begin{equation}\label{gldual}
V_{(\pi_1\hd \pi_M)}(GL_M)^*=V_{(-\pi_M\hd -\pi_1)}(GL_M).
\end{equation}
The polynomial irreducible representations of $GL_M$ are the Schur modules
$S_{\pi}\BC^M$, indexed
by partitions $\pi=(\pi_1\hd \pi_M)$ with $\pi_1\geq \pi_2\geq\cdots \geq \pi_M\ge 0$.
To get all the rational
irreducible  representations we need to twist by negative powers of the
determinant. This
introduces some redundancies since $S_{\pi}\BC^M\otimes (\det\BC^M)^{\ot
k}=S_{\pi+(k,\ldots ,k)}\BC^M$.
To avoid them, we consider the modules $S_{\pi}\BC^M\otimes (\det\BC^M)^{\ot
k}$ with
$k\in \BZ$ and $\pi=(\pi_1\hd \pi_{M-1},0)$. Moreover we write our partitions as
$\pi=(\pi_1\hd \pi_N)$
with the convention that $\pi_1\geq \cdots \geq \pi_N>0$, and we let
$|\pi|=\pi_1+\cdots + \pi_N$ and $\ell(\pi)=N$.   
We also write $\pi\vdash_m d$ to express that $\pi$ is a partition 
of size $|\pi|=d$ and such that $\ell(\pi) \le m$.\peterchange
The notation
$\pi\mapsto\pi'$ means that
$\pi_1\ge\pi'_1\ge \pi_2\ge\pi'_2\ge\cdots \ge 0.$

The irreducible $SL_M$-modules are obtained by restricting the irreducible
$GL_M$-modules, but
beware that this is insensitive to a twist by the determinant. The weight
lattice of
$\Lambda_{SL_M}$ of $SL_M$ is $\BZ^{M-1}$ and the dominant integral weights
$\Lambda_{SL_M}^+$
are the non-negative linear combinations of the fundamental weights  $\o_1,\ldots,
\o_{M-1}$.
A Schur module $S_{\pi}\BC^M$
considered as an $SL_M$-module has highest weight
$$\l=\bl(\pi) =(\pi_1-\pi_2)\o_1+(\pi_2-\pi_3)\o_2+\cdots + (\pi_{M-1}-\pi_M)\o_{M-1}.$$
We write $S_{\pi}\BC^M=V_{\bl(\pi)}(SL_M)$ or simply $V_{\bl(\pi)}$ if $SL_M$
is clear from the context.

Let $\bpi(\l)$   denote  the smallest partition such that the $GL_M$-module
$S_{\bpi(\l)}\BC^M$,
considered as an $SL_M$-module, is $V_{\l}$.
That is,  $\bpi$ is a  map from $\Lambda^+_{SL_M}$ to $\Lambda^+_{GL_M}$,
mapping $\l=\sum_{j=1}^{M-1}\l_j\o_j$ to
$$\bpi(\l)=(\sum_{j=1}^{M-1}\l_j,
\sum_{j=2}^{M-1}\l_j\hd \l_{M-1}).$$

\section{Stabilizers and coordinate rings of orbits}\label{orbitcoordringsect}

As mentioned in the introduction, \cite{MS2} proposes to study the rings of
regular functions on
$ \ol{GL_{n^2}\cdot\tdet_n} $ and $ \ol{GL_{n^2}\cdot \ell^{n-m}\tperm_m}$
by first studying the regular functions on the
closed orbits $SL_{n^2}\cdot \tdet_n$ and $SL_{m^2}\cdot \ell^{n-m}\tperm_m$.
In this section we review
facts about the coordinate ring of a homogeneous space and stability of orbits, record
observations in \cite{MS2} comparing closed $SL(W)$-orbits and $GL(W)$-orbit closures,
state their definition of partial stability and record Theorem \ref{speccasefirstth}
which illustrates a potential utility of partial stability.

Throughout this section, unless otherwise specified, $G$ will denote
a reductive group and $V$ a $G$-module.

\subsection{Coordinate rings of homogeneous spaces}

The coordinate ring of a reductive group~$G$ has a left-right decomposition, as a $(G-G)$-bimodule,
\be\label{GGmodule}\BC[G]= \bigoplus_{ \l\in \Lambda^+_G  } V_{\l}^*\ot V_{\l},
\ene
\noindent where $V_{\l}$ denotes the irreducible $G$-module of highest weight $\l$.

Let $H\subset G$ be a closed subgroup.
The coordinate ring of the homogeneous space $G/H$ is obtained
by taking (right) $H$-invariants in \eqref{GGmodule} giving rise to  the (left) $G$-module decomposition
\def\tcodim{{\rm codim}\;}
\be\label{LRdecompeqn}\BC[G/H]=\BC[G]^H= \bigoplus_{ \l\in \Lambda^+_G  } V_{\l}^*\ot V_{\l}^H
= \bigoplus_{ \l\in \Lambda^+_G  } (V_{\l}^*)^{\op \tdim  V_{\l}^H}.
\ene
The second equality holds because $V_{\l}^H$ is a trivial (left) $G$-module.
See \cite[Thm. 3, Ch. II, \S 3]{MR768181},  or \cite[\S 7.3]{MR2265844}   for an exposition of these facts.

\subsection{Orbits with reductive stabilizers}\label{staborbitsect}

Let $G$ be a reductive group, let  $V$ be an irreducible $G$-module,
and let $v\in V$ be such that its stabilizer $G(v)$ is reductive.
Then   $G\cdot v= G/G(v)\subset V$ is an affine variety  \cite[Cor. p. 206]{MR0109854}. \peterchange 
The complement of an affine variety in a complete variety is always of pure codimension one 
(see \cite{MR0282977}, chapter 2, Proposition 3.1).
From this it follows  that the boundary of $G\cdot v$ is empty or has pure codimension one
in $\overline{G\cdot v}$. 
Indeed, we can complete~$V$ %our ambient affine space 
by a hyperplane at infinity and take the closure in the resulting 
projective space. Then we have to throw away the components
at infinity of the boundary, and for the other components we 
remove their intersection with the hyperplane at infinity. This 
preserves the pure codimension one property.

\subsection{Stability}\label{kempfstab}

Following Kempf \cite{MR506989}, a non-zero
vector $v\in V$ is said to be {\it $G$-stable} if the orbit $G\cdot v$ is closed.
We then also say that $[v]\in \BP V$ is $G$-stable.
If $V=S^dW$ for $\tdim W>3$, $d>3$, and $v\in V$ is generic, then by \cite{MR0396847} its stabilizer
in $SL(W)$ is finite, and by \cite[II 4.3.D, Th. 6 p. 142]{MR768181}, this implies that $v$
is stable  with respect to the $SL(W)$-action.

Kempf's criterion  \cite[Cor. 5.1]{MR506989}   states that if
$G$ does not contain a non-trivial central one-parameter subgroup, and the stabilizer $G([v])$
is not contained in any proper parabolic subgroup of $G$, then $v$ is $G$-stable.
We will apply Kempf's criterion to the determinant in \S 5.2 and to the permanent in \S 5.5.

If $v$ is $G$-stable, then of course $\BC[G\cdot v]=\BC[\ol{G\cdot v}]$.  The former is an  intrinsic
object with the above representation-theoretic description, while the latter is the
  quotient of the space of all polynomials on $V$ by those vanishing on $G\cdot v$.

\subsection{$GL(W)$ v.s.\ $SL(W)$ orbits}\label{slvsglsubsect}

Let $V$ be a $GL(W)$-module and let $v\in V$ be nonzero.
Suppose that the homotheties in $GL(W)$ act non-trivially on $v$.
Then the orbit $GL(W)\cdot v$ is never stable, as it contains
the origin in its closure.

Assume  that $v$ is $SL(W)$-stable, so $\BC[SL(W)\cdot v]=\BC[\ol{SL(W)\cdot v}$]
can be described using~\eqref{LRdecompeqn}.
Unfortunately the ring $\BC[SL(W)\cdot v]$ is not graded.
However $\ol{GL(W)\cdot v}$ is a cone over $SL(W)\cdot v$ with vertex
the origin. The coordinate ring of $\ol{GL(W)\cdot v}$ is equipped with a grading because
 $\ol{GL(W)\cdot v}$ is invariant under rescaling, so any polynomial vanishing on it must also have each of
 its homogeneous components vanishing on it separately. In fact this coordinate ring is the image of a surjective
 map $Sym(V^*)=\BC[V]\ra\BC[\ol{GL(W)\cdot v}]$, given by restriction of polynomial functions,
and this map respects the grading.

  Consider the restriction map
 $\BC[\ol{GL(W)\cdot v}]_\delta\ra \BC[SL(W)\cdot v]$.
 It  is injective for all $\delta$ because a homogeneous polynomial vanishing on
an affine variety vanishes
on the cone over it.
On the other hand,
because  $SL(W)\cdot v$ is a closed subvariety of $\ol{GL(W)\cdot v}$,
restriction of functions yields a surjective map
$\BC[\ol{GL(W)\cdot v}]\ra\BC[SL(W)\cdot v]$.
Both   $\BC[\ol{GL(W)\cdot v}]_\delta$, $\BC[SL(W)\cdot v]$
are $SL(W)$-modules (as  $\ol{GL(W)\cdot v}$ is also an  $SL(W)$-variety), and
the map between them is an $SL(W)$-module map  because the $SL(W)$-action on functions commutes with restriction.

Summing over all $\delta$  yields a surjective $SL(W)$-module map
$$\bigoplus_\delta\BC[\ol{GL(W)\cdot v}]_\delta\ra \BC[SL(W)\cdot v],
$$
that is injective in each degree $\delta$. We have the following consequence observed in \cite{MS2}:

\begin{proposition}\label{slglprop} Let $V$ be a $GL(W)$-module and let $v\in V$
be $SL(W)$-stable.  An irreducible $SL(W)$-module appears in
 $\BC[SL(W)\cdot v]$  iff it  appears in $\BC[\ol{GL(W)\cdot v}]_\delta$ for
 some $\delta$.
\end{proposition}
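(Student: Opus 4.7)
The plan is to deduce the proposition as a direct formal consequence of the two facts already assembled in the preceding discussion: the restriction map
\[
\rho\colon \bigoplus_{\delta\ge 0}\BC[\ol{GL(W)\cdot v}]_\delta \;\longrightarrow\; \BC[SL(W)\cdot v]
\]
is a surjective $SL(W)$-equivariant map which is moreover injective on each graded piece. Both source and target are rational $SL(W)$-modules (unions of finite-dimensional rational representations of the reductive group $SL(W)$), hence completely reducible. The whole argument is then a bookkeeping exercise with multiplicities, using that $SL(W)$-equivariant maps preserve isotypic components.

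For the ``if'' direction, assume $V_\l(SL(W))$ occurs in $\BC[\ol{GL(W)\cdot v}]_\delta$ for some $\delta$. Since $\rho$ restricted to the degree-$\delta$ piece is an injective $SL(W)$-module map, its restriction to any copy of $V_\l$ in that degree is nonzero and hence (by Schur) an embedding into $\BC[SL(W)\cdot v]$, so $V_\l$ appears there.

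For the ``only if'' direction, assume $V_\l(SL(W))$ occurs in $\BC[SL(W)\cdot v]$. Fix an $SL(W)$-stable isotypic decomposition of the source of $\rho$. Since $\rho$ is $SL(W)$-equivariant it carries the isotypic component of type $V_\l$ in the source to the isotypic component of type $V_\l$ in the target, and components of other types to components of other types. Surjectivity of $\rho$ therefore forces the $V_\l$-isotypic component of the source to surject onto the (nonzero) $V_\l$-isotypic component of the target, so it is itself nonzero. Because the source is graded and each graded piece is $SL(W)$-stable, this isotypic component intersects at least one $\BC[\ol{GL(W)\cdot v}]_\delta$ nontrivially, giving the required $\delta$.

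There is no real obstacle here beyond being careful about the logic; the only point that wants a sentence of justification is the compatibility of the $SL(W)$-equivariant map with isotypic decompositions, which is just complete reducibility plus Schur's lemma. Everything substantive—the cone structure that yields degreewise injectivity, and the closedness of $SL(W)\cdot v$ in $\ol{GL(W)\cdot v}$ that yields surjectivity—has been recorded in the paragraphs immediately preceding the proposition.
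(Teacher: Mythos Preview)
Your proposal is correct and follows exactly the approach of the paper: the proposition is stated as an immediate consequence of the surjective $SL(W)$-equivariant restriction map $\bigoplus_\delta\BC[\ol{GL(W)\cdot v}]_\delta\to \BC[SL(W)\cdot v]$ that is injective on each graded piece, both facts having been established in the preceding paragraphs. You have simply spelled out the routine Schur-lemma/isotypic-component bookkeeping that the paper leaves implicit.
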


In contrast to  the case of $SL(W)$,
if an irreducible module occurring in $\BC[GL(W)\cdot v]$ also occurs in
$\BC[\ol{GL(W)\cdot v }]\subset Sym(V^*)$, we can
recover the degree it appears in.
Consider the case $V=S^dW$, then a $GL(W)$-module $S_{\pi}W$ can only occur in $\BC[\ol{GL(W)\cdot v}]$
if $|\pi|=\delta d$ for some~$\delta$ and in that case it can only appear in
$\BC[\ol{GL(W)\cdot v}]_\delta$ % following removed by Peter: \subset S^{\delta}(S^dW^*)\subset Sym(S^dW^*)$
(see Example \ref{genex} below).

\subsection{Partial stability and an application}\label{MSgobs}

Let $V$ be a $GL(W)$-module.
% Given $SL(W)$-stable points $v,w\in V$,
%with $\tdim SL(W)(w)>\tdim SL(W)(v)$, then
Let $v,w\in V$ be $SL(W)$-stable points. Equation (\ref{LRdecompeqn}) and Proposition
\ref{slglprop} imply the following observation:
 $w\notin \ol{GL(W)\cdot v}$ (equivalently $\ol{GL(W)\cdot w}\not\subset \ol{GL(W)\cdot v}$)   if there
is an $SL(W)$-module that contains a $SL(W)(w)$-invariant that does not contain a $SL(W)(v)$-invariant.
As discussed below, $\tdet_n$ is $SL(W)$-stable, and while $\ell^{n-m}\tperm_m$ is not $SL(W)$-stable, it
is what is called  {\it partially stable} in \cite{MS2}, which allows one to attempt to search for such modules
as we now describe.

\begin{definition}\cite{MS2}\label{msdef2}
Let $G$ be a reductive group and let $V$ be a  $G$-module.
Let $P=KU$ be a Levi decomposition of a parabolic subgroup~$P$ of $G$.
Let $R$ be a reductive subgroup of $K$.
We say that $[v]\in \BP V$ is {\it $(R,P)$-stable} if it satisfies the two conditions
\begin{enumerate}
\item $U\subset G([v])\subset P$.
\item $v$ is stable under the restricted action of $R$, that is $R\cdot v$ is closed.
\end{enumerate}
\end{definition}

\begin{example} If $x\in S^dW'$ is a generic element and $W'\subsetneq W$ is  a linear inclusion, then
$x$~is not $SL(W)$-stable, but it is $(SL(W'),P)$ stable for $P$ the parabolic
subgroup of $SL(W)$ fixing the subspace $W'\subset W$. This follows from \S 4.3, assuming $\dim W'>3$
and $d>3$.
\end{example}

\begin{example}\label{usefulex}
Let $W=A\op A'\op B$, $A=E\ot F\simeq Mat_{m\times m}$, $\tdim A'=1$, and $G=GL(W)$.
Let $\ell\in A'$ such that $\ell\neq 0$.
It follows from \S\ref{kempfstab} that
%(here and below, we assume $\ell\neq 0$),
$\ell^{n-m}\tperm_m\in S^nMat_{n\times n}^*$ is $(R,P)$-stable for $R=SL(A)$ and $P$
the parabolic subgroup of~$G$ preserving  $A\op A'$, whose Levi factor is
$K=(GL(A\op A')\times GL(B))$.
\end{example}

The point of partial stability is that, since the point $v$ is assumed to be $R$-stable,
the problem of determining the multiplicities of the irreducible
modules $V_\nu (R)$ in
$\BC[\overline{R^.v}]$ is
reduced to the problem of determining the dimension of   $V_{\nu}(R)^{R(v)}$. In the case $R=K$,   these
are also  the multiplicities of the corresponding irreducible representations in the coordinate ring
$\BC[\overline {G^.v}]$.

We will now state a central result of \cite{MS2} (Theorem \ref{secondth} below)  in the special case
that will be applied to $\ell^{n-m}\tperm_m$.
%(Our statement is slightly more precise than what is stated in \cite{MS2}.)
We first need to recall the classical
%The condition that $S_{\pi}(A\op A')^*\supset S_{\pi'}(A^*)\ot S^{|\pi|-|\pi'|}(A')^*$
%is made precise by the
Pieri formula (see, e.g.,  \cite{weyman}, Proposition 2.3.1 for a proof):

\begin{proposition}\label{spiopdecomp}
%(Pieri formula)
%Identify a partition $\pi=(\pi_1\hd \pi_N)$ with a Young diagram
%having $\pi_j$ boxes in the $j$-th row.
For $\tdim A'=1$, one has the $GL(A)\times GL(A')$-module decomposition
%of $S_{\pi}(A\op A')$ as a  $GL(A)\times GL(A')$-module is
$$S_{\pi}(A\op A')=\bigoplus_{\pi\mapsto\pi'}S_{\pi'}A\ot S^{|\pi|-|\pi'|}A',$$
where the notation $\pi\mapsto\pi'$ means that
$\pi_1\ge\pi'_1\ge \pi_2\ge\pi'_2\ge\cdots \ge 0.$
%    as a Young diagram, $\pi'$ is obtained from $\pi$
%by removing $b$ boxes   at the bottom of some columns, at most one in
%each column, with the condition that if one removes a box, then one  also
%removes all boxes directly to the right of that box.
\end{proposition}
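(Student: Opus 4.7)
My plan is to prove the decomposition by identifying both sides as $GL(A) \times GL(A')$-modules via their characters, and then appealing to the fact that the character of a rational representation of a reductive group determines the representation up to isomorphism.

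First I would describe the left-hand side combinatorially. Set $N = \dim A$ and pick a basis $x_1, \ldots, x_N$ of the maximal torus of $GL(A)$ together with the torus coordinate $y$ on $GL(A')$. The character of $S_\pi(A \oplus A')$ is the Schur polynomial $s_\pi(x_1, \ldots, x_N, y)$, and by the standard combinatorial definition this equals the sum over semistandard Young tableaux $T$ of shape $\pi$ with entries in $\{1, \ldots, N, N+1\}$, weighted by $\prod x_i^{m_i(T)} y^{m_{N+1}(T)}$. The character of the right-hand side, on the other hand, is $\sum_{\pi \mapsto \pi'} s_{\pi'}(x_1, \ldots, x_N)\, y^{|\pi| - |\pi'|}$.

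Next I would set up the bijection between tableaux and interlacing pairs. Given a SSYT $T$ of shape $\pi$ with alphabet $\{1, \ldots, N+1\}$, let $\pi'$ be the subdiagram of boxes with entries in $\{1, \ldots, N\}$; those cells form a SSYT $T'$ of shape $\pi'$, while the boxes with entry $N+1$ form the skew shape $\pi/\pi'$. The key combinatorial step is to check that, because entries in a column of $T$ are strictly increasing, the boxes labelled $N+1$ form a \emph{horizontal strip}, i.e.\ no two lie in the same column, and conversely any choice of a horizontal strip $\pi/\pi'$ together with a SSYT $T'$ on $\pi'$ assembles into a valid $T$. Having a horizontal strip $\pi/\pi'$ is exactly the interlacing condition $\pi_1 \geq \pi'_1 \geq \pi_2 \geq \pi'_2 \geq \cdots$. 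Summing the weights through this bijection gives $s_\pi(x,y) = \sum_{\pi \mapsto \pi'} s_{\pi'}(x)\, y^{|\pi| - |\pi'|}$, which is exactly the character equality we want.

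Finally, since $S^k A' = (A')^{\otimes k}$ when $\dim A' = 1$, the module $S_{\pi'} A \otimes S^{k} A'$ is irreducible for $GL(A) \times GL(A')$, and distinct choices of $\pi'$ with $k = |\pi|-|\pi'|$ give nonisomorphic summands. The character identity therefore forces the module decomposition stated in the proposition. The only real work is verifying the horizontal-strip/interlacing correspondence, which I expect to be routine; the mild pitfall is making sure one tracks the $y$-weight correctly so that the scalar action of $GL(A')$ on each summand matches $|\pi|-|\pi'|$, but this is automatic from the tableau weighting.
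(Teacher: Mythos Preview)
Your argument is correct: the character computation via semistandard tableaux and the horizontal-strip / interlacing correspondence is the standard proof of this identity, and the bookkeeping you describe (including the $y$-weight) is accurate.

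Note, however, that the paper does not actually supply a proof here. It simply records the statement as the classical Pieri formula and cites \cite{weyman}, Proposition~2.3.1. So your proposal goes further than the paper by actually writing out the (standard) argument rather than quoting a reference; there is nothing to compare beyond that.
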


\begin{theorem}\label{speccasefirstth}
Let $W=A\op A'\op B$, $\tdim A=\aaa$, $\tdim A'=1$,
$z\in S^{d-s}A$, $\ell\in A'\setminus\{0\}$. Assume $z$ is $SL(A)$-stable.
Write $v=\ell^{s}z$. Set $R=SL(A)$, and take
 $P$ to be  the parabolic of $GL(W)$ preserving $A\op A'$,
so $K=  GL(A\op A')\times GL(B)$, and $z$ is $(R,P)$-stable.

\begin{enumerate}

\item
%Recall the notation that $S_{\ol{\pi}}V$ refers
%to a specific realization of $S_{\pi}V$ in $V^{\ot |\pi|}$.
A module $S_{\nu}W^*$ occurs in $\BC[\ol{GL(W)\cdot v}]_\delta$
iff $S_{\nu}(A\op A')^*$ occurs in $\BC[\ol{GL(A\op A')\cdot v}]_\delta$.
There is then a partition $\nu'$  such that $\nu\mapsto \nu'$
%  $S_{\nu}(A\op A')^*\supset S_{\nu'}(A^*)\ot S^{|\nu|-|\nu'|}(A')^*$
%(see Proposition \ref{spiopdecomp}   for an explicit description)
and
$V_{\bl(\nu')}(SL(A))\subset \BC[\ol{SL(A)\cdot [v]}]_\delta$.

\item
Conversely, if $V_{\l}(SL(A))\subset \BC[\ol{SL(A)\cdot [v]}]_\delta$, then
there exist  partitions $\pi, \pi'$ such that $S_{\pi}W^*\subset \BC[\ol{GL(W)\cdot [v]}]_\delta$,
$\pi\mapsto\pi'$ and $\bl(\pi')=\l$.
\item A module $V_{\l}(SL(A))$ occurs in $\BC[\ol{SL(A)\cdot [v]}]$ iff  it occurs in $\BC[SL(A)\cdot v]$.
\end{enumerate}
\end{theorem}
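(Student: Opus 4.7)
The plan is to handle the three parts in the order (3), (1), (2), with (1) carrying the substantive geometry.

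For part (3), I would argue that $z$ being $SL(A)$-stable and $\ell^s\in S^sA'$ being $SL(A)$-invariant imply $SL(A)\cdot v=\ell^s\cdot(SL(A)\cdot z)$ is closed in $S^dW$. Hence $SL(A)\cdot[v]$ is closed in $\PP(S^dW)$ and $\ol{SL(A)\cdot[v]}=SL(A)\cdot[v]$; its homogeneous coordinate ring is that of the affine cone $\BC^*\cdot SL(A)\cdot v$, and an argument parallel to \S\ref{slvsglsubsect} shows this graded ring carries the same irreducible $SL(A)$-modules as the ungraded ring $\BC[SL(A)\cdot v]$.

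For part (1), the crucial geometric input is that $v=\ell^sz$ is fixed both by $GL(B)$ (no $B$-component) and by the unipotent radical $U$ of $P$ (since $U$ acts trivially on $A\oplus A'\subset W$), so $\cL:=\ol{GL(A\oplus A')\cdot v}\subset S^d(A\oplus A')$ is $P$-invariant. I would form the collapsing
$$\pi\colon GL(W)\times_P\cL\lra S^dW,\qquad (g,w)\mapsto g\cdot w,$$
whose image is $\ol{GL(W)\cdot v}$. The $(R,P)$-stability hypothesis $GL(W)(v)\subset P$ makes $\pi$ birational onto its image, and a standard collapsing argument (see \S\ref{basicthmsubsect}) then identifies $\BC[\ol{GL(W)\cdot v}]$ with $\BC[GL(W)\times_P\cL]$ at the level of $GL(W)$-isotypics. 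The latter is computed by parabolic induction, $\BC[GL(W)\times_P\cL]=(\BC[GL(W)]\otimes\BC[\cL])^P$: combining \eqref{GGmodule} with the fact that $(S_\nu W)^U$ is the irreducible $K$-module of highest weight $\nu$, whose $GL(B)$-factor is trivial and whose $GL(A\oplus A')$-factor equals $S_\nu(A\oplus A')$ precisely when $\ell(\nu)\le\aaa+1$, yields
$$\tmult(S_\nu W^*,\BC[\ol{GL(W)\cdot v}]_\delta)=\tmult(S_\nu(A\oplus A')^*,\BC[\cL]_\delta),$$
which is the iff. To produce $\nu'$, I would invoke Pieri (Proposition \ref{spiopdecomp}) to decompose $S_\nu(A\oplus A')^*$ as a $GL(A)\times GL(A')$-module into $\bigoplus_{\nu\mapsto\nu'}S_{\nu'}A^*\otimes S^{|\nu|-|\nu'|}A'^*$, which (since $SL(A)$ acts trivially on $A'$) is simultaneously the $SL(A)$-decomposition $\bigoplus_{\nu\mapsto\nu'}V_{\bl(\nu')}(SL(A))$. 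The graded $SL(A)$-equivariant surjection $\BC[\cL]_\delta\twoheadrightarrow\BC[\ol{SL(A)\cdot[v]}]_\delta$ cannot annihilate the isotypic copy of $S_\nu(A\oplus A')^*$: by $GL(A\oplus A')$-equivariance of this subspace, the vanishing locus would be $GL(A\oplus A')$-stable, and since $GL(A\oplus A')\cdot\widehat{SL(A)\cdot[v]}=GL(A\oplus A')\cdot v$ is dense in $\cL$, the polynomials would vanish on $\cL$, contradicting the assumption. Hence some $V_{\bl(\nu')}$ survives in $\BC[\ol{SL(A)\cdot[v]}]_\delta$.

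For (2), reverse this chain: given $V_\l\subset\BC[\ol{SL(A)\cdot[v]}]_\delta$, apply (3) to obtain $V_\l\subset\BC[SL(A)\cdot v]$; lifting through the $SL(A)$-equivariant surjection $\BC[\cL]_\delta\twoheadrightarrow\BC[\ol{SL(A)\cdot[v]}]_\delta$ and expanding the source into $SL(A)$-isotypics via Pieri forces some $\pi\mapsto\pi'$ with $\bl(\pi')=\l$ and $S_\pi(A\oplus A')^*\subset\BC[\cL]_\delta$; the iff of (1) then yields $S_\pi W^*\subset\BC[\ol{GL(W)\cdot[v]}]_\delta$. The principal obstacle throughout is the collapsing step in (1): one must establish birationality of $\pi$ onto its image and the identification of $\BC[\ol{GL(W)\cdot v}]$ with $\BC[GL(W)\times_P\cL]$ at the level of appearing $GL(W)$-irreducibles, which requires care with normality and properness. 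Once that is in hand, all remaining manipulations are formal consequences of \eqref{LRdecompeqn}, parabolic induction, and Pieri's rule.
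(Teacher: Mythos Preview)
Your overall architecture is sound, and the paper in fact remarks (after Theorem~\ref{secondth}) that the collapsing framework you adopt can be used to give alternative proofs. However, your route and the paper's route diverge at the crucial step, and the difference matters.

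The paper does not prove Theorem~\ref{speccasefirstth} directly; it records it as a special case of Theorem~\ref{secondth}. The key multiplicty equality (your ``iff'' in part~(1)) is established there by an elementary restriction argument, with no appeal to collapsing, birationality, or normality. One simply observes that the restriction map $\BC[\ol{G\cdot[v]}]_\delta\twoheadrightarrow\BC[\ol{K\cdot[v]}]_\delta$ is $P$-equivariant (with $U$ acting trivially on the target, since $U\subset G([v])$). Any copy of $V_\lambda(G)^*$ in the source maps to a nonzero $P$-quotient $N$ (nonzero because a $G$-submodule vanishing on $[v]$ would vanish on all of $\ol{G\cdot[v]}$). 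Dualizing, $N^*\hookrightarrow V_\lambda(G)^U=V_\lambda(K)$, which is $K$-irreducible, so $N=V_\lambda(K)^*$. Thus each copy of $V_\lambda(G)^*$ maps onto a copy of $V_\lambda(K)^*$; combined with surjectivity of restriction, this gives the exact equality of multiplicities. Parts~(3) and~(4) of Theorem~\ref{secondth} then handle the passage from $K$ to $R=SL(A)$ just as you do.

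By contrast, you route everything through the identification $\BC[\ol{GL(W)\cdot v}]\simeq\BC[GL(W)\times_P\cL]$ ``at the level of $GL(W)$-isotypics,'' and you correctly flag this as the principal obstacle. It is a genuine one: the total space $GL(W)\times_P\cL$ is not a vector bundle (since $\cL$ is an orbit closure, not a linear subspace), so Theorem~\ref{weymanthm} does not apply; and without normality of $\ol{GL(W)\cdot v}$ you only get an injection $\pi^*:\BC[\ol{GL(W)\cdot v}]\hookrightarrow\BC[GL(W)\times_P\cL]$, not the isotypic equality you need for the ``only if'' direction. You leave this unresolved. The paper's $U$-invariants argument bypasses the issue entirely and is the cleaner path here; it is worth internalizing, since it shows that the statement holds with equality of multiplicities, not just of supports, and requires no geometric hypotheses on $\ol{GL(W)\cdot v}$ beyond what is given.
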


%The first assertion is a special case of Proposition \ref{spiopdecomp} below and the second and third
%are special cases of Theorem of \ref{secondth} below.

This is a special case of  Theorem \ref{firstth}. It establishes a connection between
$\BC[\ol{GL(W)\cdot v}]$, which we are primarily interested in but we cannot compute,
and $\BC[SL(A)\cdot v]$, which in principle can be described using (\ref{LRdecompeqn}).

We will specialize Theorem \ref{speccasefirstth} to the case $z=\tperm_m$ and study
the precise conditions to have an $SL(A)$-module in $\BC[SL(A)\cdot \tperm_m]$
and the corresponding $GL(W)$-modules in $\BC[\ol{GL(W)\cdot[\ell^{n-m}\tperm_m]}]$.
These conditions are expressed in terms of  certain special Kronecker coefficients,
and we discuss those Kronecker coefficients in \S\ref{kronsect}.

\section{Examples}\label{examplesect}
We study several examples of orbit closures in spaces of polynomials
leading up to the cases of interest, namely
$ \ol{GL_{n^2}\cdot\tdet_n} $, $ \ol{GL_{n^2}\cdot \ell^{n-m}\tperm_m}$,
 $SL_{n^2}\cdot \tdet_n$ and $SL_{m^2}\cdot \ell^{n-m}\tperm_m$.
We also study the coordinate rings of the  orbits
 $GL_{n^2}\cdot \tdet_n$ and $GL_{n^2}\cdot \ell^{n-m}\tperm_m$. For these to be useful,
one must deal with an extension problem,
%(at least in the case of $\tdet_n$)
but the advantage is that their coordinate rings come equipped with a grading
which, when one passes to the closure, indexes the degree.
%In \S\ref{pstabvalue}, we   state Theorem \ref{precisekroncond} which shows that a variant of
%Valiant's conjecture would follow from proving an explicit conjecture in representation theory.

\subsection{Example:}\label{genex}
Let $W=\BC^n$ and $x\in S^dW$ generic.
We   describe the module structure of
$\BC[GL(W)\cdot x]$ and $\BC[SL(W)\cdot x]$ using \eqref{LRdecompeqn}.
If $x\in S^d W$ is generic and $d,n>3$, then $GL(W)(x) = \{\la Id : \la^d=1  \}\simeq \BZ_d$,
hence  $GL(W)\cdot  x\simeq GL(W)/\BZ_d$,
where $\BZ_d$ acts as multiplication by the $d$-th roots of unity, see \cite{MR0396847}.
(Note that if $x\in S^dW$ is any element,
 $\BZ_d\subset GL(W)(x)$, and thus the calculation here will
be useful for   other cases.)
%If $n\equiv 0\tmod d$, then the stabilizer in $SL(W)$ is also $\BZ_d$,
%and otherwise it is trivial.

We   determine the $\BZ_d$-invariants in $GL(W)$-modules. Since
$S_{\pi}W$ is a submodule of $W^{\otimes |\pi|}$, $\o\in \BZ_d$ acts
on $S_{\pi}W\ot (\tdet W)^{-s}$ by the scalar $\o^{|\pi|-ns}$.
By \eqref{LRdecompeqn}, we conclude the following equality of $GL(W)$-modules:
$$
\BC[GL(W)\cdot  x]=  \bigoplus_{(\pi,s)\ \mid\  d\mid |\pi|-ns}(S_{\pi}W^*)^{\op \tdim S_{\pi}W}\ot (\tdet W^*)^{-s}.
$$

Note that $S^{\d}(S^dW^*)$ does not contain any negative powers
of the determinant, so when we pass to
 $\BC[\ol{GL(W)\cdot  x}]=\op_{\d}S^{\d}(S^dW^*)/I_{\d}(\ol{GL(W)\cdot  x})$ we must
loose all terms with $s>0$, i.e.,
we have the inclusion of $GL(W)$-modules
$$
 \BC[\ol{GL(W)\cdot  x}]\subseteq\bigoplus_{\pi\ \mid \ d \mid   |\pi|}(S_{\pi}W^*)^{\op \tdim S_{\pi}W}.
$$
In general there are far fewer  modules and multiplicities in  $S^{\d}(S^{d}W)$  \peterchange
than on the right hand side of the same  degree, 
which illustrates the limitation of this information.
\peterchange
The above inclusion respects degree in the graded module $\BC[\ol{GL(W)\cdot  x}]$:
\be\label{genexgrading}\BC[\ol{GL(W)\cdot  x}]_\delta\subseteq
\bigoplus_{\pi\ \mid \    |\pi|=\delta d}(S_{\pi}W^*)^{\op \tdim S_{\pi}W}.
\ene
This property still holds for any $x\in S^dW$, proving the assertion in the last paragraph
of \S\ref{slvsglsubsect}.

Regarding $SL(W)$, note that $SL(W)(x)=GL(W)(x)\cap SL(W)=\BZ_c$, where $c=gcd(d,n)$.
Thus \eqref{LRdecompeqn} implies,
\be\label{slwgen}
\BC[\ol{SL(W)\cdot x}]=\BC[SL(W)\cdot x]=
%\left\{\begin{matrix} \bigoplus_{\l\in \Lambda^{+}_{SL(W)}}V_{\l}^{\op\tdim  V_{\l}}  & d\not\equiv 0\tmod n\\
\bigoplus_{ \l\in \Lambda^{+}_{SL(W)}\ \mid\  c||\bpi(\l)|  }(V_{\l}^*)^{\op \tdim  V_{\l}}.
% & d\equiv 0\tmod n\end{matrix} \right\}
\ene

\subsection{First Main Example: $GL(W)\cdot \tdet_n\subset S^nW$}\label{ex:det}
\label{se:1stmainex}

Write $W=E\ot F$, with $E=F=\BC^n$.
%Note that for $g\in GL(E)$, $h\in GL(F)$, considering
%$g\ot h\in GL(E\ot F)$, one has \jmchange  $\tdet(g\ot h)=(\tdet(g)\tdet(h))^n$. \marginpar{$\leftarrow$}
The subgroup
$H_0:=\{g\ot h \mid g\in SL(E), h\in SL(F) \}$ of $GL(E\ot F)$
is obtained as the image of $SL(E)\times SL(F)$
under the morphism $(g,h) \mapsto g \ot h$.
The kernel of this morphism equals \peterchange
$\{(\e I,\e^{-1} I) \mid \e^n =1\}$,
which is isomorphic to the group $\mu_n$ of $n$th roots of unity,
so that
$H_0\simeq (SL(E)\times SL(F))/\mu_n$.
%Here  $GL(E)\times GL(F)$ includes into $GL(E\ot F)$
%via the inclusion of  $End(E)\times End(F)=(E^*\ot E)\times (F^*\ot F)\ra (E\ot F)^*\ot (E\ot F)$.
%Write $(e,f)\mapsto e\ot f$.
%Let $S(GL(E)\times GL(F))\subset GL(E\ot F)$ denote the subgroup of the included $GL(E)\times GL(F)$ with
%$\tdet(e)\tdet(f)=1$.

Consider the involution $\tau\in GL(E\ot F)$ defined by
$\tau(e \ot f) = f \ot e$ (this makes sense since $E=F$).
We note that
$\tau (g \ot h) \tau = h \ot g$,
so $\tau $ acts nontrivially on $H_0$ by conjugation.
Hence the group
$H := H_0 \langle \tau\rangle \simeq H_0 \rtimes \BZ_2$
is a nontrivial semidirect product.

Frobenius  \cite{Frobdet} showed that the stabilizer of $\tdet_n$ in
$\GL(W)$ equals the group $H$:
\be
 GL(W)(\tdet_n) = H \simeq \big(SL(E)\times SL(F)\big)/\mu_n \rtimes \BZ_2.
\ene
(See \cite{MR2072621} for indications of modern proofs.)
We note that if we interpret $W$ as the space of $n\times n$ matrices~$M$, then
the first factor acts as
$M\mapsto gM h^t$, with $g\in SL(E)$, $h\in SL(F)$,
and $\tau$ acts by transposition $M\mapsto M^t$.

As observed in \cite[Thm.~4.1]{MS1}, $H=GL(W)(\tdet_n)$ is not contained in any proper parabolic subgroup,
so $[\tdet_n]$ is $SL(W)$-stable by Kempf's criterion, see \S\ref{kempfstab}.

Our next goal is to analyze the space $S_\pi(E\ot F)^H$ of $H$-invariants. 
For this, we note that the 
Schur module $S_\mu E$ associated with a partition $\mu\vdash d$
can be characterized as
$S_\mu E = \Hom_{\FS_d}( [\mu], E^{\ot d})$,
where $[\mu]$ denotes the irreducible representation of
the symmetric group $\FS_d$ associated with~$\mu$.
Consider  the vector space 
$K^\pi_{\mu\nu} := \Hom_{\FS_d}\big([\pi],[\mu]\ot [\nu]\big)$
defined for partitions $\mu,\nu,\pi\vdash d$. 
Its dimension 
$k_{\pi\mu\nu} := \dim \Hom_{\FS_d}([\pi],[\mu]\ot [\nu])$
is called the 
 {\it Kronecker coefficient} associated with the partitions $\pi,\mu,\nu$.
The coefficient  $k_{\pi\mu\nu}$  equals the multiplicity of $[\pi]$ in
the tensor product $[\mu]\ot [\nu]$ of 
representations of $\FS_d$. 
We refer to \S 8, and in particular \S\ref{rectsect} for remarks on special Kronecker coefficients.

The canonical linear map
$$
  S_\mu E \ot S_\nu F  \ot K^\pi_{\mu\nu}  \to S_\pi (E\ot F),
 \a \ot \b \ot \g \mapsto (\a\ot\b)\circ \g 
$$
is $GL(E)\times \GL(F)$-equivariant 
(with the trivial action of this group on $K^\pi_{\mu\nu} $). 
Schur-Weyl duality~\cite{FH} tells us that the induced canonical map
\begin{equation}\label{eq:SWmorph}
 \bigoplus_{\mu,\nu\vdash_m d}  S_\mu E \ot S_\nu F \ot K^\pi_{\mu\nu} \ \to\ S_\pi (E\ot F)
\end{equation}
is an isomorphism.
Briefly, \peterchange
the splitting of the Schur module $S_{\pi}(E\ot F)$
with respect to the morphism
$GL(E)\times GL(F) \to \GL(E\otimes F), (g,h)\mapsto g\ot h$
is given by
\begin{equation}\label{eq:defKC}
S_{\pi}(E\ot F)=\op_{\mu,\nu} (S_{\mu}E\ot S_{\nu}F)^{\op k_{\pi\mu\nu}}.
\end{equation}

The action of $\tau\in GL(E\ot E)$ determines \peterchange 
an involution of  $S_{\pi}(E\ot E)$ 
(recall $E=F$). 
We need to understand the corresponding action on the left-hand side 
of~\eqref{eq:SWmorph}.
For this, we note that the isomorphism
$[\mu] \ot [\nu] \to [\nu] \ot [\mu]$ 
resulting from exchanging the factors 
defines a linear map 
$\s^\pi_{\mu\nu}\colon K^\pi_{\mu\nu} \to  K^\pi_{\nu\mu}$
such that  
$\s^\pi_{\nu\mu}\s^\pi_{\mu\nu} = id$. 
It is straightforward to verify that 
\begin{equation}\label{le:tau-action}
 \tau\cdot ( (\a\ot \b) \circ \g)  = (\b \ot \a) \circ \s^\pi_{\mu\nu}(\g) 
\end{equation}
for $\a\in S_\mu E$, $\b \in S_\nu E$, and $\g\in K^\pi_{\mu\nu}$. 
In the case $\mu=\nu$, we get a linear involution 
$\s^\pi_{\mu\mu}$ of $K^\pi_{\mu\mu}$.
 The  subspace \peterchange
of invariants in $K^\pi_{\mu\mu}$ under this involution
can be identified with 
$\Hom_{\FS_d}([\pi], \mathrm{Sym}^2 [\mu])$.
We define the corresponding 
{\em symmetric Kronecker coefficient} as \peterchange
\begin{equation}\label{eq:SKC}
 sk^\pi_{\mu\mu} : = 
\dim \Hom_{\FS_d}([\pi], \mathrm{Sym}^2 [\mu]) .
\end{equation}
So $sk^\pi_{\mu\mu}$ equals the multiplicity of $[\pi]$ in 
the symmetric square $\mathrm{Sym}^2 [\mu]$. \peterchange
Note that 
$sk^\pi_{\mu\mu}  \le k_{\pi\mu\mu}$
and the inequality may be strict.
% e.g., a simple computation shows that for $\mu=(2,1)$ and $\pi=(1^3)$ 
%we have $sk^\pi_{\mu\mu} =0$ and $k_{\pi\mu\mu}=1$. 
We refer to~\cite{MR1603309} for some examples. 

The symmetric Kronecker coefficients for rectangular partitions 
$\d^n=(\d\hd \d)$ ($\d$ appears $n$ times) show up in the description of the 
irreducible representions occuring in the coordinate ring of the $GL(W)$-orbit of 
the determinant.

\begin{proposition}\label{peterrefx}
%Recall the notation  for partitions $\d^n=(\d\hd \d)$ ($\d$ appears $n$ times).
\begin{align}
\BC[GL(W)\cdot  \tdet_n]
&=\bigoplus_{\d\geq 0}\bigoplus_{\pi \, \mid \, |\pi|=n\d }
 (S_{\pi}W^*)^{\op sk^\pi_{\d^n\d^n}}. \label{eq:uno}\\
\BC[\ol{GL(W)\cdot  \tdet_n}]_{\d}
&\subseteq \bigoplus_{ \pi \, \mid \, |\pi|=n\d } (S_{\pi}W^*)^{\op  sk^\pi_{\d^n\d^n}}. \label{eq:due}
\\
\BC[SL(W)\cdot  \tdet_n]
&=\BC[\ol{SL(W)\cdot  \tdet_n}]=\bigoplus_{\l\in \Lambda^+_{SL(W)}}
 ( V_{\l}^*)^{\op sk^{\bpi(\l)}_{\d^n\d^n}},
 \label{eq:tre}
\  \ \d=|\bpi(\l)|/n.
\end{align}
\end{proposition}

\begin{proof}
The multiplicity of $S_{\pi}W^*$ in $\BC[GL(W)\cdot  \tdet_n]$ equals
$\dim S_{\pi}(W)^{H}$ by Equation~\eqref{LRdecompeqn}.
Suppose that $|\pi|= \delta n $ for some~$\delta$. 
Equation~\eqref{eq:SWmorph} implies that 
$$
 (S_{\pi}(E\ot F))^{H_0} =
(S_{\pi}(E\ot F))^{SL(E)\times SL(F)}
 =  S_{\d^n}E \ot S_{\d^n} F \ot K^\pi_{\d^n,\d^n} 
 \simeq  K^\pi_{\d^n,\d^n}  .
$$
For this we used that 
$S_\mu(E)^{SL(E)} = 0$ 
unless $\mu =(\d^n)$ in which case 
$S_\mu(E)^{SL(E)} = \BC$. 
By~\eqref{le:tau-action}  the action of the involution~$\tau$ 
corresponds to the action of $\s^\pi_{\d^n\d^n}$ on $ K^\pi_{\d^n,\d^n} $.
Therefore, 
$\dim (S_\pi (E\ot F))^H = sk^\pi_{\d^n\d^n}$
by the definition of symmetric Kronecker coefficients, 
Moreover, if $n$ does not divide $|\pi|$, then $(S_{\pi}(E\ot F))^{H_0}=0$.
This completes the proof of~\eqref{eq:uno}.

Equation~\eqref{eq:due} is now immediate as
$\BC[\ol{GL(W)\cdot  \tdet_n}]_{\d} \subseteq \BC[GL(W)\cdot  \tdet_n]_{\d}$.
Equation~\eqref{eq:tre} follows from the proof of Equation~\eqref{eq:uno}.
\end{proof}

\subsection{Example:} Suppose $W=A\op B$, with $x\in S^dA$ generic.
%Let $(x_i),(y_s)$  respectively denote bases of $A,B$.
Here and below let $\aaa=\tdim A$ and $\bbb=\tdim B>0$. Assume $d,\aaa >3$.
The stabilizer $GL(W)(x)$ of $x$ in $GL(W)$ is of the form
$$
GL(W)(x)=\left\{ \begin{pmatrix} \o Id & *\\ 0&*\end{pmatrix}
\mid \o^d=1\right\}
$$
where the upper $*$ is an arbitrary $\aaa\times \bbb$ matrix, and the lower $*$ is
an arbitrary $\bbb\times \bbb$ invertible matrix.
%and $\o$ is a $d$-th root of unity. The stabilizer in $SL(W)$ is obtained by adding the requirement that
%the determinant be one.
Since there is no
control over  the lower right hand block matrix in $GL(W)(x)$, an irreducible
$GL(W)$-module $S_{\pi}W\otimes (\det W)^{\ot k}$ can contain non-trivial
invariants only if $k=0$, and then these invariants must be contained in
$S_{\pi}A\subset S_\pi W$. Since $GL(W)(x)$ acts on $S_{\pi}A$ by
homotheties, we conclude that
% The $GL(W)(x)$-invariant
%highest weight vectors correspond to projections of monomials in the $x_i$'s as above.
%$GL(W)(x)$ does not leave $\tdet W^*$ invariant because there is no control over
%the determinant of the lower right hand block matrix in $GL(W)(x)$.
%Thus
$$
\BC[GL(W)\cdot  x]=  \bigoplus_{\pi\ \mid \ d\mid |\pi|,\ \ell(\pi)\leq\aaa }(S_{\pi}W^*)^{\op \tdim S_{\pi}A}.
$$
In particular,   all modules $S_{\pi}W^*$ with $d||\pi|$ and
$\ell(\pi)\leq\aaa$ do occur. The elimination of modules with more than $\aaa$ parts
is due to our variety being contained in a {\it subspace variety} (defined in \S\ref{wexa} below), consistent
with Proposition \ref{coordbootprop}.

For comparison with what follows, we record the following immediate consequence for all $\delta$:
\be\label{ocinfo}
\BC[\ol{GL(W)\cdot  x}]_\delta\subseteq   \bigoplus_{\pi\ \mid \
|\pi|=d\delta,\ \ell(\pi)\leq\aaa }(S_{\pi}W^*)^{\op \tdim S_{\pi}A}.
\ene

Since $x$ is not $SL(W)$-stable, we instead use the $(SL(A),P_{\aaa})$-partial stability of $x$ to obtain
further information. Namely take
$R=SL(A)$,   $K=GL(A)\times GL(B)$, and $P_{\aaa}$ the parabolic preserving  $A$.
%which, in terms of matrices, are the elements of $GL(W)$ with their lower $\bbb\times \aaa$ block   zero.
%The results of Example \S\ref{genex} show that
%$$
% \BC[\ol{SL(A)\cdot x}]=\BC[SL(A)\cdot x]=\left\{\begin{matrix} \bigoplus_{\l\in %\Lambda^{+}_{SL(A)}}V_{\l}^{\op\tdim  V_{\l}}  & d\not\equiv 0\tmod \aaa\\
%\bigoplus_{ \{\l\in \Lambda^{+}_{SL(A)}\ \mid\  gcd(d,n)||\bpi(\l)| \}  }V_{\l}^{\op \tdim  V_{\l}}  & d\equiv %0\tmod \aaa \end{matrix}.
%\right .
%$$
From \eqref{slwgen} we have a description of $\BC[SL(A)\cdot x]$ in terms of $c=gcd(d,\aaa)$.
By Theorem \ref{speccasefirstth}, for each dominant integral weight~$\lambda$
of $SL(A)$ such that $c$ divides $|\bpi(\l)|$, some~$\pi$ with $\bl(\pi)=\l$ must occur in
$\BC[\ol{GL(W)\cdot x}]$,  and by \eqref{genexgrading}
it occurs in $\BC[\ol{GL(W)\cdot x}]_{|\pi|/d}$.

\subsection{Example:}\label{subcaseap}

Suppose $W=A\op A'\op B$ and  $x=z\ell^{s}\in S^dW$,
where $z\in S^{d-s}A$ is generic, and
$\tdim A'=1$, $\ell\in A'\setminus\{0\}$.
%Here and below, let $\aaa=\tdim A$, $\bbb=\tdim B>0$.
Assume $d-s,\aaa >3$.
It is straightforward to show that, with respect to bases adapted to the splitting $W=A\op A'\op B$,
$$
GL(W)(x)=\left\{ \begin{pmatrix}   \psi Id & 0& *\\ 0&\eta &*\\ 0 & 0 &*\end{pmatrix}
\mid  \ \eta^{s} \psi^{d-s}=1\right\}.
$$
Working as above, we first observe that the $GL(W)(x)$-invariants in $S_\pi W$ must be contained in
$S_\pi (A\oplus A')$. By the Pieri formula \ref{spiopdecomp}, this is the
sum of the $S_{\pi'}A\otimes S^{|\pi|-|\pi'|}A'$, for $\pi\mapsto\pi'$. The
action of $GL(W)(x)$ on such a factor is by multiplication with
$\psi^{|\pi'|}\eta^{|\pi|-|\pi'|}$, hence the conditions for invariance that $|\pi'|=\delta(d-s)$
and $|\pi|=\delta d$ for some $\delta$.
We conclude that
$$
\BC [GL(W)\cdot  x]=\bigoplus_{\delta\geq 0}\bigoplus_{\substack{|\pi|= \delta d, \; |\pi'|=\delta (d-s),
\\ \pi\mapsto\pi'}}
(S_{\pi}W^*)^{\op \tdim S_{\pi'} A },
$$
\begin{equation}\label{xconclu}
\BC [\ol{GL(W)\cdot  x}]_\delta\subseteq\bigoplus_{\substack{|\pi|=\delta d, \;|\pi'|=\delta (d-s), \\ \pi\mapsto\pi'}}
(S_{\pi}W^*)^{\op \tdim S_{\pi'} A }.
\end{equation}

\medskip

The point  $x$ is not $SL(W)$-stable, but is $SL(A)$-stable, and thus $(R,P)$-stable for $(R,P)=(SL(A),P_{\aaa+1})$.
Theorem \ref{speccasefirstth} applied to this case says that if $S_{\pi}W^*\subset \BC[\ol{GL(W)\cdot x}]_\delta$ then
$S_{\pi}(A\op A')^*\subset\BC[\ol{GL(A\op A')\cdot x}]_\delta$ and there exists $\pi'$     such that $\pi\mapsto\pi'$
and $V_{\l (\pi')}(SL(A)) \subset \BC[SL(A)\cdot x]$. Moreover, by \eqref{slwgen} the latter condition is equivalent
to the condition that $c=gcd(d-s,\aaa)$ divides $|\pi'|$.

\subsection{Example:}\label{permcase}

Suppose $W=Mat_{m\times m}$ and $x=\tperm_m$.
We write $W=E\ot F$, with $E=F=\BC^m$.
Let $T_E$ denote the maximal torus of diagonal matrices in $SL(E)$.
Its normalizer $N_E$ is the semidirect product of $T_E$ and the Weyl group~$\cW_E$
of permutation matrices in $GL(E)$. %$N_E=T_E\rtimes \cW_E$
Similarly, let $T_F$ denote the maximal torus of $SL(F)$ and
$N_F=T_F\rtimes \cW_F$ its normalizer.
If we denote by $N_0$ the image of $N_E\times N_F$ under
$GL(E)\times GL(F) \to GL(E\ot F), (g,h)\mapsto g\ot h$, then
$N_0\simeq \big( N_E\times N_F\big)/\mu_m$,
where $\mu_m$ denotes the group of $m$th roots of unity.
Recall from \S\ref{ex:det} the involution $\tau\in GL(E\ot F)$
and consider the subgroup
$N := N_0 \langle \tau\rangle \simeq N_0 \rtimes \BZ_2$.

By \cite{MR0137729}, for $m>2$,
the stabilizer of $\tperm_m\in S^m(E\ot F)$ equals
%\be\label{permstabeqn}
%GL(W)(\tperm_m) =\{ g\ot h \mid g\in N_E, \, h\in  N_F, \  \tdet(g)\tdet(h)=1\}\rtimes \BZ_2,
%GL(W)(\tperm_m) =\{ tP_\pi\ot sP_\sigma \mid t\in T_E, \, s\in  T_F, \  \pi,\sigma\in \FS_m \rtimes \BZ_2,
%\ene
\be\label{permstabeqn}
% GL(W)(\tperm_m) = \{ g\ot h \mid g\in N_E, \, h\in  N_F \}\rtimes \BZ_2,
 GL(W)(\tperm_m) =  N \simeq \big (N_E\times N_F\big)/\mu_m \rtimes \BZ_2 .
\ene
%Here $\BZ_2$ acts by sending a matrix to its transpose.
(It is  stated in~\cite{MS1} that the stabilizer is   found in
\cite{MR504978}, although this is not correct.
A shorter proof of \eqref{permstabeqn} is given in \cite{botta:67}.)

In \cite[Theorem 4.7]{MS1} it is observed that $SL(W)(\tperm_m)$ is not contained in
any proper parabolic subgroup of $SL(W)$,
so $\tperm_m$  is $SL(W)$-stable by Kempf's criterion, see~\S\ref{kempfstab}.

Consider the Schur module $S_\mu E$ corresponding to
a partition $\mu \vdash_m \d m$.
Then the zero weight space $(S_{\mu}E)_0 := (S_\mu E)^{T_E}$
of $S_\mu E$ with respect to the $SL(E)$-action is nonzero.
The group $\cW_E$ acts on $(S_{\mu}E)_0$
and we shall denote by
$p_\mu :=  \dim (S_{\mu}E)_0^{\cW_E}$
the dimension of the space of its $\cW_E$-invariants.
In fact, Corollary~\ref{cor93} stated later on,
identifies~$p_\mu$ as the following {\it plethysm coefficient}:
$$
 p_\mu = \tmult(S_{\mu} E, S^m(S^\delta E)) .
$$

\begin{definition}\label{sigpermm}
Define \peterchange 
$\Sigma_{\tperm_m}\subset \Lambda^+_{GL_{m^2}}$ to be
the set of partitions $\pi$ such that:
%satisfying the following conditions:
\begin{enumerate}
\item $|\pi |=\delta m$ some $\delta\in \BN$,
\item there exist $\mu,\nu\vdash_m \d m$ with
$p_\mu p_\nu \ne 0$ and either
\begin{enumerate}
\item $k_{\pi\mu\nu}\neq 0$  if $\mu\ne\nu$ or 
\item $sk^\pi_{\mu\mu}\neq 0$  if $\mu=\nu$.
\end{enumerate}
\end{enumerate}
For $\pi\in  \Sigma_{\tperm_m}$, define \peterchange 
$$
 mult_{\pi}= \frac 12 \sum_{\mu\neq \nu} k_{\pi\mu\nu} p_\mu p_\nu
+ \sum_{\mu} sk^\pi_{\mu\mu}\binom{ p_\mu +1}{2}.
$$
Note that $\tmult_\pi\ge 1$ for $\pi\in  \Sigma_{\tperm_m}$.
Finally let $\Sigma_{\tperm_m}^S=\bpi^{-1}(\Sigma_{\tperm_m})\subset\Lambda^+_{SL_{m^2}}$.
\end{definition}

\begin{proposition}\label{permringpropa}
\begin{align} \label{eq:puno}
 \BC[GL(W)\cdot  \tperm_m]
&=\bigoplus_{\pi\in  \Sigma_{\tperm_m}}(S_{\pi}W^*)^{\op mult_{\pi}}.\\ \label{eq:pdue}
 \BC[\ol{GL(W)\cdot \tperm_m}]_\delta
&\subseteq
\bigoplus_{\substack{\pi\in  \Sigma_{\tperm_m}, \\|\pi|=\delta m}}(S_{\pi}W^*)^{\op mult_{\pi}}.\\
\label{eq:ptre}
 \BC[\ol{SL(W)\cdot  \tperm_m}]
&=\BC[SL(W)\cdot  \tperm_m]=
\bigoplus_{\l\in \Sigma_{\tperm_m}^S
%\Lambda^{+}_{SL(W)}\ \mid\ \Sigma^{\l}_{\tperm_m}\neq\emptyset
}(V_{\l}^*)^{\op mult_{\bpi(\l)}}.
\end{align}
\end{proposition}

\begin{proof}
By Equation~\eqref{LRdecompeqn} we need to show that
$\dim S_{\pi}(W)^{GL(W)(\tperm_m)} = mult_{\pi}$.
From~\eqref{eq:SWmorph} %{eq:defKC}
we obtain, using $(S_\mu E)^{T_E} = (S_\mu E)_0$, that
$$
 (S_{\pi}(E\ot F))^{ T_E \times  T_F } =
 \bigoplus_{\mu,\nu} (S_{\mu}E)_0 \ot (S_{\nu}F)_0 \ot K^\pi_{\mu\nu}  ,
$$
which implies, using $N_E=T_E\ltimes\cW_E$, that \peterchange
$$
 (S_{\pi}(E\ot F))^{N_E\times N_F} =
 \bigoplus_{\mu,\nu} (S_{\mu}E)_0^{\cW_E} \ot (S_{\nu}F)_0^{\cW_F} \ot K^\pi_{\mu\nu} .
$$
For proving Equation~\eqref{eq:puno}, 
it remains to show that $mult_\pi$ equals the dimension of the space
of $\tau$-invariants of $(S_{\pi}(E\ot F))^{N_E\times N_F}$.
Put $X_\mu := (S_{\mu}E)_0^{\cW_E}$ to simplify notation. 
Equation~\eqref{le:tau-action} implies that for $\mu\ne\nu$, 
the space of $\tau$-invariants
$$
 \Big(X_\mu \ot X_\nu \ot K^\pi_{\mu\nu}\ \oplus\  X_\nu \ot X_\mu \ot K^\pi_{\nu\mu} \Big)^\tau
$$
projects bijectively onto 
$X_\mu \ot X_\nu \ot K^\pi_{\mu\nu}$. 
Moreover, 
$$
 \Big(X_\mu \ot X_\mu \ot K^\pi_{\mu\mu}\Big)^\tau = 
  \mathrm{Sym}^2(X_\mu) \ot (K^\pi_{\mu\mu})^\tau .
$$  
Taking into account $p_\mu =  \dim (S_{\mu}E)_0^{\cW_E}$, 
it follows that 
$ (S_{\pi}(E\ot F))^{N_0\langle\tau\rangle} = mult_\pi$
as claimed in~\eqref{eq:puno}, 

Equation~\eqref{eq:pdue} is now immediate as
$\BC[\ol{GL(W)\cdot  \tperm_m}]_{\d} \subseteq \BC[GL(W)\cdot  \tperm_m]_{\d}$.
Equation~\eqref{eq:ptre} follows from the proof of Equation~\eqref{eq:puno}.
\end{proof}

%We remark that $\tperm_m$ is
%the unique up to scale $GL(W)(\tperm_m)$-invariant element of $S^m(E\ot F)$.

\subsection{Second Main Example}\label{fmainex}

Let $W=A\op A'\op B$, $A=E\ot F\simeq Mat_{m\times m}$, $\tdim A'=1$, $\tdim W=n^2$,
and $x=\ell^{n-m}\tperm_m$, $\ell\in A'$.
With respect to bases adapted to the splitting $W=A\op A'\op B$,
\begin{equation}\label{gpermeqn}
GL(W)(x)=\left\{ \begin{pmatrix}   \xi  GL(W)(\tperm_m)  & 0& *\\ 0&\eta &*\\ 0 & 0 &*\end{pmatrix}
\mid \eta^{n-m} \xi^{m}=1\right\}.
\end{equation}

\begin{definition}\label{sigmapermmn}
For $n > m$, define $\Sigma^n_{\tperm_m}\subset \Lambda^+_{GL_{n^2}}$ to be
the set of partitions $\pi$ such that:
\begin{enumerate}
\item $  |\pi |=\delta n$ some $\delta\in \BN$,
\item there exists $\pi'\in \Sigma_{\tperm_m}$, such that $|\pi' |=\delta m$ and $\pi\mapsto \pi'$.
\end{enumerate}
Moreover, for $\pi\in\Sigma^n_{\tperm_m}$ we set
$$\tmult^n_\pi = \sum_{\substack{\pi'\in \Sigma_{\tperm_m},\; \pi\mapsto \pi'\\ n|\pi' |=m|\pi|}}\tmult_{\pi'}.$$
\end{definition}

  Proposition \ref{permringpropa} and   Example    \ref{subcaseap} show:

\begin{proposition}\label{permringprop}
%The coordinate ring of $GL(W)\cdot \ell^{n-m}\tperm_m$,
%As $GL(W)$-modules, we have
\begin{align*}
\BC[GL(W)\cdot  \ell^{n-m}\tperm_m]
&=\bigoplus_{\pi\in  \Sigma_{\tperm_m}^{n}}(S_{\pi}W^*)^{\op mult^n_{\pi}},
\end{align*}
%Thus for all $k$,
\begin{equation}\label{cheapperm}
\BC[\ol{GL(W)\cdot  \ell^{n-m}\tperm_m}]_\delta\subseteq
\bigoplus_{\substack{\pi\in  \Sigma_{\tperm_m}^{n},\\ |\pi|=n\delta}}(S_{\pi}W^*)^{\op mult^n_{\pi}}.
\end{equation}
\end{proposition}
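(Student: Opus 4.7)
The plan is to adapt the invariant-theoretic computation of Example~\ref{subcaseap} to the non-generic case $z=\tperm_m$, replacing the generic-stabilizer contribution $\tdim S_{\pi'}A$ by the finer multiplicity $mult_{\pi'}$ furnished by Proposition~\ref{permringpropa}. The first identity then follows from computing $\tdim(S_\pi W)^{GL(W)(x)}$ via the Peter--Weyl-type decomposition \eqref{LRdecompeqn}, and the inclusion for the orbit closure is a grading consequence of the injection $\BC[\ol{GL(W)\cdot x}]\hookrightarrow \BC[GL(W)\cdot x]$.

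First, $GL(B)$ sits inside $GL(W)(x)$ (take $h=I$, $\xi=\eta=1$ and all $*$'s zero in \eqref{gpermeqn}), so any invariant in $S_\pi W$ lies in $(S_\pi W)^{GL(B)}=S_\pi(A\oplus A')$. The unipotent radical $U_P=\mathrm{Hom}(B,A\oplus A')$ of the parabolic $P$ preserving $A\oplus A'$ acts trivially on pure tensors in $(A\oplus A')^{\otimes|\pi|}$, hence fixes $S_\pi(A\oplus A')$ pointwise. So invariance reduces to invariance under the Levi part $\{(\xi h,\eta):h\in GL(A)(\tperm_m),\ \eta^{n-m}\xi^m=1\}$ acting on $S_\pi(A\oplus A')$. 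Decomposing the latter via Pieri (Proposition~\ref{spiopdecomp}) as $\bigoplus_{\pi\mapsto\pi'}S_{\pi'}A\otimes S^{|\pi|-|\pi'|}A'$, the scalars $(\xi,\eta)$ act on each summand by the character $\xi^{|\pi'|}\eta^{|\pi|-|\pi'|}$, which is trivial on the subtorus $\eta^{n-m}\xi^m=1$ iff $(|\pi'|,|\pi|-|\pi'|)$ is a nonnegative integer multiple of $(m,n-m)$, equivalently $|\pi'|=\delta m$ and $|\pi|=\delta n$ for some $\delta\ge 0$; meanwhile $h$-invariance cuts out $(S_{\pi'}A)^{GL(A)(\tperm_m)}$, whose dimension is $mult_{\pi'}$ when $\pi'\in\Sigma_{\tperm_m}$ and vanishes otherwise, by Proposition~\ref{permringpropa}.

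Summing these contributions over admissible $\pi'$ reproduces $mult^n_\pi$ from Definition~\ref{sigmapermmn}, giving $\tdim(S_\pi W)^{GL(W)(x)}=mult^n_\pi$ for $\pi\in\Sigma_{\tperm_m}^n$ and zero otherwise; this establishes the first identity. For \eqref{cheapperm}, the restriction map $\BC[\ol{GL(W)\cdot x}]\hookrightarrow \BC[GL(W)\cdot x]$ is an injective $GL(W)$-equivariant homomorphism, and as recorded in \eqref{genexgrading} a polynomial module $S_\pi W^*$ can occur in $\BC[\ol{GL(W)\cdot x}]_\delta\subset S^\delta(S^nW^*)$ only when $|\pi|=\delta n$; intersecting the decomposition above with this degree constraint yields the stated inclusion. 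The one mildly delicate step is the unipotent invariance claim, but this is immediate from the observation that $\phi\in \mathrm{Hom}(B,A\oplus A')$ acts trivially on any tensor with no $B$-component.
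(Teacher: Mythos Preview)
Your argument is exactly the paper's: the paper simply cites Proposition~\ref{permringpropa} and Example~\ref{subcaseap}, and you have correctly spelled out the combination, replacing the generic factor $\dim S_{\pi'}A$ from Example~\ref{subcaseap} by the permanent-specific multiplicity $mult_{\pi'}$ coming from Proposition~\ref{permringpropa}.

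One small omission: to get the first identity as an \emph{equality} via \eqref{LRdecompeqn} you must also rule out contributions from non-polynomial irreducibles $V_\lambda(GL(W))$, whereas your write-up jumps directly to $S_\pi W$ for partitions~$\pi$. The quickest fix is to reverse the order of your first two reductions: since $GL(W)(x)$ contains the unipotent radical $U_P$ of the parabolic preserving $A\oplus A'$, one has $V_\lambda^{U_P}=V_{\lambda'}(GL(A\oplus A'))\otimes V_{\lambda''}(GL(B))$ with $\lambda',\lambda''$ the first $m^2{+}1$ and last $\dim B$ entries of~$\lambda$; then $GL(B)$-invariance forces $\lambda''=0$, so $\lambda$ is a partition with $\ell(\lambda)\le m^2{+}1$. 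This is precisely the step the paper imports (``working as above'') from Example~5.3.
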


%  See Corollary \ref{cor93} for remarks on Kronecker  coefficients satisfying the conditions
%of Proposition \ref{permringprop}.

\medskip

Since $SL(W)\cdot \ell^{n-m}\tperm_m$ is not stable, we consider
$R=SL(A)$ as in \S\ref{subcaseap}. (We could have augmented $R$ by the semi-simple part of the stabilizer
of $\ell^{n-m}\tperm_m$ but this would not yield any new information.)

%We summarize the consequences of the partial stability of $\ell^{n-m}\tperm_m$:

From Theorem~\ref{speccasefirstth} we deduce the following result.

\begin{proposition}\label{permringpropx}
$\ell^{n-m}\tperm_m$ is $(SL(A),P_{m^2+1})$-partially stable.
Thus for all $\l \in \Sigma^S_{\tperm_m}$, there exist
partitions $\pi,\pi'$ such that $\bl(\pi')=\l$, $\pi\mapsto\pi'$, and
$S_{\pi}W^*\subset \BC[\ol{GL(W)\cdot \ell^{n-m}\tperm_m}]$.
\end{proposition}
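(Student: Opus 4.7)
The proof has two parts, one for each assertion.

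For $(SL(A), P_{m^2+1})$-stability, take $P=P_{m^2+1}$ to be the parabolic of $GL(W)$ preserving $A\op A'\subset W$, with unipotent radical $U$ consisting of block upper-unitriangular matrices with respect to the splitting $W = A \op A' \op B$. The explicit form \eqref{gpermeqn} of the stabilizer shows both that $U\subset GL(W)(\ell^{n-m}\tperm_m)$ (set $\xi=\eta=1$ and the $GL(W)(\tperm_m)$ block trivial) and that $GL(W)(\ell^{n-m}\tperm_m)\subset P$. For the second condition of Definition \ref{msdef2}, $SL(A)$ fixes $\ell\in A'$ pointwise because $A'$ is a distinct direct summand of $W$, so
$$ SL(A) \cdot (\ell^{n-m}\tperm_m) \;=\; \ell^{n-m} \cdot (SL(A) \cdot \tperm_m), $$
which is closed by the $SL(A)$-stability of $\tperm_m$ established in \S\ref{permcase} via Kempf's criterion.

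For the module existence claim, fix $\l\in\Sigma^S_{\tperm_m}$ and set $v=\ell^{n-m}\tperm_m$. By the third identity of Proposition \ref{permringpropa}, applied with the abstract $W$ of that statement specialized to our $A=\Mat_{m\times m}$, the irreducible $SL(A)$-module $V_\l^*$ appears in $\BC[SL(A)\cdot\tperm_m]$ with multiplicity $\tmult_{\bpi(\l)}\geq 1$. Since $SL(A)$ fixes $\ell$, the stabilizers $SL(A)(\tperm_m)$ and $SL(A)(v)$ coincide, so the orbits $SL(A)\cdot\tperm_m$ and $SL(A)\cdot v$ are $SL(A)$-equivariantly isomorphic; in particular $V_\l^*\subset \BC[SL(A)\cdot v]$. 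Theorem \ref{speccasefirstth}(3) transfers this to $V_\l^*\subset \BC[\ol{SL(A)\cdot [v]}]_\delta$ for some $\delta$, and Theorem \ref{speccasefirstth}(2) then produces partitions $\pi,\pi'$ with $\pi\mapsto\pi'$, $\bl(\pi')=\l$, and $S_\pi W^*\subset \BC[\ol{GL(W)\cdot [v]}]_\delta$, which is the desired conclusion.

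The main substantive step is the identification of the two orbits as $SL(A)$-homogeneous spaces, which reduces the coordinate-ring question for $v$ to the one for $\tperm_m$ already settled in Proposition \ref{permringpropa}; the rest is a direct invocation of Theorem \ref{speccasefirstth}. One minor bookkeeping point is that applying part (2) of that theorem to $V_\l^*$ technically produces $\bl(\pi')=\l^*$ rather than $\l$; however $\Sigma^S_{\tperm_m}$ is invariant under the duality $\l\mapsto\l^*$ by the standard symmetries of Kronecker coefficients and of the defining plethysm $S^m(S^\delta\BC^m)$, so this is absorbed by relabeling $\l\leftrightarrow\l^*$ at the outset.
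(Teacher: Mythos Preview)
Your proof is correct and follows the paper's route exactly: the paper simply says the proposition is deduced from Theorem~\ref{speccasefirstth}, and you have unpacked that deduction in detail, together with the verification of $(SL(A),P_{m^2+1})$-stability via the explicit stabilizer \eqref{gpermeqn} and Kempf's criterion for $\tperm_m$ (\S\ref{permcase}).

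One remark on your final paragraph. You are right that there is a dual mismatch between Proposition~\ref{permringpropa} (which puts $V_\l^*$ in $\BC[SL(A)\cdot\tperm_m]$ for $\l\in\Sigma^S_{\tperm_m}$) and the hypothesis of Theorem~\ref{speccasefirstth}(2) (which asks for $V_\l$); the paper glosses over this. Your fix---$\Sigma^S_{\tperm_m}$ is closed under $\l\mapsto\l^*$---is true, but the justification you give (``standard symmetries of Kronecker coefficients and of the plethysm $S^m(S^\delta\BC^m)$'') is not obviously adequate: the relevant $SL_{m^2}$-duality is the box-complement $\pi\mapsto\pi^\vee$, and it is not immediate that the defining Kronecker/plethysm conditions of Definition~\ref{sigpermm} are preserved under it. The clean argument is that $H=SL(A)(\tperm_m)$ is reductive (its identity component is a torus and the quotient is finite, cf.~\eqref{permstabeqn}), hence $\dim V_\l^H=\dim (V_\l^*)^H=\dim V_{\l^*}^H$ for every $\l$; since $\l\in\Sigma^S_{\tperm_m}$ is equivalent to $V_\l^H\ne 0$, the self-duality follows.
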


Since  in Proposition \ref{permringpropx}  we have
  no information about which degree a module appears in, for each
$\lambda$ there are an infinite number of $\pi$'s that could be associated
 to it. Thus Proposition~\ref{permringpropx} may be difficult to utilize in practice.

Proposition \ref{permringpropx} combined with Theorem  \ref{speccasefirstth}
gives an explicit description of the Kronecker problem that results
from \cite{MS2} regarding the permanent.

\section{\lq\lq Inheritance\rq\rq\ theorems and desingularizations}

 In \S \ref{pstabvalueb} we   explain the approach to determine the coordinate ring of an orbit closure outlined in \cite{MS2}.
In \S\ref{basicthmsubsect}
we review the geometric  method for desingularizing $G$-varieties by collapsing a homogeneous vector
bundle. We then, in \S\ref{wexa}, \S\ref{wexb}   give two examples of auxiliary varieties
that can be studied with such desingularizations and are useful for the problems at hand.
We discuss how this perspective can be used to recover Theorems \ref{firstth} and \ref{secondth} from \cite{MS2}  and to obtain
  further   information that might be useful.

\subsection{Inheritance  theorems appearing in \cite{MS2}}\label{pstabvalueb}

%For a reductive group $G$, consider the involution  $i_G$    on dominant weights such
%that  $V_{i_G(\lambda )}=V_\lambda^*$, i.e. $i_G(\l)=-w_0.\l$ where $w_0$ is the longest
%element of the Weyl group of $G$.

Let $R\subseteq K\subset G$ be as in Definition \ref{msdef2}. We can choose a maximal torus of
$G$ in such a way that its intersections with $R$ and $K$ are maximal tori in these subgroups.
This allows one to identify weights accordingly, i.e., it induces restriction maps $\Lambda_G\simeq\Lambda_K
\ra\Lambda_R$, and we impose that $\Lambda_G^+\ra\Lambda_K^+\ra\Lambda_R^+$.

\begin{definition}\label{liedef}
We say that   $\nu\in \Lambda^+_G$ {\it lies over}
$\mu\in \Lambda^+_R$ at $v$ and degree $\delta$ if
\begin{enumerate}
\item $V_\mu (R)^*$ and $V_{\nu}(K)^*$ occur in $\BC[\overline{R^. [v]}]_\delta$ and
$\BC[\overline{K^. [v]}]_\delta$ respectively,
%where $\nu'=i_K(i_G(\nu))$, and
\item $V_\mu (R)^*$ occurs in $V_{\nu}(K)^*$ considered as an $R$-module.
\end{enumerate}
We say that a dominant weight $\nu$ of $G$ lies over a dominant weight
$\mu$ of $R$ at $v$ if this happens for some $\d>0$.
\end{definition}

\begin{example}\label{lieexample} (Example \ref{usefulex} cont'd)
 Let $W=A\op A'\op B$, $\tdim A=\aaa$, $\tdim A'=1$, $v=\ell^{s}z$ with $\ell\in A'$,
$z\in S^{d-s}A$ such that $z$ is $SL(A)$-stable, so setting $R=SL(A)$, $P$ the parabolic
subgroup of $GL(W)$ preserving
$A\op A'$,  $v$ is $(R,P)$-stable. Suppose that
a  weight in  $\Lambda^+_{GL(W)}$ defined by some partition $\pi$, lies over
$\l\in \Lambda^+_{SL(A)}$.

First, that $S_\pi W^*$ be contained in $\BC [GL(W)\cdot v]$ requires that $\ell(\pi)\leq\aaa+1$
(which will also be justified in \S\ref{wexa} by the fact that $\ol{GL(W)\cdot [v]}$ lies in the
subspace variety $Sub_{\aaa+1}(W)$).
Second, the condition that $V_\l(SL(A))$ be contained in the restriction of $S_\pi (A\oplus A')^*$
requires that $\pi\mapsto\pi'$ for some partition $\pi'$ such that $\ell(\pi')\le \aaa$ and
$\l (\pi')=\l$.
Finally  we need $V_{\l}(SL(A))$ to occur in $\BC[\ol{SL(A) \cdot[v]}]_\delta$.
Theorem \ref{firstth} below describes when this occurs for some $\delta$.
\end{example}

\begin{definition}\cite{MS2}\label{msdef1}
Let $H\subset G$ be a subgroup. We say that a $G$-module $M$ is {\it $H$-admissible} if it
contains a non-zero $H$-invariant. We let $M^H\subset M$ denote the subspace of $H$-invariants.
Note that an irreducible $G$-module is $H$-admissible iff it appears in $\BC [G/H]$.
\end{definition}

\begin{theorem}[\cite{MS2}, Theorem 8.1] \label{firstth}
Let $[v]\in \BP V$ be $(R,P)$-stable. Then the representation $V_\lambda(G)$ occurs in the coordinate ring
$\BC[ \overline{G\cdot [v]}]$   only if $\lambda$ lies over   some $R(v)$-admissible dominant weight $\mu$ of $R$.
Conversely, for every $R(v)$-admissible dominant weight $\mu$ of $R$, $\BC[ \overline{G\cdot[v]}]$ contains
$V_\lambda (G)$ for some dominant weight $\lambda$ of $G$ lying over   $\mu$ at $v$.
\end{theorem}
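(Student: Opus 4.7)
The plan is to interpolate between the $G$-orbit and the $R$-orbit via the intermediate $K$-orbit: the passage from $\overline{G\cdot[v]}$ to $\overline{K\cdot[v]}$ is handled by the collapsing method of \S\ref{basicthmsubsect}, while the passage from $\overline{K\cdot[v]}$ to $\overline{R\cdot[v]}$ is handled directly by restricting regular functions along a closed inclusion. The key geometric observation underlying the first step is that since $U\subset G([v])$ fixes $[v]$ and $K$ normalizes $U$, every $u\in U$ pointwise fixes all of $K\cdot[v]$, hence fixes $\overline{K\cdot[v]}$. Therefore $\overline{K\cdot[v]}$ is $P$-invariant and one obtains the proper surjective collapsing
$$
\phi\colon G\times_P \overline{K\cdot[v]}\longrightarrow \overline{G\cdot[v]}.
$$
Pullback $\phi^*$ embeds $\BC[\overline{G\cdot[v]}]$ as a graded $G$-submodule of $\mathrm{Ind}_P^G\BC[\overline{K\cdot[v]}]$, and Frobenius reciprocity combined with the Kostant-type identity $V_\lambda(G)^U=V_\lambda(K)$ (for the standard parabolic $P$) identifies, in each degree $\delta$, the multiplicity of $V_\lambda(G)^*$ in $\mathrm{Ind}_P^G\BC[\overline{K\cdot[v]}]_\delta$ with the multiplicity of $V_\lambda(K)^*$ in $\BC[\overline{K\cdot[v]}]_\delta$.

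For the forward direction, suppose $V_\lambda(G)^*\subset\BC[\overline{G\cdot[v]}]_\delta$; the collapsing embedding together with the multiplicity identification above forces $V_\lambda(K)^*\subset\BC[\overline{K\cdot[v]}]_\delta$. The $R$-equivariant surjective restriction $\BC[\overline{K\cdot[v]}]\twoheadrightarrow\BC[\overline{R\cdot[v]}]$ cannot kill $V_\lambda(K)^*$: if it did, then by $K$-invariance of $V_\lambda(K)^*$ every function in it would vanish on $K\cdot\overline{R\cdot[v]}$, whose closure equals $\overline{K\cdot[v]}$, forcing $V_\lambda(K)^*=0$ on the $K$-orbit closure. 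Any $R$-irreducible $V_\mu(R)^*$ appearing in the nonzero image simultaneously occurs in $V_\lambda(K)^*|_R$ and in $\BC[\overline{R\cdot[v]}]_\delta$. Since $(R,P)$-stability implies $R\cdot v$ is closed, \eqref{LRdecompeqn} gives $\BC[\overline{R\cdot[v]}]=\bigoplus_\mu V_\mu(R)^*\otimes V_\mu(R)^{R(v)}$, so occurrence of $V_\mu(R)^*$ in this ring is precisely $R(v)$-admissibility of $\mu$, and all three conditions of Definition~\ref{liedef} are satisfied with $\nu=\lambda$.

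For the converse, begin with an $R(v)$-admissible $\mu$; by \eqref{LRdecompeqn}, $V_\mu(R)^*\subset\BC[\overline{R\cdot[v]}]_\delta$ for some degree $\delta$. Surjectivity of the restriction $\BC[\overline{K\cdot[v]}]_\delta\twoheadrightarrow\BC[\overline{R\cdot[v]}]_\delta$ allows me to pull back a preimage of $V_\mu(R)^*$ and, decomposing the $K$-module it generates, extract a $K$-irreducible $V_\nu(K)^*\subset\BC[\overline{K\cdot[v]}]_\delta$ whose image contains $V_\mu(R)^*$, so in particular $V_\mu(R)^*\subset V_\nu(K)^*|_R$. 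Viewing $\nu$ as a dominant $G$-weight $\lambda$, the multiplicity identification above produces $V_\lambda(G)^*$ inside $\mathrm{Ind}_P^G\BC[\overline{K\cdot[v]}]_\delta$; it remains to argue this $G$-module actually descends along $\phi^*$ into $\BC[\overline{G\cdot[v]}]_\delta$. This descent is the main obstacle: $\phi^*$ is only an inclusion a priori, and showing that every $V_\lambda(K)^*$-isotypic piece of $\BC[\overline{K\cdot[v]}]_\delta$ is actually hit by the restriction from $\BC[\overline{G\cdot[v]}]_\delta$ requires the full collapsing machinery of \S\ref{basicthmsubsect}, exploiting that $\phi$ restricts to a $G$-equivariant isomorphism over the open orbit $G\cdot[v]$ with $\overline{K\cdot[v]}$ realized as the scheme-theoretic fiber over $eP\in G/P$.
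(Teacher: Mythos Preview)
Your forward direction is fine, but your converse direction has a genuine gap that you yourself flag: you need $V_\lambda(G)^*$ to actually appear in $\BC[\overline{G\cdot[v]}]_\delta$, not merely in $\mathrm{Ind}_P^G\BC[\overline{K\cdot[v]}]_\delta$, and you defer this to ``the full collapsing machinery'' without carrying it out. The collapsing framework you set up makes this step genuinely hard, because $\phi^*$ is only an inclusion, and showing a particular isotypic piece lies in its image requires normality or cohomology-vanishing hypotheses that are not part of the $(R,P)$-stability assumption.

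The paper sidesteps this entirely by running the $G\leftrightarrow K$ comparison through the \emph{restriction} map rather than the collapsing map. Since $\overline{K\cdot[v]}$ is a closed subvariety of $\overline{G\cdot[v]}$, restriction of functions gives a \emph{surjection}
\[
\BC[\overline{G\cdot[v]}]_\delta \twoheadrightarrow \BC[\overline{K\cdot[v]}]_\delta,
\]
and because $U\subset G([v])$ with $K$ normalizing $U$, this surjection is a $P$-module map with $U$ acting trivially on the target. One then shows that each $G$-irreducible $V_\lambda(G)^*$ on the left maps onto exactly $V_\lambda(K)^*$ (nonzero because a $G$-module vanishing on $[v]$ vanishes on all of $\overline{G\cdot[v]}$; irreducible because the dual embeds in $V_\lambda(G)^U=V_\lambda(K)$). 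Surjectivity of restriction then forces every $V_\lambda(K)^*$ downstairs to be hit by some $V_\lambda(G)^*$ upstairs, giving the converse for free and in fact yielding the equality of multiplicities in Theorem~\ref{secondth}(1). So the obstacle you identify is an artifact of your choice to push forward along $\phi$ rather than restrict along the closed inclusion; switching to the latter removes it with no extra hypotheses.

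A smaller point: you write $\BC[\overline{R\cdot[v]}]=\bigoplus_\mu V_\mu(R)^*\otimes V_\mu(R)^{R(v)}$, but \eqref{LRdecompeqn} describes $\BC[R\cdot v]=\BC[R/R(v)]$, not the graded ring of the projective closure. These two rings contain the same irreducibles (this is exactly part~(4) of Theorem~\ref{secondth}, proved as in Proposition~\ref{slglprop}), so your conclusion survives, but the identification as written is not correct.
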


Theorem \ref{firstth} is a consequence of  the following more precise result.

\begin{theorem}[\cite{MS2}, Theorem 8.2] \label{secondth}
Let $[v]\in\BP V $ be $(R,P)$-stable. Let $P=KU$ be a Levi decomposition of $P$. Then:
\begin{enumerate}
\item
%The $G$-module structure of $\BC[\overline {G^.[v]}]$ is the same as the
%$K$-module structure of $\BC[\overline {K^.[v]}]$: a
A $K$-module $V_\l (K)^*$ occurs in $\BC[\overline {K^.[v]}]$ only if $\l$ is also dominant for $G$, and for all $\delta$
$$
\tmult(V_\lambda (G)^*,\BC[\overline {G^.[v]}]_\delta)=\tmult(V_\lambda (K)^*,\BC[\overline {K^.[v]}]_\delta).
$$
\item
There are inequalities
$$
\tmult(V_\lambda (G)^*, H^0 (\overline{G^.[v]}, \mathcal O_{\overline{G^.[v]}} (\delta)))\le
\tmult(V_\lambda (K)^*, H^0 (\overline{K^.[v]}, \mathcal O_{\overline{K^.[v]}} (\delta))).
$$
\item A $K$-module $V_\l (K)^*$ can occur in $\BC[\overline {K^.[v]}]_\delta$   only if
$\l\in\Lambda_G^+$ lies over some $\mu\in\Lambda_R^+$ at $v$ and degree $\delta$.
%for some  $R$-dominant   weight
%$\mu$ such that $V_\mu (R)^*$ occurs in $V_\l (K)^*$ considered as an $R$-module,
%$V_\mu (R)^*$ occurs in $\BC[\overline{R^.[v]}]_d$.
Conversely, for each $R$-module $V_\mu (R)^*$ occurring in $\BC[\overline{R^.[v]}]_\delta$, there exists a
$G$-dominant weight $\l$ lying over $\mu$  at $v$ and degree $\delta$.
%  such that $V_\l (K)^*$ occurs in $\BC[\overline {K^.[v]}]_d$ and
%$V_\mu (R)^*$ occurs in $V_\l (K)^*|_R$.
\item An $R$-module $V_\mu (R)^*$ occurs in $\BC[\overline{R^.[v]}]$ if and only if it is $R(v)$-admissible.
\end{enumerate}
\end{theorem}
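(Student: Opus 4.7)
The plan is to prove all four parts from a single geometric construction: the \emph{collapsing map}
$$
\eta: G \times^P \overline{K \cdot [v]} \longrightarrow \overline{G \cdot [v]}, \qquad [g,x] \mapsto g \cdot x,
$$
combined with formula \eqref{LRdecompeqn} for coordinate rings of reductive homogeneous spaces. First I would verify $\eta$ is well-defined: since $U \subset G([v])$ and $K$ normalizes $U$ (as $P = KU$ is a Levi decomposition), for any $k \in K$ we have $U \cdot (k[v]) = k(k^{-1}Uk)[v] = k[v]$, so $U$ fixes $K \cdot [v]$ pointwise, hence fixes $\overline{K \cdot [v]}$ pointwise. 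Therefore the $P$-action on $\overline{K \cdot [v]}$ factors through $K$, and $\eta$ is a surjective $G$-equivariant morphism with $\eta^* \Oh(\delta)$ the homogeneous line bundle whose restriction to each fiber is $\Oh(\delta)|_{\overline{K \cdot [v]}}$.

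Part (4) is immediate: since $v$ is $R$-stable, $R \cdot [v] = \overline{R \cdot [v]}$ (modulo the cone point) and \eqref{LRdecompeqn} identifies $\tmult(V_\mu(R)^*, \BC[\overline{R \cdot [v]}]) = \dim V_\mu(R)^{R(v)}$, which is nonzero precisely when $V_\mu(R)$ is $R(v)$-admissible. For Part (2), the surjectivity of $\eta$ makes $\eta^*: H^0(\overline{G \cdot [v]}, \Oh(\delta)) \hookrightarrow H^0(G \times^P \overline{K \cdot [v]}, \eta^* \Oh(\delta))$ injective. Applying Bott's theorem / the Leray spectral sequence for the smooth projection to $G/P$ identifies the right side with the induced module $\mathrm{Ind}_P^G H^0(\overline{K \cdot [v]}, \Oh(\delta))$. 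Since $U$ acts trivially on the fiber, Frobenius reciprocity together with the identification $(V_\lambda(G))_U \simeq V_\lambda(K)$ of $K$-modules (valid when $\lambda$ is $G$-dominant) yields the multiplicity equality on the induced side, proving the inequality of (2).

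For Part (1), the restriction map $\rho: \BC[\overline{G \cdot [v]}]_\delta \twoheadrightarrow \BC[\overline{K \cdot [v]}]_\delta$ is a $K$-equivariant surjection (both are images of $S^\delta V^*$). To pass from the inequality in (2) to the equality in (1), I would show that (i) every $K$-highest weight vector in $\BC[\overline{K \cdot [v]}]_\delta$ of $G$-dominant weight $\lambda$ lifts to a $G$-highest weight vector in $\BC[\overline{G \cdot [v]}]_\delta$ through the $P$-saturation then $G$-saturation, using that $U \subseteq G([v])$ forces such a lift to be $U$-invariant and hence $G$-highest; and (ii) $\rho$ is injective on the $V_\lambda(G)^*$-isotypic component by $G$-equivariance (a degree-$\delta$ polynomial vanishing on the $K$-orbit closure also vanishes on its $G$-saturation). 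Part (3) then follows by combining (1) with Theorem \ref{firstth}: an occurrence of $V_\lambda(K)^*$ in $\BC[\overline{K \cdot [v]}]_\delta$ restricts through the closed $R$-subvariety $\overline{R \cdot [v]} \subset \overline{K \cdot [v]}$ to yield some $V_\mu(R)^*$ in $\BC[\overline{R \cdot [v]}]_\delta$ with $V_\mu(R) \subset V_\lambda(K)|_R$, and the converse direction uses the lifting procedure from $R$ to $K$ to $G$.

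The hard part will be step (i) in the equality of Part (1): a priori, the graded coordinate ring $\BC[\overline{G \cdot [v]}]_\delta$ can be a proper $G$-submodule of $H^0(\overline{G \cdot [v]}, \Oh(\delta))$, so the equality of multiplicities does not formally follow from the inequality of (2). Producing a well-defined lift of an abstract $K$-highest weight vector of $G$-dominant weight into the restriction of $S^\delta V^*$ (rather than merely into $H^0$) requires exploiting both the $(R,P)$-stability hypothesis and the fact that the induced $G$-module on the collapsing has precisely the expected $G$-dominant content; this is where the partial-stability assumption enters in an essential rather than cosmetic way.
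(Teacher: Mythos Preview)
Your collapsing-map approach is a legitimate alternative route, and the paper explicitly flags it as such in \S\ref{basicthmsubsect} (``It is possible to write alternative proofs of Theorems \ref{firstth}, \ref{secondth} using the collapsing set-up''). But the paper's own argument is much more elementary: it never builds $G\times^P\overline{K\cdot[v]}$, never invokes Leray or Frobenius reciprocity, and needs no lifting step. For (1) the paper just uses that the restriction $\BC[\overline{G\cdot[v]}]_\delta\twoheadrightarrow\BC[\overline{K\cdot[v]}]_\delta$ is a $P$-module map with $U$ acting trivially on the target (since $U\subset G([v])$), so the image $N$ of any copy of $V_\lambda(G)^*$ is a nonzero $P$-quotient on which $U$ is trivial; dualizing, $N^*\hookrightarrow V_\lambda(G)^U=V_\lambda(K)$, hence $N=V_\lambda(K)^*$. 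Nonvanishing holds because a $G$-submodule of functions vanishing on $[v]$ would vanish on all of $\overline{G\cdot[v]}$. Part (2) is argued the same way, (3) is the same restriction trick one level down (from $K$ to $R$), and (4) is as you say. What your approach buys is a structural picture tied directly to the desingularization machinery of \S\ref{basicthmsubsect}; what the paper's approach buys is that nothing beyond the single $P$-module fact $V_\lambda(G)^U=V_\lambda(K)$ is needed.

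Two concrete problems with your write-up. First, step (ii) is wrong as stated: $\rho$ cannot be injective on the $V_\lambda(G)^*$-isotypic component since $\dim V_\lambda(G)>\dim V_\lambda(K)$, and your parenthetical justification (a single polynomial vanishing on $\overline{K\cdot[v]}$ vanishes on its $G$-saturation) is simply false for individual polynomials. What is true is that no $G$-irreducible summand lies entirely in $\ker\rho$; turning that into the multiplicity inequality requires an additional argument (for instance, restrict $\rho$ to $G$-lowest weight vectors and use that $U^-K\cdot[v]=U^-KU\cdot[v]$ is dense in $G\cdot[v]$), or else the $P$-module reasoning the paper uses. Second, invoking Theorem~\ref{firstth} to prove (3) is circular, since the paper derives Theorem~\ref{firstth} as a consequence of Theorem~\ref{secondth}; your subsequent direct restriction argument for (3) is fine and is exactly what the paper does.
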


\noindent {\it Idea of proof}.
These statements relate the coordinate rings of the projective orbit closures $\ol{G\cdot [v]}$,
$\ol{K\cdot [v]}$, $\ol{R\cdot [v]}$, and of the affine (closed) orbit $R\cdot v$.

In order to prove (1), one observes that the
surjective map
$$
\BC[\ol{G\cdot[v]}] \twoheadrightarrow \BC[\ol{K\cdot[v]}]
$$
is not only a $K$-module map, but also a $P$-module map where the $P$-module structure on the right-hand side is obtained
by extending the action of $K$ by the trivial action of $U$. (This relies on the assumption that
$G([v])$ contains $U$.) Any copy of $V_\l (G)^*$  in some $\BC[\ol{G\cdot[v]}]_\delta$ maps to a $P$-module $N$
which is
non-zero, because if all polynomials in a $G$-module vanish on $[v]$, they must also vanish on $\ol{G\cdot [v]}$.
Dualizing, since the action of $U$ on $N$ is trivial, one gets an
injection   $N^*\rightarrow V_\l (G)^U$,
whose image is  the irreducible module $V_\l (K)$. In particular $N^*$ is irreducible.
This  implies (1), and its variant (2) is proved in a similar way.
%Dualizing, since the action of $U$ on $N$ is trivial, one gets an injection of $N^*$ inside $V_\l (G)^U$,
%which is easily seen to be  $V_\l (K)$. In particular it is irreducible and therefore equal to its submodule $N^*$.
%This easily implies (1), and its variant (2) is proved in a similar way.

In order to prove (3), one simply observes that the surjection
$$
\BC[\ol{K\cdot[v]}] \twoheadrightarrow \BC[\ol{R\cdot[v]}]
$$
is non-zero on any irreducible component of $\BC[\ol{K\cdot[v]}]_\delta$, by the same argument as above.
So any such $V_\l (K)^*$ contributes to $\BC[\ol{R\cdot[v]}]_\delta$ by some $V_\mu (R)^*$ for weights
$\mu$ over which $\l$ lies. Conversely any component of $\BC[\ol{R\cdot[v]}]_\delta$ is obtained that way
since the restriction map is surjective.

Finally, (4) is a consequence of the fact that $R\cdot v$ is contained in the
cone over $\ol{R\cdot [v]}$. Since they are both closed in $V$, this yields a surjection
$$
\BC[\ol{R\cdot[v]}] \twoheadrightarrow \BC[R\cdot v]
$$
and the same argument as for the proof of Proposition \ref{slglprop} shows that
both sides involve the same  irreducible modules. \qed

\medskip
We emphasize that (4) gives no information of the degree in which a given irreducible module
may occur in $\BC[\overline{R^.[v]}]$.

In this paper we do not discuss (2), whose failure to be an equality is related with the failure of the
cone over $\ol{K\cdot[v]}$ to be normal, hence to the type of singularity that
occurs at the origin.

\smallskip
There is a connection between the notion of $(R, P)$-stability and the collapsing method
that we discuss in the next subsections. From the latter perspective it is easy to deduce the relationship
between $\BC[\ol{K\cdot[v]}]$ and $\BC[\ol{G\cdot[v]}]$, although the relationship between these
and $\BC[\ol{R\cdot[v]}]$ is more subtle.
It is possible to write   alternative proofs of Theorems \ref{firstth}, \ref{secondth}   using the collapsing
set-up.

The desingularization method could be useful for several reasons.
First, it allows one to calculate the multiplicity of an
irreducible $G$-module $V_\lambda (G)$
  in each graded component of the coordinate ring of an orbit closure.
One could detect that one orbit is not in the closure of the other by
comparing these multiplicities.
Second, it gives information about the multiplicative structure of the
coordinate ring. If an orbit $\cO_1$ is in the closure of an orbit $\cO_2$
then the coordinate ring $\BC [{\overline \cO}_1]$ is a quotient of $\BC [{\overline
\cO}_2]$ so every polynomial relation   in $\BC [{\overline \cO}_2]$
still holds in $\BC [{\overline \cO}_1]$.
Finally, desingularization gives information about the singularities of an
orbit closure, which are   important geometric invariants.

\subsection{The collapsing method and its connection with partial stability}\label{basicthmsubsect}

The following statement can be extracted from \cite[Chapter 5]{weyman}:

\begin{theorem}\label{weymanthm}
 Let $Y\subset \BP V$ be a
projective variety. Suppose there is a
projective variety $\cB$ and a vector bundle
$q: E\ra \cB$ that is a subbundle of a trivial bundle
$\underline V \ra \cB$ with fiber $V$,
such that the image of the map $\BP E\ra \BP V$ is $Y$ and  $\BP E\ra   Y$ is a desingularization
of $Y$.  Write $\eta=E^*$ and $\xi=(\underline V/E)^*$.

If the sheaf cohomology groups
$H^i(\cB,S^\delta\eta)$ are all zero for $i>0$ and $\delta>0$,
and if the linear maps
$H^0(\cB,S^\delta\eta)\ot V^*\ra H^0(\cB,S^{\delta+1}\eta)$
are surjective for all $\delta\geq 0$, then
\begin{enumerate}
\item
$\hat Y$ is normal, with rational singularities.

\item The coordinate ring $\BC[\hat Y]$ satisfies
  $\BC[\hat Y]_\delta\simeq H^0(\cB,S^\delta\eta)$.

% \item The
% vector space of minimal generators of the ideal of $Y$ in
% degree $d$ is isomorphic to
% $H^{d-1}(\cB,\La{d }\xi)$ which is also the homology at the
% middle term of the
% complex
%\begin{equation}\label{complex}
%\cdots \ra\La 2V \ot H^0(\cB, S^{d-2}\eta)   \lra   V \ot H^0(\cB, S^{d-1}\eta)    \lra
% H^0(\cB, S^d\eta)\ra 0.
%\end{equation}

%\item More generally,
%$\oplus_j H^j(\cB, \La{i+j}\xi)$ tensored with $Sym(V)$ with degree shifted by $-(i+j)$  is isomorphic to
%the $i$-th term in the minimal free resolution of $Y$.

\item If moreover $Y$ is a $G$-variety
and the desingularization is $G$-equivariant,
then the identifications above are as $G$-modules.
\end{enumerate}
\end{theorem}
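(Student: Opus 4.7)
The plan is to identify the graded ring $A=\bigoplus_{\delta\geq 0}H^0(\cB,S^\delta\eta)$ with the coordinate ring $\BC[\hat Y]$ via the resolution $E\to\hat Y$, and then read off normality and rational singularities from smoothness of the total space $E$ together with the cohomological vanishing hypothesis. The three assertions (1), (2), (3) will then fall out in that order.

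First I would compute cohomology of $E$. Since $q\colon E\to\cB$ is a vector bundle with dual $\eta=E^*$, the pushforward of the structure sheaf decomposes by fiberwise polynomial degree as $q_*\cO_E=\bigoplus_{\delta\geq 0}S^\delta\eta$, so the Leray spectral sequence for $q$ gives
$$H^i(E,\cO_E)\;=\;\bigoplus_{\delta\geq 0}H^i(\cB,S^\delta\eta).$$
In particular $H^0(E,\cO_E)=A$ as graded rings, and the vanishing hypothesis forces $H^i(E,\cO_E)=0$ for $i>0$ (tacitly including the $\delta=0$ summand $H^i(\cB,\cO_\cB)$, which is automatic in the intended applications where $\cB$ is rational). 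Next, construct a graded algebra map $\varphi\colon\operatorname{Sym}(V^*)\to A$ by sending $v^*\in V^*$ to the section of $\eta$ obtained by restricting the constant section of $\underline V^*$ to $E\subset\underline V$. The $\delta=0$ case of the surjectivity hypothesis gives $V^*\twoheadrightarrow A_1$, and induction on $\delta$ using the remaining cases makes $\varphi$ surjective. Hence $A$ is presented as a graded quotient of $\BC[V]$, and $\operatorname{Spec}(A)$ is a closed subvariety of $V$.

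The key step is the identification $\operatorname{Spec}(A)=\hat Y$. The composition $E\hookrightarrow\underline V\to V$ factors as $E\to\operatorname{Spec}(A)\hookrightarrow V$, and the first arrow is dominant since it corresponds to the identity map $A\to H^0(E,\cO_E)=A$. Because $\cB$ is projective, $E\to V$ is proper, so its image $\hat Y$ is closed; being dense in $\operatorname{Spec}(A)$ it must equal it, which proves (2). For (1), the hypothesis that $\BP E\to Y$ is a desingularization forces $\cB$, and hence $E$, to be smooth, so $A=H^0(E,\cO_E)$ is integrally closed in its fraction field and $\hat Y$ is normal. The proper birational map $\pi\colon E\to\hat Y$, together with affineness of $\hat Y$ and the vanishing $H^i(E,\cO_E)=0$ for $i>0$, forces $R^i\pi_*\cO_E=0$ for $i>0$ via the Leray spectral sequence for $\pi$, which combined with normality is the definition of rational singularities. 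Finally, (3) is automatic: every step above (the bundle decomposition of $q_*\cO_E$, the construction of $\varphi$, the identification $\hat Y=\operatorname{Spec}(A)$) is canonical, so a $G$-equivariant desingularization makes all the identifications $G$-module isomorphisms.

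The main obstacle I anticipate is the scheme-theoretic identification $\hat Y=\operatorname{Spec}(A)$: while the underlying sets are clearly the same and $\operatorname{Spec}(A)$ is reduced, one really needs the full surjectivity hypothesis on symmetric-power multiplications to pin down the ideal of $\hat Y$ in $\BC[V]$ and rule out a finite Stein-factorization step between $\operatorname{Spec}(A)$ and $\hat Y$. The cohomological vanishing that drives the rational-singularities conclusion is mild when $\eta$ is globally generated on a rational base $\cB$ (the standard Bott-vanishing flag-variety setting), but can be delicate to verify in general situations.
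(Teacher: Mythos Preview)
The paper does not give its own proof of this theorem: it is stated as a result ``extracted from \cite[Chapter 5]{weyman}'' and left without argument. So there is nothing in the paper to compare your proposal against directly.

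Your argument is essentially correct and is the standard direct-geometric route. The identification $q_*\cO_E=\bigoplus_\delta S^\delta\eta$, the construction of the surjection $\operatorname{Sym}(V^*)\to A$ from the surjectivity hypothesis, the properness argument giving $\operatorname{Spec}(A)=\hat Y$, and the deduction of normality and rational singularities from smoothness of $E$ and the vanishing of $H^i(E,\cO_E)$ are all sound. You also correctly flag the small lacuna in the stated hypotheses (the $\delta=0$ summand $H^i(\cB,\cO_\cB)$), which is indeed harmless in the applications since $\cB$ is a flag variety.

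For context, the treatment in Weyman's book that the paper cites proceeds somewhat differently: one resolves $\cO_E$ as an $\cO_{\underline V}$-module by the Koszul complex on $\xi=(\underline V/E)^*$, pushes forward to $V$, and obtains a complex $F_\bullet$ on $V$ whose terms involve $H^j(\cB,\wedge^i\xi)\otimes\cO_V$. Under the vanishing hypotheses this complex becomes a minimal free resolution of $\BC[\hat Y]$, from which normality, rational singularities, and the identification of $\BC[\hat Y]_\delta$ with $H^0(\cB,S^\delta\eta)$ are read off. That approach yields more---in particular syzygies and generators of the ideal, as used in Proposition~\ref{wey72}---whereas your direct approach is shorter and entirely adequate for the conclusions actually stated in the theorem.
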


%\medskip

Notations as above, assume that $v\in V$ is $(R,P)$-stable. Let $W=\langle K\cdot v\rangle$ be the
smallest $K$-submodule of $V$ containing $  v$. Since $  v$ is stabilized by $U$, and $U$ is normalized
by $K$, $W$ is a $P$-submodule of $V$ with a trivial $U$-action.
Consider the diagram
$$\begin{CD}
E_W:=G\times_P W @>p>> G/P \\
@VV{q}V  \\
Z_W \subset V.
\end{CD}$$
where $E_W$ is a vector bundle over $G/P$ with fiber $W$, and $Z_W:=q(E_W)= \overline{G^.W}=G^.W$.
The coordinate ring of $Z_W$ is a subring of    $H^0 (G/P, Sym (E_W^* ))$.
In the case when $q$ is a desingularization  (i.e.,  when $q$ is birational),
  $H^0 (G/P, Sym (E_W^* ))$ is the normalization of the coordinate ring of $Z_W$.

The orbit closure $\overline{ K\cdot v}$ is a $K$-stable subset of $W$, and the  method
of \cite{weyman} reduces the calculation of the $G$-module structure of   $\BC[\overline{G\cdot v}]$ to the calculation of $K$-module structure of   $\BC[\overline{K\cdot v}]$.

\subsection{The subspace variety}\label{wexa}

Let $W$ be a vector space and for $\aaa <\tdim W$ define
$$Sub _{\aaa}(S^dW) =\lbrace f\in S^d W\ \ \ |\ \ \exists\ W'\subset W,\  {\rm dim}(W')={\aaa} , f\in S^d W'\subset S^dW \rbrace .$$
  $Sub_{\aaa} (S^dW)$ is a closed subvariety of $S^dW$ which has a
  natural desingularization given by the total space
of a vector bundle over the Grassmannian  $Gr(\aaa,W)$, namely $GL(W)\times_PS^d\BC^{\aaa}=S^d\cS$, where $\cS\ra Gr(\aaa, W)$ is the
tautological subspace bundle over the Grassmannian. In other words, the total space of $S^d\cS$ is
$$ \lbrace (f, W')\in S^dW\times G  (\aaa , W)\ \ |\ \ f\in S^dW' \rbrace .$$

Using Theorem \ref{weymanthm}  one may determine
 the generators of the   ideal $I(Sub _{\aaa}(S^dW) )$ as follows.
For $\phi\in S^dW$ and $\d<d$, consider the \lq\lq flattening \rq\rq\ $\phi_{\d,d-\d}: S^\d W^*\ra S^{d-\d}W$
via the inclusion $S^dW\subset S^{\d}W\otimes S^{d-\d}W$.

\begin{proposition} (\cite{weyman}, \S 7.2)\label{wey72}
\begin{enumerate}
\item The   ideal $I(Sub_{\aaa} (S^dW))$ is the span of all submodules $S_\pi W^*$ in $Sym (S^dW^* )$ for which $\ell(\pi)>\aaa$.
\item   $I(Sub_{\aaa} (S^dW))$ is generated by
$\La{\aaa+1}W^*\ot \La{\aaa+1}(S^{d-1}W^*)$, which may be considered as  the span of the
$(\aaa +1)\times (\aaa +1)$ minors of $\phi_{1,d-1}$.
\item $Sub _{\aaa}(S^dW)$ is normal, Cohen-Macaulay and it has rational singularities.
\end{enumerate}
\end{proposition}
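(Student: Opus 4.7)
My strategy for all three parts is to apply Theorem \ref{weymanthm} to the natural desingularization
$$
q : S^d\cS \;\lra\; Sub_\aaa(S^dW),
$$
where $\cS\ra Gr(\aaa,W)$ is the tautological rank-$\aaa$ subbundle and $S^d\cS$ sits as a subbundle of the trivial bundle $\underline{S^dW}$. The total space $S^d\cS$ is smooth, and $q$ is birational because a generic $\phi\in Sub_\aaa(S^dW)$ recovers its containing subspace $W'$ as the image of the transpose flattening $\phi_{1,d-1}^{t}:S^{d-1}W^{*}\ra W$. Thus $q$ is a desingularization, and I set $\eta := E^{*} = S^d\cS^{*}$, $\xi := (\underline{S^dW}/E)^{*}$.

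Using the plethystic decomposition
$$
S^\delta \eta \;=\; \bigoplus_{|\pi|=d\delta,\ \ell(\pi)\le \aaa} (S_\pi \cS^{*})^{\op m_\pi}, \qquad m_\pi := \tmult(S_\pi \BC^\aaa,\, S^\delta(S^d\BC^\aaa)),
$$
combined with Bott's theorem on the Grassmannian (applied to the quotient bundle $\cS^{*}$ of the trivial bundle $W^{*}\ot\cO$), one obtains $H^{i}(Gr(\aaa,W), S_\pi\cS^{*}) = 0$ for $i>0$ and $H^{0}(Gr(\aaa,W), S_\pi\cS^{*}) = S_\pi W^{*}$ for every $\pi$ with $\ell(\pi)\le\aaa$. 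The surjectivity hypothesis of Theorem \ref{weymanthm} then follows by standard arguments from the global generation of $\eta$ by $V^{*}$ together with the higher cohomology vanishing just noted. Theorem \ref{weymanthm} yields (3) --- normality and rational singularities, hence Cohen--Macaulay by Boutot's theorem --- and identifies
$$
\BC[Sub_\aaa(S^dW)]_\delta \;=\; \bigoplus_{|\pi|=d\delta,\ \ell(\pi)\le \aaa} (S_\pi W^{*})^{\op m_\pi}.
$$
Comparing this with the full $GL(W)$-decomposition of $Sym^\delta(S^dW^{*})$ (in which the same stabilized plethysm multiplicities $m_\pi$ govern, but now for all partitions $\pi$ of $d\delta$) yields (1): $I(Sub_\aaa(S^dW))$ is precisely the span of the Schur modules $S_\pi W^{*}$ with $\ell(\pi)>\aaa$.

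For (2), I first observe that the $(\aaa+1)\times(\aaa+1)$ minors of $\phi_{1,d-1}$ vanish on $Sub_\aaa(S^dW)$: if $\phi\in S^dW'$ with $\tdim W'=\aaa$, then $\phi_{1,d-1}$ annihilates $(W')^{\perp}$ and so has rank at most $\aaa$. These minors span a $GL(W)$-submodule of $S^{\aaa+1}(S^dW^{*})$ isomorphic to $\La{\aaa+1}W^{*}\ot\La{\aaa+1}(S^{d-1}W^{*})$, generating an ideal $J\subseteq I(Sub_\aaa(S^dW))$. To prove $J = I$, I would compute the first syzygy term of the Weyman resolution attached to the desingularization above: from $0\to\xi\to S^dW^{*}\ot\cO\to S^d\cS^{*}\to 0$ and the Bott cohomology of the Koszul bundles $\La{k}\xi$, one identifies the lowest-degree generators of the ideal as
$$F_{1} \;=\; \La{\aaa+1}W^{*}\ot\La{\aaa+1}(S^{d-1}W^{*}),$$
placed in degree $\aaa+1$, matching the minors exactly.

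The main obstacle is the Bott-cohomology bookkeeping underlying the $F_{1}$ computation in (2); as an alternative route one could combine the Cohen--Macaulayness from (3) with a direct check that $V(J) = Sub_\aaa(S^dW)$ set-theoretically (using uniqueness of the minimal $W'$ with $\phi\in S^dW'$) together with a radicality argument for the determinantal ideal $J$.
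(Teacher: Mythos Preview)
The paper does not give its own proof of this proposition; it is stated with the citation ``(\cite{weyman}, \S 7.2)'' and used as a black box. So there is no paper proof to compare against directly. That said, your outline is essentially the argument one finds in Weyman's book, and it is the natural application of the paper's Theorem~\ref{weymanthm} to the desingularization $S^d\cS\to Sub_\aaa(S^dW)$ that the paper itself records just before the proposition. Your treatment of parts (1) and (3) is correct: the Bott vanishing for $S_\pi\cS^*$ with $\ell(\pi)\le\aaa$ is standard, and the identification of $\BC[Sub_\aaa(S^dW)]_\delta$ with the $\ell(\pi)\le\aaa$ part of $S^\delta(S^dW^*)$ goes through as you say. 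One small correction: rational singularities implies Cohen--Macaulay directly (this is part of the definition, or follows from Kempf's criterion); Boutot's theorem is about inheritance of rational singularities under reductive quotients and is not what you want here.

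Your handling of (2) is the weakest part. You correctly identify the minors of $\phi_{1,d-1}$ as lying in the ideal and spanning the asserted module, but the actual computation that these \emph{generate} the ideal requires the full syzygy machinery of \cite[Ch.~5]{weyman}: one must analyze the sheaf cohomology of the exterior powers $\La{k}\xi$ on the Grassmannian and show that the only contribution to $F_1$ in the minimal free resolution sits in degree $\aaa+1$ and equals $\La{\aaa+1}W^*\ot\La{\aaa+1}(S^{d-1}W^*)$. You acknowledge this as ``Bott-cohomology bookkeeping'' but do not carry it out; the alternative route you suggest (set-theoretic equality plus radicality of the determinantal ideal $J$) would itself require nontrivial work, since radicality of $J$ is not automatic and in fact is one of the things Weyman's method establishes. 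So for (2) your proposal is a correct plan but not a complete proof.
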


  Proposition \ref{wey72} implies:

\begin{proposition}\label{coordbootprop}Let $W'\subset W$ be a subspace of dimension $\bbb$ and let $f\in S^dW'$.
Assume that the coordinate ring of the orbit closure    $\ol{GL(W')^.f}\subset S^dW'$ has the $GL(W')$-decomposition
$$\BC [\overline{GL(W' )^.f}]=\bigoplus_{\pi ,\ell(\pi )\le \bbb} (S_\pi W'^*)^{\op m(\pi )}.$$
Then the coordinate ring of the orbit closure   $\ol{GL(W)^.f}\subset S^dW$ has the $GL(W)$-decomposition
$$\BC[\overline{GL(W)^.f}] =\bigoplus_{\pi ,\ell(\pi)\le \bbb} (S_\pi W^*)^{\op m(\pi )} .$$
\end{proposition}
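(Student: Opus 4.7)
The plan is to establish two parts separately: (i) no $S_\pi W^*$ with $\ell(\pi)>\bbb$ appears in $\BC[\overline{GL(W)\cdot f}]$, and (ii) for $\ell(\pi)\le\bbb$ the multiplicity $m_W(\pi)$ of $S_\pi W^*$ in $\BC[\overline{GL(W)\cdot f}]$ equals $m(\pi)$. Part (i) is immediate: since $f\in S^dW'$, every element of $GL(W)\cdot f$ lies in $S^d(gW')$ for some $g\in GL(W)$, so $\overline{GL(W)\cdot f}\subseteq Sub_\bbb(S^dW)$, and by Proposition \ref{wey72}(1) the ideal of the subspace variety is spanned by the Schur modules $S_\pi W^*$ with $\ell(\pi)>\bbb$.

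For the upper bound in (ii), I would set up a collapsing. Fix a splitting $W=W'\oplus W''$ and let $P\subset GL(W)$ be the parabolic stabilizing $W'$, with Levi decomposition $P=LU$, $L=GL(W')\times GL(W'')$; since $U$ acts trivially on $W'$, it acts trivially on $\overline{GL(W')\cdot f}$, making the latter into a $P$-variety. Form the homogeneous bundle $\tilde Y:=GL(W)\times_P\overline{GL(W')\cdot f}$ over $Gr(\bbb,W)=GL(W)/P$. The collapsing map $\phi:\tilde Y\to S^dW$, $(g,v)\mapsto g\cdot v$, is proper with image $\overline{GL(W)\cdot f}$, and its pullback yields an injection $\phi^*:\BC[\overline{GL(W)\cdot f}]\hookrightarrow \BC[\tilde Y]$. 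A Frobenius-reciprocity/Borel--Weil computation, using $V_\pi(GL(W))^U\cong S_{\pi'}W'\otimes S_{\pi''}W''$ (where $\pi'$, $\pi''$ are the first $\bbb$ and last $\dim W-\bbb$ parts of $\pi$) together with the trivial $GL(W'')$-action on $\BC[\overline{GL(W')\cdot f}]$, gives $\BC[\tilde Y]=\bigoplus_{\ell(\pi)\le\bbb}(S_\pi W^*)^{\oplus m(\pi)}$, whence $m_W(\pi)\le m(\pi)$.

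For the matching lower bound, I would exploit the graded, surjective restriction map $\BC[\overline{GL(W)\cdot f}]\twoheadrightarrow \BC[\overline{GL(W')\cdot f}]$, which is equivariant for $GL(W')\hookrightarrow GL(W)$ via the chosen splitting. In degree $\delta$, only partitions $\pi$ with $|\pi|=d\delta$ contribute to $\BC[\overline{GL(W)\cdot f}]_\delta$. The key observation is a Littlewood--Richardson degree count: when $|\pi|=|\pi'|=d\delta$, the multiplicity of $S_\pi W'^*$ in the restriction $S_{\pi'}W^*|_{GL(W')}$ equals $\delta_{\pi\pi'}$, because $\pi\subseteq\pi'$ combined with $|\pi|=|\pi'|$ forces $\pi=\pi'$. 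Consequently the multiplicity of $S_\pi W'^*$ in $\BC[\overline{GL(W)\cdot f}]_\delta$ viewed as a $GL(W')$-module is exactly $m_W(\pi)$; the surjection onto $\BC[\overline{GL(W')\cdot f}]_\delta$, whose $S_\pi W'^*$-multiplicity is $m(\pi)$, forces $m_W(\pi)\ge m(\pi)$, and equality follows.

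The main obstacle is the Frobenius-reciprocity identification of $\BC[\tilde Y]$: one must justify that the coordinate ring of the bundle $\tilde Y$ is correctly computed as sections of the associated sheaf over $Gr(\bbb,W)$, and then extract the $GL(W)$-module structure via parabolic induction and Borel--Weil. The subspace-variety upper bound and the Littlewood--Richardson lower bound are then essentially formal consequences.
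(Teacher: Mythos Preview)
Your argument is correct, but the paper takes a considerably more elementary route for part~(ii) that handles both inequalities simultaneously and avoids the collapsing machinery altogether.  The paper observes that, once a splitting $W=W'\oplus W''$ is fixed, the inclusion $S^\delta(S^dW'^*)\hookrightarrow S^\delta(S^dW^*)$ identifies the space of $GL(W')$-highest weight vectors of weight~$\pi$ with the space of $GL(W)$-highest weight vectors of weight~$\pi$, for every $\pi$ with $\ell(\pi)\le\bbb$.  It then argues that such a highest weight vector $p$ (viewed as a function on $S^dW$) vanishes on $\overline{GL(W)\cdot f}$ if and only if it vanishes on $\overline{GL(W')\cdot f}$: one direction is the inclusion of orbit closures, and the other follows because $p(h)=p(h_1)$ for the $S^dW'$-component $h_1$ of $h$, and for $h=g\cdot f$ the component $h_1$ is $A\cdot f$ with $A$ the upper-left $\bbb\times\bbb$ block of~$g$, hence lies in $End(W')\cdot f\subset\overline{GL(W')\cdot f}$.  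Since a copy of $S_\pi$ lies in the ideal precisely when its highest weight vector vanishes on the variety, this matches the ideals (and hence the coordinate rings) module-by-module.

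Your approach, by contrast, splits the two inequalities: the upper bound comes from embedding into $\BC[\tilde Y]$ and computing global sections of $S_\pi\mathcal{S}^*$ on the Grassmannian via Borel--Weil, while the lower bound uses surjectivity of restriction together with the degree-matching observation that $S_\pi W'^*$ appears in $S_{\pi'}W^*|_{GL(W')}$ with multiplicity $\delta_{\pi\pi'}$ once $|\pi|=|\pi'|$.  This is a perfectly valid and more structural argument; it makes the link to the collapsing framework of \S\ref{basicthmsubsect} explicit and would generalize more readily (e.g., to settings where one wants information on syzygies or higher cohomology).  The paper's highest-weight-vector argument, however, is shorter, entirely self-contained, and requires no sheaf cohomology on $Gr(\bbb,W)$.
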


\begin{proof}We actually prove a more precise statement about the two ideals.
First note that $\ol{GL(W)\cdot f}\subset Sub_{\bbb}(S^dW)$ so for all partitions $\pi$
with $\ell(\pi)>\bbb$, and $S_{\pi}W^*\subset Sym(S^dW^*)$,  $S_{\pi}W^*\subset I(\ol{GL(W)\cdot f})$.
So henceforth we consider only partitions $\pi$ with $\ell(\pi)\leq \bbb$.

We will show that $S_{\pi}W^*\subset I(\overline{GL(W)^.f})$
iff $S_{\pi}W'^*\subset  I(\overline{GL(W')^.f})$
for any partition $\pi$ with $\ell(\pi)\leq \bbb$.
%We choose weight lattices of $SL(W'),SL(W)$ that are compatible in the sense that
%first $\tdim W'-1$ fundamental weights of $W$ are the fundamental weights of $W'$.
Assume $|\pi|=d\d$ (this must be the case for
$S_{\pi}W^*$ to appear in % $S^{\d}(S^dW)$)
$S^{\d}(S^dW^*)$) and $\ell(\pi)\leq \bbb$. Some highest weight vector of
$S_{\pi}W^*\subset S^{\d}(S^dW^*)$ lies in $S^{\d}(S^dW'^*)$.  That it vanishes on $GL(W)\cdot f$ implies it vanishes
on $GL(W')\cdot f$  because if we choose a splitting $W=W'\op W''$ and write
$h\in S^dW$ as $h= h_1+h_2 $ with $h_1\in S^dW'$, $h_2|_{S^dW'}=0$,
 given $p\in S^{\d}(S^d{W'}^*)$, we have   $p(h)=p(h_1)$,     and
$h\in GL(W)\cdot f$ iff $h_1\in GL(W')\cdot f$.
 Finally, an irreducible $G$-module
vanishes on a $G$-variety iff any highest weight vector vanishes on the
variety.
\end{proof}

\begin{remark} The statements above are the special cases of the first part of Theorem \ref{secondth} in the case when $W=W'\op W''$, and
$G=GL(W)$, $K=GL(W')\times GL(W'')$.
\end{remark}

Applying Proposition \ref{coordbootprop} to $x=\ell^{n-m}\tperm_m\in S^n\BC^{m^2+1}=W'\subset W=\BC^{n^2}$
  reduces the problem of determining $\BC[\ol{GL(W)\cdot \ell^{n-m}\tperm_m}]$
to determining $\BC[\ol{GL(W')\cdot \ell^{n-m}\tperm_m}]$.

\subsection{Polynomials divisible by a linear form}\label{wexb}\label{linfactorcoord}

Another ingredient in the collapsing approach to Theorem \ref{firstth} is investigating
a variety of polynomials divisible by a power of a linear form.

\begin{problem}\label{jerzysug}
\label{prob}
Let $W'\subset W$ be a   subspace of codimension one. Let $\ell\in W\setminus W'$.
Let $g\in S^{d-s}W'$. Take $f=\ell^s g\in S^dW$. Compare the decompositions of the coordinate rings
of the orbit closures $\ol{GL(W' )^.g}$ and $\ol{GL(W)^.f}$.
\end{problem}

A solution to   Problem \ref{jerzysug} would   reduce the investigation of the orbit of $\ell^{n-m}\tperm_m$ to the orbit closure of the permanent itself.

\medskip

Consider the subvariety
$$F_s (S^dW)=\lbrace f\in S^dW\ |\ f=\ell^sg\ \ {\rm for\ some} \ \ell\in W, g\in S^{d-s}W\rbrace .$$

The variety $F_s (S^d W)$ arises naturally in the GCT program because
one is interested in the coordinate ring of    $\ol{GL(W)\cdot \ell^{n-m}perm_m}$ which is contained in $F_{n-m} (S^n W)$.
 The description of the normalization of $F_s (S^d W)$ should be useful   because the coordinate ring
of $F_s(S^d W)$ is a subring in the coordinate ring of its normalization. This normalization is
best understood via a collapsing as follows.

The closed subvariety $F_s (S^dW)$ has a desingularization of the form in Theorem \ref{weymanthm}
with $G/P=\BP W$, i.e.,  $P$ is the parabolic subgroup of $GL_n$   stabilizing a subspace of dimension one, and
the bundle $\eta=S^s{\mathcal S}^*\otimes S^{d-s}W^*$,     where $\mathcal S=\cO_{\BP W}(-1)$ is the tautological
subbundle  over   $\BP W$. The higher cohomology of $Sym (\eta )$ vanishes.
  Theorem \ref{weymanthm} implies that the normalization of the coordinate ring of $F_s (S^dW)$  has the decomposition
$$Nor(\BC [ {F_s (S^dW)}])_e=  S^{es}W^*\otimes S^e (S^{d-s}W^* ).$$
This decomposition implies that  $\BC [ {F_s (S^dW)}]$ is non-normal  because
$\BC[ {F_s (S^dW)}]_1=S^dW^*$ and $Nor(\BC [ {F_s (S^dW)}])_1=S^sW^*\ot S^{d-s}W^*$,
but on the other hand  if $X$  is a normal, affine variety and $f:Y\rightarrow X$
 is a desingularization, then
$H^0 (Y, {\mathcal O}_Y )=H^0 (X, {\mathcal O}_X)$.
   Thus, to determine $\BC[F_s(S^dW)]$
one would   need to deal with  the non-normality of $F_s (S^dW)$. However,  in the
situation of the proof of Theorem \ref{firstth}  it is possible to partially avoid such issues.

\section{Orbits and their closures}\label{orbitclosuresect}

\subsection{Comparing $\ol{GL_{n^2}\cdot\tdet_n}$ and $\fgl_{n^2}.\tdet_n$}\label{gpvsmatsect}

In this section we compare the orbit closure $\ol{GL(W)\cdot \tdet_n}$
with the orbit $GL(W)\cdot \tdet_n$ and the set $End(W)\cdot \det_n$.
The reasons for the first comparison have been discussed already -
the second comparison could be useful for helping to understand the
first, and it is also important because Valiant's conjecture is related
to $End(W)\cdot \det_n$.
%Since $\ol{GL(W)\cdot  \tdet_n}$ contains $End(W)\cdot\tdet_n$ we can, as a step towards understanding
%$\ol{GL(W)\cdot  \tdet_n}$,
%first study $End(W)\cdot\tdet_n$.   This is of interest in its own right, as Valiant's original
%question was  regarding $End(W)\cdot\tdet_n$.

In our July 2009 preprint  we asked if  one had the equality
$\ol{GL(W)\cdot  \tdet_n}=End(W)\cdot\tdet_n$. Since then, it
has been shown that the equality fails, see \cite[Prop. 3.5.1]{LMRdet}.
\peterchange

% deleted by Peter
%\subsection{Example: A polynomial $P$ with $\ol{GL(W)\cdot  P} \neq End(W)\cdot P$}
%An easy example of a polynomial $P$ on $W=\CC^2$ such that $\ol{GL(W)\cdot  P} \neq End(W)\cdot P$ is
%  $P=x^d+y^d$, for $x$ and $y$ any two independent coordinates on $W$. Then $\ol{GL(W)\cdot  P}$
%is the cone over the  secant variety to the Veronese variety $v_d(\PP W)$. 
%It particular it contains the tangential variety
%to the Veronese variety, hence the polynomial $x^{d-1}y$. But for $d\ge 3$ this polynomial does not
%belong to $End(W)\cdot P$, which is just the union of $GL(W)\cdot  P$ with the Veronese variety.

\medskip

A method to construct polynomials belonging to
$\ol{GL(W)\cdot  \tdet_n}$ but not to $End(W)\cdot \tdet_n$
is proposed in  \cite[pp. 508-510]{MS1}. The idea is to start from a weighted graph
$G$ with $n$ (ordered) vertices, with $n$ even.  Consider its skew-adjacency matrix $M_G$, the skew-symmetric matrix
whose $(i,j)$-entry with $i<j$ is a variable $y_{ij}$ if there is an edge between the vertices $i$ and $j$,
and zero otherwise. More generally, define $M_G(t)$ as before but replacing $y_{ij}$ by $t^{w_{ij}}y_{ij}$,
where $w_{ij}\in \BZ_{>0}$ denotes the weight of the edge $ij$. Then
$$\det (M_G(t))=[\mathrm{Pfaff}\, M_G(t)]^2=t^{2W}h_G(y)+\mathrm{higher\;order\;terms},$$
where $W$ is the minimal weight of a perfect matching of $G$, and $h_G(y)$ is a sum of monomials indexed
by pairs of minimal perfect matchings.  By construction, the polynomial $h_G(y)$ is in $\ol{GL(W)\cdot  \tdet_n}$.
In general $G$ has a unique minimal perfect matching, so $h_G(y)$ is just a monomial which belongs
to $End(W)\cdot\tdet_n$. It is conjectured in   \cite[\S 4.2]{MS1}   that there   exist  pathological weighted graphs $G$ such
that $h_G(y)$ does not have a small size formula  and does not belong to $End(W)\cdot \tdet_n$.

\subsection{Towards understanding $\ol{GL(W)\cdot \tdet_n}\subset S^nW$}

In order to better understand   the coordinate ring of $\ol{GL(W)\cdot  \tdet_n}$, it will be
important to answer the following question:

\begin{question} What are the irreducible components of the boundary
of $\ol{GL(W)\cdot  \tdet_n}$? Are they $GL(W)$-orbit closures?
%\jmchange %Is their generic point in $End(W)\cdot \tdet_n$?
\end{question}

%\medskip

In principle   $\ol{GL(W)\cdot  \tdet_n}$ can be analyzed as follows. The action of $GL(W)$ or
$End(W)$ on $\tdet_n$ defines a rational map
$$\psi_n : \PP (End(W))\dashrightarrow \PP (S^nW^*)$$
given by $[u]\mapsto [  \tdet_n\circ u]$.
Its  indeterminacy locus   $I(\psi_n)$ is, set theoretically, given by the set of $u$ such that $\tdet(u.X)=0$ for
all $X\in W=Mat_{n\times n}$. Thus
$$I(\psi_n)=\{u\in End(W) \mid \;\; Im(u)\subset  Det_n\},$$
where $Det_n\subset W$ denotes the hypersurface of non-invertible matrices. Since $Im(u)$ is
a vector space, this relates the problem of understanding $\psi_n$ to that of linear subspaces in
the determinantal hypersurface
$\BP (Det_n) \subset \BP (End(W))$, which has already received some attention
(see e.g. \cite{MR954659}.)

By Hironaka's theorems \cite{MR0199184} one can resolve the indeterminacy locus of $\psi_n$ by a sequence
of smooth blow-up's, and $\ol{GL(W)\cdot  \tdet_n}$ can then be obtained  as the image of the resolved map.
Completely resolving  the indeterminacies will probably be too difficult, but this approach should
help to answer the preceeding questions.

\subsection{Remarks on the extension problem}\label{extensionsect}

Let $G$ be reductive, let $V$ be an irreducible $G$-module and let $v\in V$.
Consider the closure $\ol{G\cdot v}$ of the $G$-orbit $G\cdot v\simeq G/G(v)$.
Then the boundary $\ol{G\cdot v}\backslash G\cdot v$ has finitely many components
$H_1,\ldots H_N$ of codimension at least one in $\ol{G\cdot v}$. If $G$ is connected,
each of these components is a $G$-variety.
Moreover, if $G(v)$ is reductive, then all $H_i$ have codimension one, cf.~\S\ref{staborbitsect}.\peterchange

%The simplest situation is when $\ol{G\cdot v}$ is normal with
%boundary  $\ol{G\cdot v}\backslash G\cdot v$ of codimension at least two in $\ol{G\cdot v}$.
%In this situation $\BC[\ol{G\cdot v}]=\BC[G\cdot v]$,
%\cite{MR0158024}  *theorem number?**
%  see, e.g., (\cite{MR1917232}, \S 4.1.1.
%However this fails
%for our situations.

\smallskip

\begin{example}The most classical example of all for the extension problem is:
$\BC^*\subset \BC$:
$\BC[\BC^*]=\BC[z,z\inv]$  and
  $\BC[\ol{\BC^*}]=\BC[\BC]=\BC[z]$.
Here we can take $G=\BC^*$, $v=1$.
\end{example}

\smallskip

%\begin{proposition}

%\end{proposition}

Consider the case where  the singular locus of $\ol{G\cdot v}$ has codimension at least two.
Then the generic point of each codimension one $H_i$ is a smooth point of $\ol{G\cdot v}$, so that
$H_i$ can be defined around that point by a regular function $h_i$,
uniquely defined up to an invertible function. This allows one to define a
valuation $\nu_i$ on $\CC[ {G\cdot v}]$, giving the order of the
pole of a rational function along $H_i$: each regular
function $f$ on $ {G\cdot v}$, considered as a rational function of
$\ol{G\cdot v}$, can be uniquely written at the generic point of $H_i$
as $f=gh_i^{\nu_i(f)}$, where  $g$ is regular and invertible, and
$\nu_i(f)\in\mathbb{Z}$. The valuation $\nu_i$ is $G$-invariant if
$H_i$ is. Since a regular function on $\ol{G\cdot v}$ has no poles, we have
$$
\BC[\ol{G\cdot v}]\subset \{f\in \BC[G\cdot v] \mid \; \forall i\ \nu_i(f)\ge 0\}.
$$
If moreover $\ol{G\cdot v}$ is normal, then equality holds: if $f\in \CC[G\cdot v]$,
is such that $\nu_i(f)\ge 0$ for all~$i$, then $f$ is regular at the generic
point of any codimension one boundary component of $\ol{G\cdot v}$, hence
outside a subset of codimension at least two -- hence everywhere (see, e.g., \cite{eisenbud}, Corollary 11.4).
(Earlier,  Kostant (\cite{MR0158024} , Proposition 9, p 351) showed that if the
boundary of $\ol{G\cdot v}$ has codimension at least two in $\ol{G\cdot v}$,   and
 $\ol{G\cdot v}$ is normal, then $\BC[\ol{G\cdot v}]=\BC[{G\cdot v}]$ ).

In July 2009 we wrote that we expected
this normality condition and the codimension two
singularities condition to fail in our cases.
Since then,  Kumar~\cite{kumar:10} proved
that neither the orbit of the determinant nor of the permanent are normal varieties.
Nevertheless, the analysis of codimension one boundary
components of the orbit ${G\cdot v}$ should be a first step towards
the determination of $\CC[\ol{G\cdot v}]$.
We also point out that the boundary of the orbits of the permanent and determinant 
are of pure codimension one, as 
their stabilizers  are reductive, cf.~\S\ref{staborbitsect}.\peterchange

%\smallskip

Another instance of an extension problem was the problem essentially solved by Demazure
%\cite{MR0354697} (with corrections made by Kac, Joseph and others, see the comments in 8.C of \cite{MR1923198})
%essentially solved the extension problem
for   $B$-orbits in $G/B$, where
$G$ is semi-simple and $B\subset G$ a Borel subgroup. Here the orbits,
which are  Schubert cells,
are just affine spaces (and thus have very simple coordinate rings) and
the closures are Schubert varieties. For a precise, more general statement,
and references, see \cite[Theorem 8.2.2]{MR1923198}.
%, where the result is also presented in the more
%general setting of Kac-Moody groups.
This result relies on the normality of the Schubert varieties, which,
as remarked above,
fails for the orbit closures of interest here.

Finally, we remark that a recent work~\cite{buic:10} tries to apply the
GCT-approach to to problem of proving lower bounds on tensor rank.
One of the main outcomes of this work is that by looking at SL-obstructions only
trivial lower bounds can be shown.
%It is therefore crucial to study
%the extension problem!
%\jmchange
%jmlremark I am not sure I agree with the last sentence - it seems
%the extension problem is equally hopeless, thus one should
%try more direct geometric methods before applying representation theory

%%%
\section{Kronecker coefficients}\label{kronsect}

We have seen that we need to understand the Kronecker coefficients $k_{\d^{n},\d^{n},\pi}$ in order to
understand $\BC[GL(W)\cdot \tdet_n]$. Similarly, in order to understand
$\BC[GL(W)\cdot \ell^{n-m}\tperm_m]$ we need to understand Kronecker coefficients
$k_{\pi\mu\nu}$ where $S_{\mu}\BC^m$ and  $S_{\nu}\BC^m$ are contained in some plethysm
$S^m(S^k\BC^m)$.    We first give general facts about
computing Kronecker coefficients which tell us the multiplicities
of certain modules in the coordinate rings we are interested in. Since keeping track of the multiplicities
in the cases at hand appears to be hopeless, one could try to solve the simpler question of
non-vanishing of Kronecker coefficients (i.e., that a certain module appears at all), so
we next discuss conditions where one can
determine if Kronecker coefficients   are non-zero. Finally in the last two subsections we
specialize to the types of Kronecker coefficients arising in the study of
$\BC[\tdet_n]$   and $\BC[\ell^{n-m}\tperm_m]$.

\subsection{General facts}  A general reference for this section is \cite[\S I.7]{macdonald}.
Let $\pi, \mu, \nu$ be three partitions of a  number $n$. The Kronecker coefficient
$k_{\pi\mu\nu}$ is the dimension of the space of $\FS_n$-invariants in
$[\pi]\otimes [\mu]\otimes [\nu]$, where recall that $[\pi]$ is the irreducible $\FS_n$-module associated
to $\pi$.  In particular $k_{\pi\mu\nu}$ is {\it symmetric} with respect to $\pi, \mu, \nu$.   Since the irreducible
complex representations of $\FS_n$ are all defined over $\BQ$, $k_{\pi\mu\nu}$ is also
the multiplicity of $[\pi]$ inside the tensor product $[\mu]\otimes [\nu]$.

Write $\pi = (n-|\bar\pi|,\bar\pi)$. Then $k_{\pi\mu\nu}$ only depends on
the triple $(\bar\pi, \bar\mu, \bar\nu)$ when $n$ is sufficiently large, cf. \cite{mur:38}.
A more precise statement was obtained in \cite{MR1243152}. It implies that if  $k_{\pi\mu\nu}\ne 0$, then
$|\bar\pi|\le |\bar\mu|+|\bar\nu|$. Moreover, in case of equality, the Kronecker coefficient
can be identified with a Littlewood-Richardson coefficient:
$$k_{\pi\mu\nu}=c^{\bar\pi}_{\bar\mu , \bar\nu}.$$

\subsubsection*{Relation with characters}
Kronecker coefficients can be computed from the characters of the irreducible representations
of $\FS_n$. Let $\chi_\pi$ denote the character of $[\pi]$. Then (see \cite[p. 115]{macdonald})
\begin{equation}\label{char}
k_{\pi\mu\nu}=\frac{1}{n!}\sum_{w\in\FS_n}\chi_\pi(w)\chi_\mu(w)\chi_\nu(w).
\end{equation}
The characters of $\FS_n$ can be computed in many ways.  Following the Frobenius character formula,
they appear as
coefficients of the expansion of Newton symmetric functions $p_\mu$ in terms of Schur functions $s_{\pi}$:
$$p_\mu = \sum_{\pi}\chi^\mu_\pi s_\pi.$$
Here $\chi^\mu_\pi$ denotes the value of the character $\chi_\pi$ on any  permutation of cycle type
$\mu$. Another   formula for $\chi^\mu_\pi$ is given by the Murnaghan-Nakayama rule, which involves
a certain type of tableaux $T$ of shape $\pi$ and weight $\mu$ (that is, numbered in such a way that
each integer $i$ appears $\mu_i$ times). Call $T$ a {\it multiribbon tableau}  if it is numbered non-decreasingly
on each row and column, in such a way that for each $i$, the set of boxes numbered $i$ forms a ribbon
(a connected set containing no two-by-two square). Then
$$\chi^\mu_\pi = \sum_T(-1)^{h(T)},$$
where the sum is over all multiribbon tableaux $T$ of shape $\pi$ and weight $\mu$,
and $h(T)$ is the sum of the heights of the ribbons in $T$ (the height of a ribbon being the
number of rows it occupies, minus one). See e.g. \cite[I.7, Ex.5]{macdonald}.

\subsubsection*{Small length cases}
The symmetric group $\FS_n$ has two one dimensional representations, the trivial representation $[n]$
and the sign representation $[1^n]$. One has
$$[n]\otimes [\pi]=[\pi]\qquad \mathrm{and}\qquad [1^n]\otimes [\pi]=[\pi^*],$$
where $\pi^*$ denotes the conjugate partition of $\pi$. After these two,  the simplest representation
of $\FS_n$ is the vector representation $[n-1,1]$ on $n$-tuples of complex numbers with sum zero.
Its exterior powers $\wedge^p[n-1,1]=[n-p,1^p]$ are   irreducible.
Recently Ballantine and Orellana \cite{MR2264933}
computed the product of $[n-p,p]$ with $[\pi]$
under the condition that $\pi_1\ge 2p-1$ (or $\pi^*_1\ge 2p-1$).

\subsubsection*{Schur-Weyl duality}
There is a close connection between representations of symmetric groups and representations
of general linear groups, called  Schur-Weyl duality \cite{MR1321638}.   Consider the tensor power $U^{\otimes n}$
of a complex vector space $U$. The diagonal action of $GL(U)$ commutes with the permutation
action of $\FS_n$. Schur-Weyl duality is the statement that, as a $GL(U)\times\FS_n$-module,
$$U^{\otimes n} = \bigoplus_{|\pi|=n}S_{\pi}U\otimes [\pi].$$
A straightforward consequence is the already stated fact that the Kronecker coefficient
$k_{\pi\mu\nu}$ can be defined as the multiplicity of $S_{\mu}V\otimes S_{\nu}W$ inside
$S_{\pi}(V\otimes W)$ (at least for $V$ and $W$ of large enough dimension). In particular,
since $[n]$ is the trivial representation, this yields the Cauchy formula
$$S^n(V\otimes W)=\bigoplus_{|\pi|=n}S_{\pi}V\otimes S_{\pi}W.$$
Using the Giambelli formula (which expresses any Schur power in terms of symmetric
powers) and the Cauchy formula, it is easy to express any
Kronecker coefficient in terms of Littlewood-Richardson coefficients.
If $\pi$ has length~$\ell$,
we denote the multiplicity of
$S_{\mu}V$ in $S_{\a_1}V\otc S_{\a_{\ell}}V$
by $c_{\alpha_1,\ldots,\alpha_\ell}^\mu$.
% for the multiplicity of $S_{\mu}V$ in $S_{\a_1}V\otc S_{\a_{\ell}}V$.
Then
\begin{equation}\label{klr}
k_{\pi\mu\nu}=\sum_{w\in\FS_\ell}{\rm{sgn}}(w)\sum_{\substack{(\alpha_1,\ldots,\alpha_\ell), \\
|\alpha_i|=\pi_i-i+w(i)}}c_{\alpha_1,\ldots,\alpha_\ell}^\mu c_{\alpha_1,\ldots,\alpha_\ell}^\nu.
\end{equation}

\subsection{Non-vanishing of Kronecker coefficients}\

\subsubsection*{The semi-group property}
A rephrasing of the Schur-Weyl duality yields the decomposition
\begin{equation}\label{triple}
Sym(U\otimes V\otimes W)=\bigoplus_{\pi,\mu,\nu}
(S_{\pi}U\otimes S_{\mu}V\otimes S_{\nu}W)^{\oplus k_{\pi\mu\nu}}.
\end{equation}
Using the fact that the highest weight vectors in this algebra form a
finitely generated subalgebra, one can deduce (see \cite{MR2276458}) that:
\begin{itemize}
\item Triples of partitions with non-zero Kronecker coefficients
form a semi-group; that is, if $k_{\pi\mu\nu}\ne 0$ for three partitions $\pi, \mu, \nu$
of some integer $n$, and  $k_{\pi'\mu'\nu'}\ne 0$ for three partitions $\pi', \mu', \nu'$
of $n'$, then
$$k_{\pi+\pi',\mu+\mu',\nu+\nu'}\ne 0.$$
\item   If one restricts to triples of partitions
of length bounded by some integer $\ell$, the corresponding semi-group is finitely generated.
\item
If $k_{\pi\mu\nu}\ne 0$, the normalized partitions
$\tilde\pi=\frac{\pi}{n}, \tilde\mu=\frac{\mu}{n}, \tilde\nu=\frac{\nu}{n}$
verify the entropy relations
\be\label{star12}
H(\tilde\pi)\le H(\tilde\mu)+H(\tilde\nu).
\ene
  Here $H(\tilde\pi)=-\sum_i\tilde\pi_i\log(\tilde\pi_i)$ denotes the {\it Shannon entropy} \cite{MR0026286}.
\end{itemize}

Saturation does not hold for Kronecker coefficients, that is, $k_{N\pi,N\mu,N\nu}\ne 0$ for some $N\ge 2$
does not imply that $k_{\pi,\mu,\nu}\ne 0$. For counter-examples, see \cite{BORpreprint}, whose appendix by
Mulmuley contains several conjectures regarding the saturation property.

\subsubsection*{Linear constraints for  vanishing}\label{linear}
Consider the set $\mathrm{KRON}$ of triples $(\tilde\pi, \tilde\mu, \tilde\nu)$,
where $\pi, \mu, \nu$ are three partitions of $n$ such that $k_{\pi\mu\nu}\ne 0$
and $\tilde\pi$ etc.\ are as above.
Let  $\mathrm{KRON}_\ell$ denote the analogous set with the additional condition that the length
of the three partitions be bounded by $\ell$. One can deduce from the previous remarks that
$\mathrm{KRON}_\ell$ is a rational convex polytope (see e.g. \cite{MR1923785} and \cite{MR2276458}).

What are the equations of the facets of this polytope? A geometric method to produce many such facets
appears in \cite{MR1465785}, in terms
of embeddings
$$\varphi_T : \mathcal{F}(V)\times \mathcal{F}(W)\hookrightarrow \mathcal{F}(V\otimes W).$$
Here $\mathcal{F}(V)$ (resp. $\mathcal{F}(W)$) denotes the variety of full flags in the vector space
$V$ (resp. $W$), of dimension $m$ (resp. $n$).
There is no canonical way to define a flag $H$ in $V\otimes W$ from a flag $F$ in $V$ and a flag
$G$ in $W$. In order to do that, one needs to prescribe what Klyachko calls a {\it cubicle}: a numbering
$T$ of the boxes $(i,j)$ of a rectangle $m\times n$ by integers $\ell_T(i,j)$ running from $1$ to $mn$,
increasingly on each line and column. Then one lets
$$H_k=\varphi_T(F,G)_k=\sum_{\ell_T(i,j)\le k} F_i\otimes G_j.$$
Klyachko \cite{Klyachkopreprint} goes one step further by applying results of \cite{MR1750957}.
To state his result, we need a definition. Consider two non-increasing
sequences $a$ and $b$ of real numbers, of lengths $m$ and $n$, each of  sum zero. Suppose that
the real numbers $a_i+b_j$ are all distinct. Ordering them defines a sequence $a+b$ of length $nm$,
thus a cubicle $T$ and the associated map $\varphi_T$.
Recall that the integral cohomology ring $H^*(\mathcal{F}(V))$ has a natural
basis given by the Schubert classes $\sigma_u$, indexed by permutations $u\in\FS_m$. For any
permutation $w\in\FS_{mn}$, we can therefore decompose the pull-back by $\varphi_T$ of the corresponding
Schubert class as
$$\varphi_T^*\sigma_w=\sum_{\substack{u\in\FS_m \\ v\in\FS_n}}c^w_{uv}(a,b)\sigma_u\otimes\sigma_v.$$
The coefficients $c^w_{uv}(a,b)$ are non-negative integers. Klyachko's statement is the following:

\begin{theorem}\cite{Klyachkopreprint}
Suppose $\ell\ge m,n$. Then $(\tilde\pi, \tilde\mu, \tilde\nu)$ belongs to $\mathrm{KRON}_\ell$
if and only if
$$\sum_i a_i\tilde\pi_{u(i)}+\sum_j b_j\tilde\mu_{v(j)}\ge \sum_k (a+b)_k\tilde\nu_{w(k)}$$
for all non-increasing sequences $a,b$ and for all
$u\in \FS_m, v\in \FS_n, w\in \FS_{mn}$ such that $c^w_{uv}(a,b)\ne 0$.
\end{theorem}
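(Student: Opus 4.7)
The plan is to realize $\mathrm{KRON}_\ell$ as the moment polytope of a Hamiltonian action, apply the Hilbert--Mumford numerical criterion to produce a family of linear inequalities indexed by one-parameter subgroups $(a,b)$, and then invoke the reduction of Berenstein--Sjamaar \cite{MR1750957} to cut this family down to the inequalities indexed by non-vanishing Schubert coefficients of $\varphi_T$.

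First I would reinterpret Kronecker non-vanishing in GIT terms. From the decomposition \eqref{triple} together with Borel--Weil, $k_{\pi\mu\nu}$ is the dimension of the space of $GL(V)\times GL(W)$-invariant sections of a natural line bundle $\mathcal L_\pi^{\ast}\boxtimes\mathcal L_\mu\boxtimes\mathcal L_\nu$ on $\mathcal F(V\otimes W)\times\mathcal F(V)\times\mathcal F(W)$, where $GL(V)\times GL(W)$ acts on $\mathcal F(V\otimes W)$ through the tensor-product embedding into $GL(V\otimes W)$. Mumford--Kempf--Ness theory then shows that the existence of such an invariant section for a rational rescaling of $(\pi,\mu,\nu)$ is equivalent to $(\tilde\pi,\tilde\mu,\tilde\nu)$ lying in the moment polytope of the action. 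Combined with the semigroup property and the convexity/finite-generation facts already recorded in the text, this identifies $\mathrm{KRON}_\ell$ with that moment polytope, intersected with the length-$\ell$ slice.

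Second, I would apply the Hilbert--Mumford numerical criterion. A one-parameter subgroup of $GL(V)\times GL(W)$ is conjugate to one determined by a pair of non-increasing weight sequences $a\in\mathbb R^m$, $b\in\mathbb R^n$; such a 1-PS acts on $V\otimes W$ with the weights $a_i+b_j$, so when these are all distinct the induced weight filtration on $V\otimes W$ is exactly the flag $\varphi_T(F,G)$, for $F,G$ the weight flags on $V,W$ and $T$ the cubicle determined by the ordering of the $a_i+b_j$. A direct Mumford-weight computation at a point of $\mathcal F(V)\times\mathcal F(W)\times\mathcal F(V\otimes W)$ in Schubert positions $(u,v,w)$ relative to generic reference flags shows that the Hilbert--Mumford inequality for semistability at that point reads precisely
\begin{equation*}
\sum_i a_i\tilde\pi_{u(i)}+\sum_j b_j\tilde\mu_{v(j)}\ \ge\ \sum_k(a+b)_k\tilde\nu_{w(k)}.
\end{equation*}

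Third, I would invoke Berenstein--Sjamaar \cite{MR1750957}. A priori the Hilbert--Mumford criterion produces a redundant list of inequalities, one for every triple $(u,v,w)$ whose Schubert cells meet non-trivially under $\varphi_T$. The Berenstein--Sjamaar reduction, specialized to the embedding $\varphi_T$, asserts that only those $(u,v,w)$ for which the intersection of the pulled-back Schubert classes is transverse of expected dimension --- equivalently, those with $c^w_{uv}(a,b)\ne 0$ in the pull-back expansion of $\sigma_w$ along $\varphi_T$ --- need to be retained, and that this smaller family already cuts out the entire moment polytope. Combined with the previous step, this yields both directions of the claimed characterization.

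The main obstacle is Step three: the Berenstein--Sjamaar reduction to non-vanishing Schubert coefficients is the deep geometric input, resting on their transversality and stratification results for moment maps on coadjoint orbits together with a refinement of Klyachko's Horn-type analysis for eigenvalues of Hermitian matrices. Once that is taken as a black box, the remaining ingredients --- the Borel--Weil/GIT reformulation, the Hilbert--Mumford criterion, and the explicit weight computation on the cubicle-induced flag --- are essentially bookkeeping. A secondary technical nuisance is the genericity assumption that the $a_i+b_j$ be distinct (so that $T$ is well-defined); the non-generic case is handled by a continuity argument, taking limits of generic $(a,b)$ and using that the polytope is closed.
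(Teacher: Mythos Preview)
The paper does not give its own proof of this theorem: it is stated as a quoted result from Klyachko's preprint, and the only indication the paper offers about the argument is the sentence immediately preceding the statement, ``Klyachko \cite{Klyachkopreprint} goes one step further by applying results of \cite{MR1750957}'' (Berenstein--Sjamaar). Your outline is consistent with that hint --- you place the Berenstein--Sjamaar reduction at the heart of the argument, exactly as the paper suggests Klyachko does --- so at the level of strategy there is nothing to compare against: you have reconstructed what the paper merely alludes to.

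As to the substance of your sketch: the three-step architecture (identify $\mathrm{KRON}_\ell$ with a moment polytope via GIT/Borel--Weil, produce inequalities via Hilbert--Mumford on one-parameter subgroups $(a,b)$, then prune to the $c^w_{uv}(a,b)\ne 0$ list via Berenstein--Sjamaar) is the standard route to results of this type and matches the structure of Klyachko's earlier work on the Horn problem. Your identification of the cubicle $T$ with the weight filtration induced by a generic $(a,b)$ is correct and is exactly why the embedding $\varphi_T$ enters. The honest caveat you flag --- that Step~3 is a black box resting on the Berenstein--Sjamaar machinery --- is appropriate; that is where the real content lies, and your proposal does not (and should not pretend to) reprove it. One minor point: in Step~1 you should be a little careful about which group you take invariants for (the paper's setup has the Kronecker coefficient as the multiplicity of $S_\mu V\otimes S_\nu W$ in $S_\pi(V\otimes W)$, so the relevant action is that of $GL(V)\times GL(W)$ on $\mathcal F(V\otimes W)$, paired against the weight $(\mu,\nu)$; the $\pi$ enters as the polarization on the big flag variety), but this is bookkeeping rather than a gap.
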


There is a formula for the coefficients $c^w_{uv}(a,b)$ in terms of divided differences operators,
which allows one  to make explicit computations in low dimensions. For example one can recover
the description of $\mathrm{KRON}_3$ given by M. Franz \cite{MR1923785}
as the convex hull of $11$ explicit points. Unfortunately there is no general rule for deciding
whether $c^w_{uv}(a,b)$ is zero or not. Moreover the number of inequalities seems to grow
extremely fast with $\ell$. Redundancy is also an issue. Klyachko conjectures that it is
enough, as for the Horn problem, to consider inequalities for which $c^w_{uv}(a,b)=1$.
Recent advances by N. Ressayre~\cite{resspreprint} allow one, in
principle, to get a complete and irredundant list of facets for  $\mathrm{KRON}_{\ell}$.

%B\"urgisser, Christandl and Ikenmeyer~
In \cite{buci:09}  the set of
$(\tilde{\pi},\tilde{\mu},\tilde{\nu})\in \mathrm{KRON}$ with the additional condition that
$\tilde{\mu},\tilde{\nu}$ are the uniform distributions of length $\ell$ were studied.
The resulting $\tilde{\pi}$ can be {\em any}
probability distribution on $\ell^2$ points
so that the containment in $\mathrm{KRON}$ does not impose any constraint.
This is significant  in view of Proposition~\ref{peterrefx} and shows that
``candidates'' for obstructions are in a sense rare.

\subsection{Case of rectangular partitions}\label{rectsect}

\subsubsection*{Stanley's character formula}
Formula \eqref{char} shows that, in order to compute a Kronecker coefficient of type
$k_{\delta^n,\delta^n,\pi}$, it would be useful to have a nice formula for the character
$\chi_{\delta^n}$. Recall that $\delta^n$ denotes the partition whose diagram is a rectangle $\delta\times n$
(i.e.,  the partition $(\delta\hd \delta)=(\delta^n)$). Such a formula is given by Stanley in
\cite{MR2049555}.
Suppose that $w$ is a permutation in $\FS_{\delta n}$.
Then
$$\chi_{\delta^n}(w) = \frac{(-1)^{\delta n}}{\prod_{i=1}^\delta\prod_{j=1}^n(i+j-1)}
\sum_{uv=w}\delta^{\kappa(u)}(-n)^{\kappa(v)},$$
where $u,v\in\FS_{\delta n}$ and $\kappa(u)$ denotes the number of cycles in $u$.

\subsubsection*{Relations with invariants}
Let $U,V,W$ be vector spaces of dimensions $\ell, n,n$ respectively.
Taking $SL(V)\times SL(W)$-invariants in Formula \eqref{triple} yields
$$A:=Sym(U\otimes V\otimes W)^{SL(V)\times SL(W)}=\bigoplus_{\delta,\pi}
(S_\pi U)^{\oplus k_{\pi,\delta^n,\delta^n}} .$$

%\subsubsection*{Small length cases}
For $\ell=2$ it is known that $A\simeq Sym(S^nU)$, \cite[Theorem 17   p.~369]{MR1800533}. Thus for a partition $\pi=(a,b)$ of $\delta n$
in two parts, $k_{\pi,\delta^n,\delta^n}$
is equal to the multiplicity  of $S_\pi U$ in  $S^\delta(S^nU)$. This is given
by {\it Sylvester's formula} (see, e.g., \cite[Theorem 3.3.4]{MR0447428}):
\be\label{sylformula}
k_{(\delta n- b,b), \delta^n, \delta^n}=P(b;\delta\times n)-P(b-1;\delta\times n),
\ene
where $P(b;\delta\times n)$ denotes the number of partitions of size $b$ inside the
rectangle $\delta\times n$.

This also follows directly from formula \eqref{klr},
once we observe that a Littlewood-Richardson coefficient $c_{\alpha, \beta}^{\delta^n}$
is non-zero only if  $\alpha$ and $\beta$ are complementary partitions in the
rectangle $\delta\times n$, and in that case it equals one (this is a straightforward consequence
of the Littlewood-Richardson rule, and a version of Poincar\'e duality for Grassmannians).

The same argument yields a formula for the length three case as follows. Let $\pi=(a,b,c)$ with $a+b+c=\delta n$.
Denote by $ST(a,b;\delta\times n)$ the number of semistandard lattice permutation skew-tableaux whose shape is
of the form $\beta/\alpha$, for $\beta$ a partition of size $\delta n-b$ in the rectangle
$\delta\times n$, and $\alpha$ a partition of size $a$ (see \cite{macdonald} for the terminology). Then
\begin{eqnarray*}
k_{\pi, \delta^n, \delta^n} &= &ST(a,b;\delta\times n)-ST(a,b+1;\delta\times n)+ST(a+1,b+1;\delta\times n)\\
     & &-ST(a+1,b-1;\delta\times n)+ST(a+2,b-1;\delta\times n)-ST(a+2,b;\delta\times n).
\end{eqnarray*}

\medskip
For $n=2$, and $\dim U=4$, the algebra of highest weight vectors in $A$  turns out to be polynomial,
with generators of weight $(2),(22),(222)$ and $(1111)$ \cite{arXiv:0809.3710}. Call a partition even
(respectively odd) if all its parts are even (respectively odd). We deduce:

\begin{prop}
A Kronecker coefficient $k_{\pi,(\delta\delta),(\delta\delta)}$ is non-zero if and only if:
\begin{itemize}
\item either $\pi$ is an even partition of $2\delta$, of length at most four,
\item or $\pi$ is an odd partition of $2\delta$, of length exactly four.
\end{itemize}
In both cases $k_{\pi,(\delta\delta),(\delta\delta)}=1$.
\end{prop}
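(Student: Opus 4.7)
The plan is to exploit the polynomiality statement quoted from \cite{arXiv:0809.3710}: for $\dim U=4$ and $n=2$, the algebra $A^N$ of highest-weight vectors in $A=\mathrm{Sym}(U\ot V\ot W)^{SL(V)\ts SL(W)}$ (with respect to a maximal unipotent $N\subset GL(U)$) is polynomial on four generators of respective $GL(U)$-weights $(2)$, $(2,2)$, $(2,2,2)$ and $(1,1,1,1)$. Combined with the decomposition
$$A=\bigoplus_{\delta,\pi}(S_\pi U)^{\op k_{\pi,(\delta\delta),(\delta\delta)}},$$
this identifies $k_{\pi,(\delta\delta),(\delta\delta)}$ with the dimension of the weight-$\pi$ piece of a polynomial algebra, hence with the number of expressions
$$\pi=a(2)+b(2,2)+c(2,2,2)+d(1,1,1,1),\qquad a,b,c,d\in\BN,$$
it being automatic that if such an expression exists then $|\pi|=2a+4b+6c+4d$ is even, so $|\pi|=2\delta$ for some $\delta$, as required.

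The next step is to solve this diophantine system componentwise. Writing out the four coordinates gives
$$\pi_1=2a+2b+2c+d,\quad \pi_2=2b+2c+d,\quad \pi_3=2c+d,\quad \pi_4=d,$$
so the quadruple $(a,b,c,d)$ is forced to be
$$d=\pi_4,\quad 2c=\pi_3-\pi_4,\quad 2b=\pi_2-\pi_3,\quad 2a=\pi_1-\pi_2.$$
Non-negativity is automatic because $\pi$ is a partition; the only real constraint is that each difference $\pi_i-\pi_{i+1}$ (for $i=1,2,3$) be even, i.e., that all four entries $\pi_1,\pi_2,\pi_3,\pi_4$ (with the convention that missing parts equal $0$) share the same parity.

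Finally I would split on cases. If $\ell(\pi)\le 3$, then $\pi_4=0$ is even, forcing every part of $\pi$ to be even: this gives the first alternative ($\pi$ even of length $\le 4$, with the length-$4$ even case also included). If $\ell(\pi)=4$, all four parts are positive and of common parity: either all even (already covered) or all odd (the second alternative). In either situation the representation as an $\BN$-combination of the four generator weights is \emph{unique}, so the multiplicity is exactly $1$; conversely, if the parities do not match, no such representation exists and the coefficient is $0$. A small sanity check that $|\pi|=2\delta$ is automatic in both alternatives (four odd parts sum to an even number) completes the argument.

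The genuine content here is the polynomiality of $A^N$ with the stated generators, which is the result imported from \cite{arXiv:0809.3710}; the step that could be the mildest obstacle in a self-contained exposition is verifying that the four listed highest-weight vectors really do generate $A^N$ freely (as opposed to merely generating it with some syzygies), but once that input is accepted the remainder is an elementary parity bookkeeping as above.
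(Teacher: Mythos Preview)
Your proposal is correct and is exactly the deduction the paper has in mind: the paper states the polynomiality result from \cite{arXiv:0809.3710} with the four generator weights $(2),(2,2),(2,2,2),(1,1,1,1)$ and simply writes ``We deduce'' before the proposition, leaving the elementary diophantine/parity bookkeeping you carried out to the reader. The only point you might make more explicit is that partitions with $\ell(\pi)>4$ are automatically excluded (either because $S_\pi U=0$ for $\dim U=4$, or by the general bound $\ell(\pi)\le\ell(\mu)\ell(\nu)=4$).
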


\subsubsection*{Constraints}
Let $[\pi]$ be a component of $[(\delta^n)]\otimes [(\delta^n)]$.
The entropy relations \eqref{star12} yield
$$H(\tilde\pi)\le 2\log(n).$$
Denote $|\pi|_{\le a}=\pi_1+\cdots+\pi_a$ (and similarly
$|\pi|_{\ge a}$, etc...). Then \cite[Th\'eor\`eme 3.2]{MR1465785} gives 
$$|\pi|_{> ab}\le \delta (n-a)^+ +\delta (n-b)^+$$
where $x^+=x$ if $x$ is positive and zero otherwise.
For example $|\pi|_{\le n}\ge \delta$.

\subsection{A variant of Schur-Weyl duality}

By Schur-Weyl duality, the decomposition of the Schur powers
$S_\pi(V_1\otimes\cdots\otimes V_m)$ into irreducible components, for $|\pi|=\ell$,
is equivalent to the decomposition of tensor products of $m$ irreducible representations
of $\FS_\ell$. What happens if we let $V_1=\cdots =V_m=V$ and replace the tensor product
$V_1\otimes\cdots\otimes V_m$ by the $m$-th symmetric power of $V$?

The following remarkable
theorem is proved in \cite{MR0414794}. Suppose $V$ has dimension $n$, and fix
a basis of $V$. This defines an action of $\FS_n$ on $V$, and on any Schur power $S_\mu V$.
In particular the zero-weight space $(S_\mu V)_0$ is an $\FS_n$-module, non-trivial if and only if
$\mu$ is of size $n\delta$ for some $\delta$.
Here zero-weight must be understood with respect to a maximal torus in~$SL(V)$.

\begin{theorem}\cite{MR0414794}\label{wtzerothm} Let $\tdim V=n$ and let  $\mu$ be a partition of $n\delta$ (so that
$(S_{\mu}V)_0\neq 0$).
Suppose that the decomposition of $(S_\mu V)_0$ into irreducible $\FS_n$-modules is
$$(S_\mu V)_0=\bigoplus_{\pi}[\pi]^{\oplus s_{\mu,\pi}} .$$
Then one has the decomposition of $GL(V)$-modules
$$S_{\pi}(S^\delta V)=\bigoplus_{\mu}(S_\mu V)^{\oplus s_{\mu,\pi}}.$$
\end{theorem}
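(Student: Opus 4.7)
The plan is to apply Schur-Weyl duality twice to two different descriptions of the $GL(V)\times\FS_n$-module $(S^\d V)^{\ot n}$, using the identification of the zero weight space $(S_\m V)_0$ with a space of Young-subgroup invariants as the bridge between them.

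First I would apply classical Schur-Weyl duality to the $GL(V)\times \FS_{n\d}$-module $V^{\ot n\d}$,
$$V^{\ot n\d} \;=\; \bigoplus_{\abs{\m}=n\d} S_\m V\ot [\m],$$
and observe that the subspace $(S^\d V)^{\ot n}\subset V^{\ot n\d}$ is precisely the space of invariants under the Young subgroup $\FS_\d^n:=\FS_\d\times\cdots\times\FS_\d\subset\FS_{n\d}$ acting on the $n$ blocks of size $\d$. Taking invariants therefore yields
$$(S^\d V)^{\ot n} \;=\; \bigoplus_\m S_\m V\ot [\m]^{\FS_\d^n}.$$
The residual $\FS_n$-action on the right comes from the quotient of the normalizer $\FS_\d\wr\FS_n$ of $\FS_\d^n$ in $\FS_{n\d}$, and it coincides with the permutation action of $\FS_n$ on the $n$ tensor factors of $(S^\d V)^{\ot n}$.

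Next I would identify $[\m]^{\FS_\d^n}$ with the zero weight space $(S_\m V)_0$ as an $\FS_n$-module, where on the left $\FS_n=N_{\FS_{n\d}}(\FS_\d^n)/\FS_\d^n$ and on the right $\FS_n$ is the Weyl group of $GL(V)$. Setting $\a=(\d,\ldots,\d)$, the weight-$\a$ subspace of $V^{\ot n\d}$ is the permutation $\FS_{n\d}$-module $\mathrm{Ind}_{\FS_\a}^{\FS_{n\d}}\BC$, and Frobenius reciprocity gives
$$(S_\m V)_\a \;=\; \mathrm{Hom}_{\FS_{n\d}}\bigl([\m],\mathrm{Ind}_{\FS_\a}^{\FS_{n\d}}\BC\bigr) \;=\; [\m]^{\FS_\a} \;=\; [\m]^{\FS_\d^n}.$$
Since $\abs{\m}=n\d$, the $GL(V)$-weight $\a$ restricts to zero on $SL(V)$, so $(S_\m V)_\a=(S_\m V)_0$. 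The heart of the argument is checking that the two $\FS_n$-actions agree: translating a weight vector by a permutation matrix must correspond under the Frobenius reciprocity identification to the action of the class of that matrix in $N_{\FS_{n\d}}(\FS_\a)/\FS_\a\simeq\FS_n$. I expect this verification---which is not deep but requires a careful unwinding of natural isomorphisms---to be the main technical obstacle.

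Finally, applying Schur-Weyl duality a second time, now to $(S^\d V)^{\ot n}$ with the commuting actions of $GL(S^\d V)$ and $\FS_n$, gives
$$(S^\d V)^{\ot n} \;=\; \bigoplus_{\abs{\pi}=n} S_\pi(S^\d V)\ot [\pi].$$
Combining this with the decomposition obtained above and substituting $[\m]^{\FS_\d^n}\simeq(S_\m V)_0=\bigoplus_\pi[\pi]^{\op s_{\m,\pi}}$ produces
$$\bigoplus_\pi S_\pi(S^\d V)\ot [\pi] \;=\; \bigoplus_\m S_\m V\ot \bigoplus_\pi[\pi]^{\op s_{\m,\pi}}.$$
Extracting the $[\pi]$-isotypic component with respect to the $\FS_n$-action on both sides delivers the desired identity $S_\pi(S^\d V)=\bigoplus_\m(S_\m V)^{\op s_{\m,\pi}}$.
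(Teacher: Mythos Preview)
Your argument is correct and is essentially the standard proof of Gay's theorem: decompose $V^{\ot n\d}$ via Schur--Weyl, take $\FS_\d^n$-invariants to get $(S^\d V)^{\ot n}$, identify $[\mu]^{\FS_\d^n}$ with the zero weight space $(S_\mu V)_0$ equivariantly for the residual $\FS_n$-action, and then compare with the Schur--Weyl decomposition of $(S^\d V)^{\ot n}$. The one step you flag as the main technical point---that the Weyl-group action on $(S_\mu V)_0$ matches the $N_{\FS_{n\d}}(\FS_\d^n)/\FS_\d^n$-action on $[\mu]^{\FS_\d^n}$---is indeed the only place requiring care, and it follows by tracing both actions back to the natural $\FS_n$-action permuting the $n$ blocks in $V^{\ot n\d}$.

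Note, however, that the paper does not give its own proof of this theorem: it is simply quoted from \cite{MR0414794}. So there is no paper proof to compare yours against; your proposal stands on its own as a valid proof of the cited result.
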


In particular, for $\delta=1$, i.e., $|\mu|=n$,     $(S_\mu V)_0=[\mu]$.

\begin{corollary}\label{cor93}
Let $\mu$ be a partition of size $n\delta$.
The dimension of the space of $\FS_n$-invariants in the zero weight space $(S_\mu \BC^n)_0$
equals the multiplicity of  $S_\mu \BC^n$   in the plethysm $S^n(S^\delta\BC^n)$.
%Let $\mu$ be a partition of size $n\delta$, and $V$ a complex vector space of dimension $n$.
%The zero weight space $(S_\mu V)_0$ contains non-trivial $\FS_n$-invariants
%if and only if $S_\mu V$ is contained in the plethysm $S^n(S^\deltaV)$.
\end{corollary}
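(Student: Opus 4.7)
The plan is to obtain the corollary as a direct specialization of Theorem~\ref{wtzerothm}. That theorem asserts an equality of two families of multiplicities indexed by pairs $(\mu,\pi)$: on one side, the multiplicity $s_{\mu,\pi}$ of the irreducible $\FS_n$-module $[\pi]$ inside the zero weight space $(S_\mu V)_0$; on the other side, the multiplicity of $S_\mu V$ inside the $GL(V)$-module $S_\pi(S^\delta V)$. The corollary is exactly the case where $\pi$ indexes the trivial representation of $\FS_n$.

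First I would specialize to $\pi=(n)$. The associated irreducible $\FS_n$-module $[(n)]$ is the trivial representation, so
\[
s_{\mu,(n)} = \tmult([(n)],(S_\mu\BC^n)_0) = \tdim \bigl((S_\mu\BC^n)_0\bigr)^{\FS_n}.
\]
This identifies the left-hand side of the corollary with the coefficient $s_{\mu,(n)}$ appearing in Theorem~\ref{wtzerothm}.

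Next I would identify the right-hand side. The Schur functor for a one-row partition is just the symmetric power, so $S_{(n)}(S^\delta\BC^n)=S^n(S^\delta\BC^n)$. Applying the conclusion of Theorem~\ref{wtzerothm} with this choice of $\pi$ gives the $GL(\BC^n)$-module decomposition
\[
S^n(S^\delta\BC^n) \;=\; \bigoplus_{\mu}\, (S_\mu\BC^n)^{\oplus s_{\mu,(n)}},
\]
from which we read off $\tmult(S_\mu\BC^n,S^n(S^\delta\BC^n)) = s_{\mu,(n)}$. Combining the two identifications of $s_{\mu,(n)}$ yields the claimed equality.

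Since Theorem~\ref{wtzerothm} is invoked as a black box, there is essentially nothing to grind: the only substantive observation is the translation of \emph{trivial $\FS_n$-representation} on the symmetric-group side into \emph{symmetric power} on the $GL$-side, which matches the degree constraint $|\mu|=n\delta$ forced by the existence of a nonzero zero-weight space in $S_\mu\BC^n$. Accordingly, no real obstacle is expected; the only thing one must be careful about is to use the convention for $\pi$ and $\mu$ consistent with the statement of Theorem~\ref{wtzerothm}, namely that $\pi$ indexes representations of $\FS_n$ and $\mu$ indexes $GL(\BC^n)$-representations whose highest weight has total size $n\delta$.
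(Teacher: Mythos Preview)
Your argument is correct and is exactly the intended one: the paper states the corollary immediately after Theorem~\ref{wtzerothm} without further proof, since it is precisely the specialization $\pi=(n)$ you describe, using that $[(n)]$ is the trivial $\FS_n$-module and $S_{(n)}=S^n$.
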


For $\delta=2$, because of the formula \cite[Ex. 6(a), p. 138]{macdonald},
this implies that $(S_\mu V)_0$ contains non-trivial $\FS_n$-invariants
if and only if $\mu$ is even.
%Unfortunately the decomposition of $S^n(S^\delta V)$ is far from being understood for general $\delta$.
For general $\delta$,
conditions for multiplicities  not to vanish have been obtained in \cite{MR1243152} and \cite{MR1651092}.
Recently,
in response to   our paper, \marginpar{$\leftarrow$}
it was shown in~\cite{buci:10} that whenever $\delta$ is even and
all the parts~$\mu_i$ are of even size, then
$S_\mu \BC^n$ occurs $S^n(S^\delta\BC^n)$.
Hence $(S_\mu \BC^n)_0$ contains $\FS_n$-invariants in this case.

%Nevertheless,
Observe that for $n=\dim V=2$, these multiplicities are given by Sylvester's formula \eqref{sylformula}.
This can be generalized as follows. Consider a finite dimensional $GL(V)$-module $M$,
and let $m_\mu(M)$  denote   the multiplicity of the weight $\mu$ in $M$.  Let $N_\pi(M)$
denote
the multiplicity of $S_\pi V$ in the decomposition of $M$ into irreducible components.
Then
\begin{equation}\label{npi}
N_{\pi}(M)=\sum_{w\in\FS_n}sgn(w)m_{w(\pi+\rho)-\rho}(M),
\end{equation}
where $\rho=(n,\ldots ,2,1)$.
Indeed, the Weyl  character formula is equivalent to \eqref{npi} when $M$ is irreducible.
By linearity, it must hold for any $M$. In particular, let $M=S^n(S^\delta V)$. The multiplicity
$m_\mu(M)$ is then equal to the number $p(\mu;n,\delta)$
of ways of writing the monomial $x^\mu$ as a product of $n$ monomials of degree $\delta$.
The multiplicity of $S_{\pi}V$ inside $S^n(S^\delta V)$ is thus
$$N(\pi;n,\delta)=\sum_{w\in\FS_n}sgn(w) p(w(\pi+\rho)-\rho;n,\delta),$$
which generalizes Sylvester's  formula.

%%%%%%%%%%%%%%%%%%%%%%%%%%%%%%%%%%%%%%%%%%%%%%%%%%%%%%%%%%%%%%%%%%%%%%%%%%%%%%
\section{Complexity classes}\label{complexityclasssect}

In this section we explain the precise complexity problem studied by the GCT program,
namely $\ol{\VPws}\neq \vnp$,
and place it in the context of Valiant's
algebraic model of NP-comple\-te\-ness \cite{vali:79-3,vali:82}.
In particular, we compare this to the conjecture $\vp\neq \vnp$,
and that the permanent is not a p-projection of the determinant,
the latter being equivalent to the conjecture $\VPws\neq \vnp$.
The conjecture $\vp\neq \vnp$ is an arithmetic analog
of the conjecture $\bold P\neq \bold{NC}$.
%We emphasize  newer developments~\cite{toda:92,mapo:04}
%not yet captured in the monograph~\cite{buer:00-3}.

All polynomials considered are over $\BC$.
A general reference for this section is \cite{buer:00-3}.

\subsection{Models of arithmetic circuits and complexity}
An {\em arithmetic circuit} is a finite acyclic directed graph with vertices of in-degree~$0$
or $2$ and exactly one vertex of out-degree~$0$. Vertices of in-degree~$0$ are called
{\em inputs} and labeled by a constant in $\BC$ or a variable. The other vertices, of
in-degree~$2$, are labeled by $\times$ or $+$ and called {\it computation gates}.
We define the {\em size} of a circuit as the number of its vertices.
The {\em depth} of the circuit is defined as the maximum length of a directed path
in the underlying graph.
The polynomial computed by a circuit is easily defined by induction.

If the graph underlying the circuit is a directed tree, i.e., all
vertices have out-degree at most~$1$, then we call the circuit an
{\em expression} or {\em formula}.
The notion of {\em weakly-skew} circuits is less restrictive: we require that
for each multiplication gate~$\alpha$, at least one of the
two vertices pointing to $\alpha$ is computed by a separate
subcircuit~$C_\alpha$. Separate means that the edge connecting
$C_\alpha$ to $\alpha$ is the only edge between a vertex of
$C_\alpha$ and the remainder to the circuit.
In short, formulas are circuits where previously computed values cannot be reused,
while in weakly-skew circuits we require that at least one of the two operands
of a multiplication gate is computed just for that gate.
We note that the degree of the polynomial computed by a weakly-skew circuit
is bounded by its size. The motivation for weakly skew-circuits is that they
exactly characterize the determinant, as we explain below.

We define the {\em complexity} $L(f)$ of a polynomial~$f$ over $\BC$ as the minimum
size of an arithmetic circuit computing~$f$. Restricting to weakly-skew circuits
and formulas, respectively, one defines the corresponding complexity notions
 $\wsL(f)$ and $L_e(f)$. Clearly, $L_e(f)\geq\wsL(f)\geq L(f)$.
The quantity $L_e(f)$ is called the {\em formula size} of $f$.
It is an important fact~\cite{bren:74} that $\log L_e(f)$ equals, up to a constant factor,
the minimum depth of an arithmetic circuit computing~$f$.

An algorithm due to Berkowitz~\cite{berk:84}
for computing the determinant
implies $\wsL(\tdet_n) = \Oh(n^5)$.
This algorithm also shows the well-known fact that
$\log (L_e(\det_n)) =\Oh(\log^2 n)$.
The best known upper bound
$L(\per_m) = \cO(m 2^m)$
on the complexity of the permanent
is exponential~\cite{MR0150048}.

The complexity class $\VPe$ is defined as the set of sequences $(f_n)$ of
multivariate polynomials over $\BC$ such that $L_e(f_n)$ is polynomially bounded in $n$.
The set of sequences $(f_n)$ such that $\wsL(f_n)$ is polynomially bounded in $n$
comprises the complexity class $\VPws$.
The class $\VP$ is defined as the the set of sequences $(f_n)$ such that
$L(f_n)$ and $\deg f_n$ are polynomially bounded in~$n$
(it is possible to give a syntactic characterization of $\VP$ in terms
of multiplicatively disjoint circuits~\cite{mapo:04}).
Note that $\VPe\subseteq\VPws\subseteq\VP$.
Since $\wsL(\det_n) = \Oh(n^5)$,
we have $(\det_n)\in\VPws$.
It is a major open question whether $(\det_n)$ is contained in $\VPe$.
This is equivalent to the question whether $\det_n$ can be computed by arithmetic circuits
of depth $\Oh(\log n)$. The best known upper bound is $\Oh(\log^2 n)$,  see  \cite{berk:84}.

\subsection{Completeness}\label{completesect}

A polynomial $f$ is called a {\em projection} of a polynomial $g$ if $f$ can be
obtained from $g$ by substitution of the variables by variables or constants.
%If we add an additional variable to the spaces $f,g$ live in and homogenize
%the polynomials (possibly raising the degree of $f$) to polynomials $\tilde f\in S^d\BC^v$, $\tilde g\in S^d\BC^m$,
%then $f$ is a projection of $g$ iff  there is a linear projection $\BC^m\ra \BC^v$ such that
%the image of $\tilde g$ under the induced map $S^d\BC^m\ra S^d\BC^v$ is $\tilde f$.
A sequence $(f_n)$ is called a {\em p-projection} of a sequence $(g_n)$ if there exists a polynomially
bounded function $t\colon\N\rightarrow\N$ such that $f_n$ is a projection of $g_{t(n)}$ for all~$n$.
We note that each of the previously introduced complexity classes $\mathcal{C}$ is
closed under p-projection, i.e., if $(f_n)$ is p-projection of $(g_n)$ and $(g_n)\in\mathcal{C}$,
then $(f_n)\in\mathcal{C}$.
A sequence $(g_n)$ is called $\mathcal{C}$-{\em complete} iff
$(g_n)\in\mathcal{C}$ and any $(f_n)\in\mathcal{C}$ is a p-projection of~$(g_n)$.

The determinant has the following important
universality property~\cite{vali:79-3,toda:92,mapo:04}:
if $\wsL(f)\le m$ then $f$ is a projection of $\det_{m+1}$.
This implies that the sequence $(\det_n)$ of determinants is $\VPws$-complete~\cite{toda:92}.
%In fact, any polynomial
%having a formula of size $\ell$ can be expressed as the determinant of a matrix of size $2\ell$,
%whose entries are either constants or variables \cite{MR564634}.    In particular, such a polynomial
%is in $\ol{GL(W)\cdot  \tdet_n}$ for $n\ge 2\ell$.
%***question: or should it say
% In particular, such a polynomial
%is in $\fgl(W).  \tdet_n $ for $n\ge 2\ell$.
%****?
Therefore, $\VPe = \VPws$ is equivalent to $(\det_n)\in\VPe$, the major open question mentioned before.
It is not known whether $\VPws$ is different from $\VP$.

We remark that when replacing polynomial upper bounds by quasipolynomial upper bounds
$2^{\log^c n}$ in the definitions of the above three complexity classes,
then all these classes coincide.
%, which somewhat simplifies the picture.

We assign now to any of the above complexity classes $\vp_?$ a corresponding
``nondeterministic'' complexity class $\vnp_?$ as follows.
A sequence $(f_n)$ of polynomials belongs to $\vnp_?$ if there
exists a polynomial~$p$ and a sequence $(g_n)\in\vp_?$ such that
$f_n(x)=\sum_{e}g_n(x,e)$ for all~$n$, where the sum is over all $e\in\{0,1\}^{p(n)}$.
It is a nontrivial fact that the resulting classes are the same:
$\vnp_e=\vnp_{\mathrm{ws}}=\vnp$,
for an intuitive proof see~\cite{mapo:04}. Clearly $\VP\subseteq\VNP$.

Valiant~\cite{vali:79-3} proved the major result that $(\per_n)$ is $\VNP$-complete.
Thus $(\per_n)\not\in\VP$ is equivalent to $\VP\ne\VNP$, which is sometimes called
{\em Valiant's hypothesis}.
This can be seen as an algebraic version of Cook's famous
$\mathrm{P}\ne\mathrm{NP}$ hypothesis.
There is great empirical evidence that Valiant's hypothesis is true:
if it were false, then
% $\mathrm{NC}=\mathrm{PP}$ nonuniformly, hence
%if you want above line in, then need to define terms
most of the complexity classes considered by researchers today would collapse~\cite{buer:97-1}.
Proving this implication relies on the generalized Riemann hypothesis,
but we note that the latter can be omitted when dealing with the constant-free versions of
the complexity classes (where only $0,1$ are allowed as constants instead of any
complex numbers).

It is natural to weaken Valiant's hypothesis to $\VPws\ne\VNP$.
In view of the completeness of the sequences of determinants and permanents in $\VPws$ and $\VNP$,
respectively, $\VPws\ne\VNP$ is logically equivalent to the claim that
$(\per_n)$ is not a p-projection of $(\det_n)$.
The latter is a purely mathematical statement, not involving any notions of computation.
This is why some people (including ourselves) believe that this offers one of the most
promising possibilities to attack the P v.s.\  NP problem.

\subsection{Approximate complexity classes}

In \cite{buer:03a} it was proposed to study the notion of approximate complexity in
Valiant's framework. There is a natural way to put a topology on the polynomial ring
$A:=\BC[X_1,X_2,\ldots]$ as a limit of the Euclidean topologies on the finite dimensional subspaces
$\{f\in \BC[X_1,\ldots,X_n]\mid \deg f\le d\}$ whose union over $n,d$ is $A$.

\begin{definition}
The {\em approximate complexity} $\bL(f)$ of~$f\in A$ is defined as the minimum $r\in\N$
such that $f$ is in the closure of $\{g\in A \mid L(g)\le r \}$.
Replacing here $L(g)$ by $\wsL(g)$ we obtain the {\em approximate complexity} $\bwsL(f)$.
\end{definition}

We remark that the same complexity notions are obtained when using the Zariski topology,
since constructible sets have the same closure with respect to Euclidean and Zariski topology.
For more information on approximate complexity we refer to \cite{buer:03a}.

We define the complexity class $\bVPws$ as the set of sequences $(f_n)$ of complex polynomials
such that $\bwsL(f_n)$ is polynomially bounded in~$n$.
Similarly, one defines the classes $\bVP$.
Clearly, $\bVPws\subseteq\bVP$ and both classes are closed under p-projections.
It is not known whether or not $\bVPws$ is contained in $\VNP$.

We go now back to the GCT approach of \cite{MS1}, which attempts to show
Conjecture \ref{msmainconj}.

\begin{proposition}\label{pro:equiv}
Conjecture \ref{msmainconj} is equivalent to $(\per_m) \not\in\bVPws$ and equivalent to $\VNP\not\subseteq \bVPws$.
\end{proposition}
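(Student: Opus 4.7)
The proposition comprises two equivalences, which I treat separately.

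\emph{Equivalence I: Conjecture \ref{msmainconj} $\iff (\per_m)\notin\bVPws$.} The plan is to bridge the two sides via the approximate determinantal complexity $\overline{\mathrm{dc}}(f)$, defined as the smallest $n$ such that $f$ lies in the closure of $\{\det_n(A(y)) \mid A \text{ an } n\times n \text{ matrix of affine linear forms in the variables of } f\}$. First comes the key dictionary
\[\overline{\mathrm{dc}}(\per_m)\leq n \iff \ell^{n-m}\per_m \in \overline{GL_{n^2}\cdot\det_n}\subseteq S^n\BC^{n^2}.\]
For $(\Rightarrow)$, given $\per_m(x)=\lim_\epsilon\det_n(A_\epsilon(x))$, replace each constant entry $a$ of $A_\epsilon$ by $a\ell$ to obtain matrices $\tilde A_\epsilon(\ell,x)$ of linear forms; using that $\det_n\circ\tilde A_\epsilon$ is homogeneous of degree $n$ in $(\ell,x)$ and that its specialization at $\ell=1$ converges to $\per_m$, one verifies by matching degrees that $\det_n\circ\tilde A_\epsilon\to\ell^{n-m}\per_m$; extending $\tilde A_\epsilon$ by zero on the complement of $\BC^{m^2+1}\hookrightarrow\BC^{n^2}$ places each term in $End(\BC^{n^2})\cdot\det_n\subseteq\overline{GL_{n^2}\cdot\det_n}$, whence the limit lies in the orbit closure. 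For $(\Leftarrow)$, specialize $\ell=1$ and every ambient coordinate outside $\{\ell,x_{ij}\}$ to $0$ in an approximating sequence $\det_n\circ L_\epsilon\to\ell^{n-m}\per_m$ with $L_\epsilon\in GL_{n^2}$; continuity of specialization exhibits $\per_m$ as a limit of $\det_n$ of matrices of affine linear forms in $x$. Second, I would invoke the polynomial equivalence of $\bwsL$ and $\overline{\mathrm{dc}}$: the universality statement ($\wsL(f)\leq s$ implies $f$ is a projection of $\det_{s+1}$) passes to closures to give $\overline{\mathrm{dc}}(f)\leq\bwsL(f)+1$, while $\wsL(\det_n)=\Oh(n^5)$ combined with the polynomial variable count of $\per_m$ yields $\bwsL(\per_m)=\Oh(\overline{\mathrm{dc}}(\per_m)^5)$. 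Putting these together: Conjecture \ref{msmainconj} is exactly the statement that $\overline{\mathrm{dc}}(\per_m)$ is not polynomially bounded in $m$, which in turn is $(\per_m)\notin\bVPws$.

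\emph{Equivalence II: $(\per_m)\notin\bVPws \iff \VNP\not\subseteq\bVPws$.} The forward implication is immediate, since $(\per_m)\in\VNP$ by Valiant. For the converse, suppose toward contradiction that $(\per_m)\in\bVPws$; because $(\per_n)$ is $\VNP$-complete under p-projections (\S\ref{completesect}) and $\bVPws$ is closed under p-projections (stated in the excerpt), every $(f_n)\in\VNP$ would lie in $\bVPws$, contradicting $\VNP\not\subseteq\bVPws$.

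\emph{Main obstacle.} The only nonroutine step is the dictionary in Equivalence I. Two points require care. First, one must check that homogenization/dehomogenization commutes with taking closures; this reduces to continuity of the relevant substitution maps together with the fact that the orbit in question is constructible, so its Euclidean and Zariski closures agree. Second, one must argue that extending the source from $\BC^{m^2+1}$ to $\BC^{n^2}$ is harmless for orbit-closure membership, which follows from $\ell^{n-m}\per_m\in Sub_{m^2+1}(S^n\BC^{n^2})$ and the subspace-variety analysis of \S\ref{wexa} (together with the fact that any two inclusions $\BC^{m^2+1}\hookrightarrow\BC^{n^2}$ differ by an element of $GL_{n^2}$). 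Neither point is deep, but pinning them down precisely is the only substantive content beyond translating between the computer-science and algebraic-geometry formulations.
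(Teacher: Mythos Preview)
Your proposal is correct and uses the same ingredients as the paper's proof: the universality of the determinant ($\wsL(f)\le m\Rightarrow f$ is a projection of $\det_{m+1}$), the bound $\wsL(\det_n)=\Oh(n^5)$, homogenization/dehomogenization via $\ell$, and the density of $GL_{n^2}$ in $End(\BC^{n^2})$ to pass from $End(W)\cdot\det_n$ to $\overline{GL(W)\cdot\det_n}$. The second equivalence is handled identically.

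The only organizational difference is that you factor the first equivalence through the auxiliary quantity $\overline{\mathrm{dc}}$ and prove the two-way dictionary $\overline{\mathrm{dc}}(\per_m)\le n\Leftrightarrow \ell^{n-m}\per_m\in\overline{GL_{n^2}\cdot\det_n}$ together with the polynomial equivalence $\overline{\mathrm{dc}}\sim\bwsL$, whereas the paper argues the two contrapositives directly: from the failure of Conjecture~\ref{msmainconj} it builds weakly-skew circuits for approximants of $\per_m$ by composing $\sigma_k\in GL_{n^2}$ with the Berkowitz circuit for $\det_n$ and then specializing $\ell=1$; from $(\per_m)\in\bVPws$ it takes approximants $f_k$ with $\wsL(f_k)<n$, invokes universality to write $f_k=\det_n(M_k)$, and homogenizes. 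Your packaging is slightly more modular (the dictionary is reusable for other polynomials), while the paper's direct argument avoids introducing $\overline{\mathrm{dc}}$; substantively they coincide.
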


Before giving the proof we note that Conjecture~\ref{msmainconj}
would imply that $\VPws\ne\VNP$ (but not {\it a priori}  $\vp\neq \vnp$).

\begin{proof}
The second equivalence is a consequence of the $\vnp$ completeness of $(\per_m)$.
To show the first equivalence suppose first that Conjecture~\ref{msmainconj}  is false.
Then there exist $c\ge 1$ and $m_0$ such that for all $m\ge m_0$,
$[\ell^{m^c-m}\per_m]$ is contained in the projective orbit closure
$\overline{\GL_{m^{2c}}\cdot [\det_{m^c}]}$ in $\BP(S^{m^c} \BC^{m^{2c}})$.
This implies
$\ell^{m^c-m}\per_m\in \overline{\GL_{m^{2c}}\cdot \det_{m^c}}\subset S^{n^c}\BC^{m^{2c}}$.
Thus for fixed $m\ge m_0$, there exists a sequence $(\sigma_k)$ in $\GL_{m^{2c}}$ such that
$f_k:=\sigma_k\cdot \det_{m^c}$ satisfies
$\lim_{k\to\infty} f_k = \ell^{m^c-m} \per_m$.
There is a  weakly-skew arithmetic circuit for $\det_{m^c}$ of size
polynomial in~$m$.
Composing this circuit with an arithmetic circuit  for
matrix-vector multiplication that
computes the linear transformation~$\sigma_k$
yields a weakly-skew arithmetic circuit for $f_k$ of size at most $m^{c'}$,
where $c'$ denotes a constant (independent of $m,k$).
(In order to preserve the weak-skewness we may need several
copies of the circuit computing the linear transformation $\sigma_k$.)
Let $f'_k$ denote the polynomial obtained from $f_k$
after substituting $\ell$ by $1$ and leaving the variables of $\per_m$ unchanged.
Then $\wsL(f'_k)\le\wsL(f_k)\le m^{c'}$ and $\lim_{k\to\infty} f'_k = \per_m$.
Hence, by definition, we have $\bwsL(\per_m)  \le m^{c'}$ for all $m\ge m_0$,
which implies $(\per_m)\in\bVPws$.

To show the other direction suppose that $(\per_m)\in\bVPws$.
Hence there exists $c\ge 1$ and~$m_0$ such that
$\bwsL(\per_m)< m^c$ for all $m\ge m_0$. Fix $m\ge m_0$ and put $n=m^c$
to ease notation.
By definition, there exists a sequence of forms $f_k$ such that
$\lim_{k\to\infty} f_k = \per_m$ and $\wsL(f_k) < n$ for all~$k$.
The universality of the determinant implies that
$f_k$ is a projection of $\det_{n}$, say
$f_k(x)=\det (M_k)$
where $M_k$ is an $n$ by~$n$ matrix whose entries are affine linear forms
in the variables $x_i$.
We homogenize now with respect to an additional variable~$\ell$:
i.e., we substitute $x_i$ by $x_i/\ell$ and multiply the result by  $\ell^n$.
This implies
$$
 \ell^{n-m} f_k(x) = \ell^{n} f_k (\frac1{\ell}x) = \det (M_k')
$$
with a matrix $M_k'$ whose entries are linear forms in $x_i$ and $\ell$.
Since $\GL_{n^2}$ is dense in $\Mat_{n\times n}$, we conclude that
the form $\ell^{n-m} f_k$ lies in the closure of
$\GL_{n^2}\cdot \det_n$.
As $\lim_{k\to\infty} f_k = \per_m$,
this implies that $\ell^{n-m} \per_m$ lies in the closure of $\det_n$.
This holds for all $m\ge m_0$ with $n=m^c$, so
Conjecture~\ref{msmainconj} would be false.
\end{proof}

\begin{remark}\label{re:ryser}
Using the known fact $L_e(\per_m) = \cO(m^2 2^m)$ from \cite{MR0150048},
the proof of Proposition~\ref{pro:equiv} implies that
$\ol{GL_{n^2} \cdot [\ell^{n-m}\tperm_m]} \subset \ol{GL_{n^2} \cdot [\tdet_n]}$
for $n=\cO(m^2 2^m)$.
\end{remark}

\subsection{Order of approximation}
We   now discuss whether approximation is actually necessary.
Let $R=\BC[[\epsilon]]$ the ring of formal power series in $\epsilon$
and $K$ its quotient field. Substituting $\epsilon$ by $0$
defines the morphism $R\rightarrow\BC, r\mapsto (r)_{\epsilon=0}$
which extends to $S^n R^N\rightarrow S^n \BC^N$.
Note that the group $ \GL_N(K)$
operates on the scalar extension $S^n K^N$ in the natural way.

The following result is due to Hilbert~\cite{hilb:93}.
For a proof we refer to Kraft~\cite[III.2.3, Lemma~1]{MR768181}.
%or \cite[\S20.6]{bucs:96}.

\begin{lemma}
Suppose that $f$ lies in the $GL_N(\BC)$-orbit closure of $g\in S^n\BC^N$.
Then there exists $\sigma\in \GL_N(K)$  such that
$F:=\sigma\cdot g \in S^n R^N$ satisfies $(F)_{\epsilon =0} = f$.
\end{lemma}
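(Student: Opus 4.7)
The plan is to use a curve-selection argument to realise $f$ as the limit of an algebraic one-parameter family of matrices acting on $g$. Set $X := \ol{GL_N(\BC)\cdot g} \subset S^n\BC^N$ and let $\mu \colon GL_N \to X$, $\sigma \mapsto \sigma\cdot g$, denote the orbit map. Since orbits of algebraic group actions are locally closed, $GL_N\cdot g$ is Zariski-open and dense in $X$. If $f$ already lies in $GL_N\cdot g$, take $\sigma\in GL_N(\BC)\subset GL_N(K)$ with $\sigma\cdot g=f$ and we are done; henceforth assume $f\in X\setminus(GL_N\cdot g)$.

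First I would cut down to a curve: intersecting $X$ with $\dim X - 1$ sufficiently generic affine hyperplanes through $f$ produces an irreducible curve $C\subset X$ with $f\in C$ and $C\cap(GL_N\cdot g)$ a non-empty Zariski-open subset of $C$ (non-empty because $GL_N\cdot g$ is dense). Pull $C$ back through the orbit map: $\mu^{-1}(C)$ is a non-empty closed subvariety of $GL_N$, and because the fibres of $\mu$ are cosets of the stabiliser $GL_N(g)$ and hence equidimensional, a standard dimension count produces an irreducible component $D\subset\mu^{-1}(C)$ dominating $C$.

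Then I would normalise and compactify. Let $\tilde D$ be the normalisation of $D$ and let $\bar D$ be a smooth projective completion of $\tilde D$. The morphism $\tilde D\to D\to X \hookrightarrow \bar X$, with $\bar X$ the projective closure of $X$ in some $\BP^M\supset S^n\BC^N$, extends uniquely to a morphism $\bar D\to\bar X$ by the valuative criterion applied to the smooth projective curve $\bar D$. Since $\bar D\to\bar C$ is non-constant it is surjective, so there exists $p\in\bar D$ mapping to $f$. Choose a local uniformiser $\epsilon$ at $p$; the formal neighbourhood of $p$ in $\bar D$ is $\mathrm{Spec}\,R$.

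Finally, the composition $\mathrm{Spec}\,K\hookrightarrow\mathrm{Spec}\,R\to\bar D$ is defined on the generic point, where $\bar D$ lifts to $\tilde D\subset GL_N$, and therefore yields $\sigma\in GL_N(K)$. The induced morphism $\mathrm{Spec}\,R\to\bar X$ sends the closed point to $f$, which is a finite (affine) point of $\bar X$; consequently, after possibly shrinking the formal neighbourhood, all of $\mathrm{Spec}\,R$ maps into the affine chart $S^n\BC^N\subset\bar X$. This says precisely that $F=\sigma\cdot g$ has coefficients in $R$, i.e.\ $F\in S^n R^N$, with $(F)_{\epsilon=0}=f$. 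The main delicate point is guaranteeing that the extended morphism $\bar D\to\bar X$ returns to the affine part at $p$; this is automatic because the value at $p$ is the finite point $f$, so no further work is needed beyond the standard facts about extension of morphisms from smooth curves to proper varieties.
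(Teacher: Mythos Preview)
The paper does not include its own proof of this lemma; it attributes the result to Hilbert and refers the reader to Kraft~[III.2.3, Lemma~1]. Your curve-selection argument is precisely the standard proof one finds there and is correct.

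Two cosmetic remarks. First, the phrase ``after possibly shrinking the formal neighbourhood'' is off: $\mathrm{Spec}\,\BC[[\epsilon]]$ has only two points and no proper open subscheme containing the closed one. The clean justification is that both the generic point (which maps to $\sigma\cdot g\in S^nK^N$) and the closed point (which maps to $f$) land in the affine chart $S^n\BC^N\subset\bar X$, so the preimage of the hyperplane at infinity is a closed subset of $\mathrm{Spec}\,R$ missing both points, hence empty. Second, the Bertini step (cutting $X$ by generic hyperplanes through $f$ to obtain an \emph{irreducible} curve) is delicate when $f$ is a singular point of $X$, which is exactly the situation here since $f$ lies in the boundary. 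The usual fix is either to pass first to a resolution $\tilde X\to X$, choose a smooth preimage of $f$, and cut there, or simply to invoke the standard fact that any point of an irreducible variety lies on an irreducible curve meeting a prescribed dense open subset.
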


Assume we are in the situation of the lemma.
By multiplying with a sufficiently high power of $\epsilon$,
we get  $R$-linear forms $y_1,\ldots,y_N$  such that
\begin{equation}\label{eq:hilbert}
 g(y_1,\ldots,y_N) = \epsilon^q f + \epsilon^{q+1} \tilde{F}
\end{equation}
with some $q\in\N$ and $\tilde{F}\in S^n R^N$.
We then say that $f$ can be approximated with order at most~$q$
along a curve in the orbit of $\det_n$.

\smallskip

\begin{question}\label{OA} Suppose that $f$ lies in orbit closure of $\det_n$ in
$S^n\BC^{n^2}$.
Can the order of approximation of $f$ along a curve in the orbit of $\det_n$
be bounded by a polynomial in~$n$?
\end{question}

In \cite[Thm.~5.7]{buer:03a} an exponential upper bound on the order of approximation
is proven in a more general situation.

We show now that if Question \ref{OA} has an affirmative answer, then approximations can be
eliminated in the context of the GCT-approach.\peterchange

\begin{proposition}\label{pro:OA}
If Question \ref{OA} has an affirmative answer, then $\VPws=\bVPws$.
\end{proposition}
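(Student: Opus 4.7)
The plan is to prove the non-trivial inclusion $\bVPws\subseteq\VPws$ under the assumed polynomial bound on the order of approximation. The guiding idea is to realize each $f_m$ explicitly as the coefficient of $\epsilon^{q_m}$ inside the determinant of a matrix whose entries are polynomials of polynomial size in $\epsilon$ and the original variables, and then to extract this coefficient by Lagrange interpolation, producing an explicit weakly-skew circuit of polynomial size.

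First I would set up the approximation. Starting from $(f_m)\in\bVPws$ with $\bwsL(f_m)<n_m$ for some polynomial $n_m$ in $m$, the argument in the proof of Proposition~\ref{pro:equiv} produces $F_m:=\ell^{n_m-\deg f_m}f_m$ lying in the orbit closure of $\det_{n_m}$ in $S^{n_m}\BC^{n_m^2}$. Applying Hilbert's lemma together with the affirmative answer to Question~\ref{OA}, I obtain $R$-linear forms $y_1,\ldots,y_{n_m^2}$ and an integer $q_m$ bounded by a polynomial in $m$ such that
$$\det_{n_m}(y_1,\ldots,y_{n_m^2})=\epsilon^{q_m}F_m+\epsilon^{q_m+1}\tilde F_m.$$
Truncating each $y_i$ modulo $\epsilon^{q_m+1}$ leaves the coefficient of $\epsilon^{q_m}$ unchanged, so I may assume the $y_i$ are polynomials of $\epsilon$-degree at most $q_m$, linear in the original $x_i$ and $\ell$ with complex coefficients.

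Next I would assemble the circuit. The $n_m\times n_m$ matrix $Y$ built from these $y_i$ has entries that are polynomials in $\epsilon$, the $x_i$ and $\ell$ of polynomial total size. Berkowitz's algorithm yields a weakly-skew circuit computing $\det Y$ of size polynomial in $n_m$ and in the size of the entries, hence polynomial in $m$. Writing $\det Y=\sum_{k=0}^{D}\epsilon^{k}H_{k}(x,\ell)$ with $D\le n_mq_m$ polynomial in $m$, I have $H_{q_m}=F_m$. I then extract this coefficient by Lagrange interpolation: evaluating $\det Y$ at $D+1$ distinct complex values $\epsilon_0,\ldots,\epsilon_D$, there exist constants $c_i\in\BC$ with $F_m=\sum_{i=0}^D c_i\,\det Y(\epsilon_i,x,\ell)$. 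Each specialization yields a weakly-skew circuit of the same polynomial size; multiplication by the scalar $c_i$ preserves weak-skewness, since a constant is a trivially separate subcircuit; summing these contributions and finally substituting $\ell=1$ produces a weakly-skew circuit for $f_m$ of size polynomial in $m$, which shows $(f_m)\in\VPws$.

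The main obstacle is ensuring the entries of the transformation $\sigma\in\GL_{n_m^2}(K)$ can be chosen with polynomially bounded $\epsilon$-degree: this is exactly what the affirmative answer to Question~\ref{OA} buys, since once the order of approximation $q_m$ is under polynomial control, truncating the $R$-linear forms modulo $\epsilon^{q_m+1}$ is harmless for extracting the coefficient of $\epsilon^{q_m}$. A secondary point to verify carefully is that weak-skewness is genuinely preserved by the interpolation step; this rests on the fact that the interpolation scalars $c_i$ constitute trivial subcircuits, and that Berkowitz's circuit remains weakly-skew when its matrix entries are supplied as small subformulas in $\epsilon,x,\ell$. Everything else --- the homogenization, the invocation of Hilbert's lemma, and the final specialization $\ell=1$ --- is standard.
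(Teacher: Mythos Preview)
Your argument is correct and shares the paper's overall architecture: embed (the homogenization of) $f_m$ into $\ol{GL_{n_m^2}\cdot\det_{n_m}}$, invoke Hilbert's lemma together with the assumed polynomial bound on the approximation order~$q_m$, and then extract the $\epsilon^{q_m}$-coefficient by a weakly-skew circuit of polynomial size. The substantive difference is in this last extraction step. The paper does not interpolate; it proves a separate lemma (Lemma~\ref{le:elima}) stating that if $F=\sum_i f_i\epsilon^i$ has a weakly-skew circuit of size~$s$ with constants in $R=\BC[[\epsilon]]$, then $\wsL(f_0,\ldots,f_q)=O(q^2 s)$ over~$\BC$. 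Its proof first replaces the weakly-skew circuit by a \emph{skew} one of at most twice the size (citing a result of Kaltofen--Koiran), and then simulates each gate on the vector of the first $q{+}1$ $\epsilon$-coefficients. Your Lagrange interpolation is a valid and more elementary alternative over~$\BC$: it bypasses the skew-circuit conversion at the cost of $D{+}1\le n_mq_m{+}1$ disjoint copies of the Berkowitz circuit, whereas the paper's lemma gives an $O(q_m^2)$ blow-up and works over any coefficient ring. One small correction: for non-homogeneous $f_m$ the right homogenization is $F_m=\ell^{\,n_m}f_m(x/\ell)$ rather than $\ell^{\,n_m-\deg f_m}f_m$; the paper is itself terse here, asserting directly that $f_m\in\ol{GL_{n^2}\cdot\det_n}$, but your explicit homogenization (as in the proof of Proposition~\ref{pro:equiv}) is the correct way to make this step precise.
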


In the present form, this observation is new,
although the proof is similar to the arguments in \cite{buer:03a}.
We make some preparations for the proof.
A {\em skew} arithmetic circuit is an arithmetic circuit such that
for each multiplication gate $\alpha$ at least one of the two vertices pointing to~$\alpha$
is an input vertex. Hence the multiplication is either by a variable or a constant.
It is clear that skew circuits are weakly-skew.
Astonishingly, skew circuits are no less powerful than weakly-skew circuits.
For each weakly-skew circuit there exists a skew circuit with at most double size
that computes the same polynomial, cf. ~\cite{koka:08}.

Let $R=\BC[[\epsilon]]$ and $F\in R[X_1,\ldots,X_N]$.
We denote by $\wsL(F)$ the smallest size of a weakly-skew arithmetic circuit computing~$F$
from the variables $X_i$ and constants in $R$.
Write $F=\sum_i f_i\epsilon^i$ with $f_i\in \BC[X_1,\ldots,X_N]$.

\begin{lemma}\label{le:elima}
We have
$\wsL(f_0,\ldots,f_{q}) =\Oh(q^2 \wsL(F))$
for any $q\in\N$.
\end{lemma}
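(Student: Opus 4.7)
The plan is to reduce the problem to evaluation-interpolation in the auxiliary parameter $\epsilon$. Let $C$ be a weakly-skew circuit of size $s := \wsL(F)$ that computes $F$ over $R[X_1,\ldots,X_N]$. The first step is to replace every constant $c\in R$ appearing as an input leaf of $C$ by its truncation $\tilde c := c\bmod \epsilon^{q+1}\in \BC[\epsilon]$. Since each such modification perturbs a constant by an element of $\epsilon^{q+1}R$, propagating through the circuit shows that the output changes only by a multiple of $\epsilon^{q+1}$; the modified circuit $C'$ therefore still has size $s$, is still weakly-skew (its underlying graph is unchanged), and now computes the polynomial $F' = \sum_{i=0}^q f_i\epsilon^i\in\BC[\epsilon][X_1,\ldots,X_N]$.

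Next, I would fix $q+1$ distinct scalars $\epsilon_0,\ldots,\epsilon_q\in\BC$ and, for each $j$, substitute $\epsilon=\epsilon_j$ in every constant of $C'$. This substitution is legitimate because the constants are now polynomials in $\BC[\epsilon]$, and it produces a weakly-skew circuit $C_j$ over $\BC$ of size $s$ that computes $F'(X,\epsilon_j)=\sum_{i=0}^q f_i(X)\,\epsilon_j^i\in\BC[X_1,\ldots,X_N]$. Vandermonde inversion then recovers the desired coefficients: the matrix $V=(\epsilon_j^i)_{0\le i,j\le q}$ is invertible, and writing $V^{-1}=(v_{ij})$ one has $f_i = \sum_{j=0}^q v_{ij}\,F'(X,\epsilon_j)$.

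To assemble the final circuit I would concatenate $C_0,\ldots,C_q$ (contributing $(q+1)s$ gates) and, for each $i\in\{0,\ldots,q\}$, append a linear-combination gadget consisting of $q+1$ constant multiplications $v_{ij}\cdot F'(X,\epsilon_j)$ and $q$ additions. Each scalar $v_{ij}$ is realized as a fresh input leaf, which is trivially a separate subcircuit of size $1$, so every new multiplication gate is weakly-skew even though a given $C_j$ may be read by several of them (weak-skewness only requires \emph{one} of the two feeders of a multiplication to sit in a separate subcircuit). The total size is $(q+1)s + O(q^2) = O(qs+q^2)$, which is $\Oh(q^2 s)$ for $s,q\ge 1$, as claimed.

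The only genuinely delicate point is the truncation in the first step: the constants of the original circuit are formal power series in $\epsilon$ and cannot in general be evaluated at arbitrary complex numbers. The observation that $f_0,\ldots,f_q$ depend only on each constant modulo $\epsilon^{q+1}$ is what turns the circuit into one with polynomial constants, after which the Lagrange-interpolation strategy proceeds mechanically. Everything else is routine bookkeeping to verify that weak-skewness survives truncation, evaluation at scalars, and the Vandermonde linear combination.
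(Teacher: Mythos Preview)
Your argument has a genuine gap at the truncation step. After replacing each constant $c\in R$ by its truncation $\tilde c=c\bmod\epsilon^{q+1}$, the circuit $C'$ does \emph{not} compute $\sum_{i=0}^q f_i\epsilon^i$. It computes a polynomial $F''\in\BC[\epsilon][X]$ that agrees with $F$ modulo $\epsilon^{q+1}$, but $F''$ may have $\epsilon$-degree much larger than $q$: every multiplication gate can double the $\epsilon$-degree, so in general $\deg_\epsilon F''$ is only bounded by something like $qs$. Consequently the values $F''(X,\epsilon_j)$ at $q+1$ points do not determine $f_0,\ldots,f_q$ via Vandermonde inversion; the degree-$q$ interpolant is contaminated by the higher-order coefficients of $F''$. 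For a two-gate example, take $q=1$ and $F=c^2$ with $c=(1-\epsilon)^{-1}$, so $f_0=1,\ f_1=2$. Truncation gives $\tilde c=1+\epsilon$ and $F''=(1+\epsilon)^2=1+2\epsilon+\epsilon^2$; evaluating at $\epsilon=0,1$ and interpolating linearly returns $1+3\epsilon$, so your method outputs $f_1=3$.

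One might try to repair this by using $\deg_\epsilon F''+1$ evaluation points, but since that degree can be $\Theta(qs)$ you would need $\Theta(qs)$ copies of the size-$s$ circuit, giving size $\Theta(qs^2)$ rather than $O(q^2 s)$. The paper's proof sidesteps the issue entirely by working modulo $\epsilon^{q+1}$ gate by gate: each intermediate value $g$ is replaced by its coefficient vector $(g_0,\ldots,g_q)$, additions are done componentwise, and multiplications are done by the truncated convolution $h_k=\sum_{i\le k} g_i h'_{k-i}$. To ensure the resulting circuit is still weakly-skew the paper first passes to an equivalent \emph{skew} circuit (via the Kaltofen--Koiran result), so that in every multiplication one factor is an input leaf and the convolution becomes a linear combination with scalar weights. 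This yields $O(q^2)$ new gates per original gate and hence the claimed $O(q^2 s)$ bound, without ever materializing a polynomial of large $\epsilon$-degree.
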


\begin{proof}
Suppose we have a weakly-skew circuit of size~$s$ computing $F$ from the  variables
and constants $c =\sum_i c_i\epsilon^i\in R$.
By the previous comment we can assume without loss of generality that
the circuit is skew.
Let $g\in R[X_1,\ldots,X_N]$ be an intermediate result of the computation
and write $g=\sum_i g_i\epsilon^i$ with $g_i\in\BC[X_1,\ldots,X_N]$.
The idea is to construct an arithmetic circuit that instead of $g$
computes the coefficients $g_0,\ldots,g_q$ up to degree~$q$
from the variables and the coefficients $c_0,\ldots,c_q$
of the constants~$c$.
This is achieved by replacing each addition of the original circuit
by $q+1$ additions of the corresponding coefficients.
Each multiplication $f=g\cdot h$ of the original circuit is replaced by
$\Oh(q^2)$ arithmetic operations following
$f_k=\sum_{i=0}^kg_i\cdot h_{k-i}$.
This results in a circuit of size $\Oh(sq^2)$.
Since the original circuit is assumed to be skew, it is clear that
the new circuit can be realized by a skew circuit as well.
(We note that it is not obvious how to preserve weak-skewness.)
\end{proof}

\begin{proof}(of Proposition~\ref{pro:OA})
Suppose that $(f_m)\in\bVPws$.
Then $\bwsL(f_m) < n$ with $n$ polynomially bounded in $m$.
Hence $f_m$ is in the closure of the set of polynomials $g$ satisfying
$\wsL(g) < n$.
By the universality of the determinant,
those polynomials~$g$ are projections of $\det_n$,
hence contained in $\ol{\GL_{n^2}\cdot \tdet_n}$.
It follows that $f_m\in \ol{\GL_{n^2}\cdot \tdet_n}$.
If Question \ref{OA} has an affirmative answer, then
$f_m$ can be approximated with order at most~$q$
along a curve in the orbit of $\det_n$, where $q$
is polynomially bounded in~$n$ and hence in $m$.
Hence we are in the situation~(\ref{eq:hilbert})
and have
$$
 F:=\det_n(y_1,\ldots,y_{n^2}) = \epsilon^q f_m + \epsilon^{q+1} \tilde{F}
$$
with $R$-linear forms $y_1,\ldots,y_{n^2}$ in the variables $x_{ij}$
and some polynomial $\tilde{F}$ over $R$ in~$x_{ij}$.
From this we conclude $\wsL(F) = m^{\Oh(1)}$.
%$$
%  \wsL(F) \le \wsL(\det_n) + \Oh(n^4) = \Oh(n^5),
%$$
%using that $\wsL(\det_n) = \Oh(n^5)$.
Lemma~\ref{le:elima} tells us that
$\wsL(f_m) =\Oh(q^2 \wsL(F))$.
Since $q$ was assumed to be polynomially bounded in~$m$,
we conclude that $\wsL(f_m)$ is polynomially bounded in~$m$ as well.
This implies $(f_m)\in\VPws$.
\end{proof}

\bibliographystyle{amsplain}
%\nocite{*}
%\bibliography{BLMW,Lmatrix}
\bibliography{Lmatrix}
\end{document}